\newcommand{\ig}[2][scale=1.0]{\includegraphics[#1]{#2}}
\newcommand{\etal}{et al.~}
\newcommand{\etaln}{et al.}
\newcommand{\ie}{i.e.~}
\newcommand{\mc}{\mathcal}
\newcommand{\poly}{\mathrm{poly}}
\newcommand{\tw}{\mathsf{tw}}
\newcommand{\cI}{\mathcal{I}}
\newcommand{\type}[0]{\textup{type}}
\newcommand{\DP}[0]{\textup{DP}}
\DeclareSymbolFont{AMSb}{U}{msb}{m}{n}
\DeclareSymbolFontAlphabet{\mathbb}{AMSb}
\newtheorem{theorem}{Theorem}[section]
\newtheorem{lemma}[theorem]{Lemma}
\newtheorem{definition}[theorem]{Definition}
\newcommand{\knip}[1]{}
\newcommand{\problemSC}{\textsc{Steiner Multicut}}
\newcommand{\problemESC}{\textsc{Edge Steiner Multicut}}
\newcommand{\problemNSC}{\textsc{Node Steiner Multicut}}
\newcommand{\problemMWC}{\textsc{Multiway Cut}}
\newcommand{\problemMC}{\textsc{Multicut}}
\newcommand{\problemMCC}{\textsc{Multicolored Clique}}
\newcommand{\problemEMC}{\textsc{Edge Multicut}}
\newcommand{\problemNMC}{\textsc{Node Multicut}}
\newcommand{\problemRNMC}{\textsc{Restr.\ Node Multicut}}
\newcommand{\problemRNSC}{\textsc{Restr.\ Node Steiner Multicut}}
\newcommand{\problemHS}{\textsc{Hitting Set}}
\newcommand{\problemSetC}{\textsc{Set Cover}}
\newcommand{\problemUSC}{\textsc{Unipedal Steiner Multicut}}
\newcommand{\problemMPSC}{\textsc{Multipedal Steiner Multicut}}
\newcommand{\problemNAE}{\textsc{NAE-Integer-3-SAT}}
\newcommand{\NAE}{\textup{NAE}}
\newcommand{\nameESC}{edge Steiner multicut}
\newcommand{\nameRNSC}{restr.\ node Steiner multicut}
\newcommand{\nameNSC}{node Steiner multicut}
\newcommand{\FPT}{\ensuremath{\mathsf{FPT}}}
\newcommand{\tG}{\tilde{G}}
\newcommand{\tT}{\tilde{T}}
\newcommand{\hG}{\hat{G}}
\newcommand{\hT}{\hat{T}}
\newcommand{\defparproblem}[4]{

\bigskip
\noindent\fbox{
	\begin{minipage}{.96\linewidth}
	\textsc{#1}
	
	\smallskip
	\noindent\begin{tabular}{@{}l@{ }l}
	\emph{Input:} & \begin{minipage}[t]{\linewidth-\widthof{Parameter:\ \ \ \ }} #3\end{minipage}\\[.3pt]
	\emph{Parameter:} & \begin{minipage}[t]{\linewidth-\widthof{Parameter:\ \ \ \ }} #2\end{minipage}\\[.3pt]
	\emph{Question:} & \begin{minipage}[t]{\linewidth-\widthof{Parameter:\ \ \ \ }} #4\end{minipage}
	\end{tabular}
	\end{minipage}
}

\medskip
}
\newcommand{\defproblem}[3]{

\bigskip
\noindent\fbox{
	\begin{minipage}{.96\linewidth}
	\textsc{#1}
	
	\smallskip
	\noindent\begin{tabular}{@{}l@{ }l}
	\emph{Input:} & \begin{minipage}[t]{\linewidth-\widthof{Parameter:\ \ \ \ }} #2\end{minipage}\\
	\emph{Question:} & \begin{minipage}[t]{\linewidth-\widthof{Parameter:\ \ \ \ }} #3\end{minipage}
	\end{tabular}
	\end{minipage}
}

\medskip
}
\begin{document}

\title{Parameterized Complexity Dichotomy for {\sc Steiner Multicut}\thanks{An extended abstract of this paper appeared in Mayr, E.W., Ollinger, N.\ (eds.) \emph{32nd International Symposium on Theoretical Aspects of Computer Science (STACS 2015)}, LIPIcs volume 30, Schloss Dagstuhl, 2015, pp.~157--170.}}
\author{%
Karl Bringmann\thanks{Institute of Theoretical Computer Science, ETH Zurich, Switzerland. \texttt{karlb@inf.ethz.ch}. Supported by the ETH Zurich Postdoctoral Fellowship Program.}
\and
Danny Hermelin\thanks{Ben Gurion University of the Negev, Israel. \texttt{hermelin@bgu.ac.il}. Supported by the People Programme (Marie Curie Actions) of the European Union's Seventh Framework Programme (FP7/2007-2013) under REA grant agreement number 631163.11, and by the Israel Science Foundation (grant no.\ 551145/14).}
\and
Matthias Mnich\thanks{Universit\"at Bonn, Germany. \texttt{mmnich@uni-bonn.de}. Supported by ERC Starting Grant~306465.}
\and
Erik Jan van Leeuwen\thanks{Max-Planck Institut f\"{u}r Informatik, Saarbr\"{u}cken, Germany. \texttt{erikjan@mpi-inf.mpg.de}.}
}

\maketitle

\begin{abstract}
  We consider the \problemSC{} problem, which asks, given an undirected graph $G$, a collection $\mc{T} = \{T_{1},\ldots,T_{t}\}$, $T_{i} \subseteq V(G)$, of terminal sets of size at most $p$, and an integer $k$, whether there is a set $S$ of at most $k$ edges or nodes such that of each set $T_{i}$ at least one pair of terminals is in different connected components of $G \setminus S$.
  This problem generalizes several well-studied graph cut problems, in particular the \problemMC{} problem, which corresponds to the case $p = 2$.
  The \problemMC{} problem was recently shown to be fixed-parameter tractable for the parameter $k$ [Marx and Razgon, Bousquet~\etaln, STOC 2011].
  The question whether this result generalizes to \problemSC{} motivates the present work.

  We answer the question that motivated this work, and in fact provide a dichotomy of the parameterized complexity of \problemSC{} on general graphs.
  That is, for any combination of $k$, $t$, $p$, and the treewidth $\tw(G)$ as constant, parameter, or unbounded, and for all versions of the problem (edge deletion and node deletion with and without deletable terminals), we prove either that the problem is fixed-parameter tractable or that the problem is hard ($\mathsf{W}[1]$-hard or even (para-)$\mathsf{NP}$-complete).
  Among the many results in the paper, we highlight that:
  \begin{itemize}
    \item The edge deletion version of \problemSC{} is fixed-parameter tractable for the parameter $k+t$ on general graphs (but has no polynomial kernel, even on trees).
    We present two independent proofs of fixed-parameter tractability. The first proof uses the randomized contractions technique of Chitnis~et al. The second proof relies on several new structural lemmas, which decompose the Steiner cut into important separators and minimal $s$-$t$ cuts.
    \item In contrast, both node deletion versions of \problemSC{} are $\mathsf{W}[1]$-hard for the parameter $k+t$ on general graphs.
    \item All versions of \problemSC{} are $\mathsf{W}[1]$-hard for the parameter $k$, even when $p=3$ and the graph is a tree plus one node.
    This means that the mentioned results of Marx and Razgon, and Bousquet~\etal do not generalize to even the most basic instances of \problemSC{}.
  \end{itemize}
  Since we allow $k$, $t$, $p$, and $\tw(G)$ to be any constants, our characterization includes a dichotomy for \problemSC{} on trees (for $\tw(G) = 1$) as well as a polynomial time versus $\mathsf{NP}$-hardness dichotomy (by restricting $k,t,p,\tw(G)$ to constant or unbounded).
\end{abstract}

\section{Introduction}
\label{sec:introduction}
Graph cut problems are among the most fundamental problems in algorithmic research.
The classic result in this area is the polynomial-time algorithm for the $s$--$t$ cut problem of Ford and Fulkerson~\cite{FordFulkerson1956} (independently proven by Elias et al.~\cite{EliasEtAl1956} and Dantzig and Fulkerson~\cite{DantzigFulkerson1956}).
This result inspired a research program to discover the computational complexity of this problem and of more general graph cut problems.
One well-studied generalization of the $s$--$t$ cut problem is the \problemMC{} problem, in which we want to disconnect $t$ pairs of nodes instead of just one pair.
In a recent major advance of the research program on graph cut problems, Bousquet~\etal\cite{BousquetEtAl2011} and Marx and Razgon~\cite{MarxRazgon2011} showed that \problemMC{} is fixed-parameter tractable in the size $k$ of the cut only, meaning that it has an algorithm running in time $f(k)\cdot \poly(|V(G)|)$ for some function $f$, resolving a longstanding problem in parameterized complexity (with many papers~\cite{Marx2006,Xiao2010,Guillemot2011,MarxEtAl2013} building up to this result).

In this paper, we continue the research program on generalized graph cut problems, and consider the \problemSC{} problem.
This problem was proposed by Klein~\etal\cite{KleinEtAl1997}, and appears in several versions, depending on whether we want to delete edges or nodes, and whether we are allowed to delete terminal nodes.
Formally, these versions of the \problemSC{} problem are defined as follows:

\defproblem{\{Edge, Node, Restricted Node\} Steiner Multicut}{An undirected graph $G$ with terminal sets $T_1,\hdots,T_t\subseteq V(G)$, and an integer $k\in\mathbb N$.}{Find a set $S$ of $k$ $\{$edges, nodes, non-terminal nodes$\}$ such that for $i=1,\ldots,t$ and at least one pair $u,v\in T_i$ there is no $u-v$ path in $G \setminus S$.}

Observe that \problemMC{} is the special case of \problemSC{} in which each terminal set has size two.
In general, the terminal sets of \problemSC{} can have arbitrary size, and we use $p$ to denote $\max_i |T_i|$.  

The complexity of \problemSC{} has been investigated extensively, but so far only from the perspective of approximability.
This line of work was initiated by Klein~\etal\cite{KleinEtAl1997}, who gave an LP-based $O(\log^3(kp))$-approximation algorithm.
The approximability of several variations of the problem has also been considered~\cite{YuCheriyan1995,GargEtAl1996,AvidorLangberg2007}; in particular, Garg~\etal\cite{GargEtAl1996} give an $O(\log t)$-approximation algorithm for \problemMC{}.
On the hardness side, even \problemMC{} is $\mathsf{APX}$-hard~\cite{DahlhausEtAl1994,CalinescuEtAl2003} and cannot be approximated within any constant factor assuming the Unique Games Conjecture~\cite{ChawlaEtAl2006}.
We also remark that Steiner cuts (the case when $t=1$) are of interest: they are an ingredient in several LP-based approximation algorithms (for example for \textsc{Steiner Forest}~\cite{AgrawalEtAl1995,KonemannEtAl2008}) and there is a connection to the number of edge-disjoint Steiner trees that each connect all terminals~\cite{Lau2007}.
To the best of our knowledge, however, \problemSC{} in its general form has not yet been considered from the perspective of parameterized complexity.

\subsection{Our Contribution}
\label{sec:ourcontributions}%
In this paper, we fully chart the (parameterized) complexity landscape of \problemSC{} according to $k$, $t$, $p$ (defined as above), and the treewidth $\tw(G)$. For all three versions of \problemSC{}, for each possible combination of $k$, $t$, $p$, and $\tw(G)$, where each may be either chosen as a constant, a parameter, or unbounded, we consider the complexity of \problemSC{}. We show a complete dichotomy: either we provide a fixed-parameter algorithm with respect to the chosen parameters, or we prove a $\mathsf{W}[1]$-hardness or (para-)$\mathsf{NP}$-completeness result that rules out a fixed-parameter algorithm (unless many canonical $\mathsf{NP}$-complete problems have subexponential- or polynomial-time algorithms respectively).

The dichotomy is composed of three main results, along with many smaller ones (see Table~\ref{tab:undirected_parameterized_complexity_results}). These three main results are stated in the three theorems below:
  
\begin{table}[t]
  \centering{\footnotesize
  \begin{tabular}{ll|c|c|c}
    \toprule
                           &            & {\sc Edge}                                 & {\sc Node}       & {\sc Restr.\ Node}\\
      constants            & params & {\sc Steiner MC}                     & {\sc Steiner MC}        & {\sc Steiner MC}\\
      
    \midrule
       
      $k$ & --- & poly (Sect.~\ref{sec:easyandknown}) & poly (Sect.~\ref{sec:easyandknown}) & poly (Sect.~\ref{sec:easyandknown})\\
      
      $t \leq 2$ & --- & poly (Sect.~\ref{sec:easyandknown}) & & poly (Sect.~\ref{sec:easyandknown})\\
      
      $t = 3, p = 2$ & --- & $\mathsf{NP}$-h~\cite{DahlhausEtAl1994} &  & $\mathsf{NP}$-h~\cite{DahlhausEtAl1994} \\
      
      --- & $k,t$ & ${}^{\star}\mathsf{FPT}$ (Thm.~\ref{thm:intro:k+t-edge}) & ${}^{\star}\mathsf{W}[1]$-h (Thm.~\ref{thm:intro:k+t}) & ${}^{\star}\mathsf{W}[1]$-h (Thm.~\ref{thm:intro:k+t}) \\
      
      ---                  & $k,p,t$    & & $\mathsf{FPT}$ (Sect.~\ref{sec:easyandknown}) & $\mathsf{FPT}$ (Sect.~\ref{sec:easyandknown}) \\

      $t$ & $k$ &  &  & $\mathsf{FPT}$ (Sect.~\ref{sec:easyandknown})\\
      
      $p = 2$              & $k$        & $\mathsf{FPT}$~\cite{BousquetEtAl2011,MarxRazgon2011} & $\mathsf{FPT}$~\cite{BousquetEtAl2011,MarxRazgon2011} & $\mathsf{FPT}$~\cite{BousquetEtAl2011,MarxRazgon2011}\\
      
      $p = 3, \tw = 2$              & $k$        & ${}^{\star}\mathsf{W}[1]$-h (Thm.~\ref{thm:intro:k}) & ${}^{\star}\mathsf{W}[1]$-h (Thm.~\ref{thm:intro:k}) & ${}^{\star}\mathsf{W}[1]$-h (Thm.~\ref{thm:intro:k})\\
      
       --- & $t,\tw$ & ${}^{\star}\mathsf{FPT}$ (Thm.~\ref{thm:t+tw}) & ${}^{\star}\mathsf{FPT}$ (Thm.~\ref{thm:t+tw}) & ${}^{\star}\mathsf{FPT}$ (Thm.~\ref{thm:t+tw})\\
      
      
      $\tw=1$ & --- &  & ${}^{\star}$poly (Thm.~\ref{thm:node:easy}) & \\
      
      $\tw = 1$ & $k$ & ${}^{\star}\mathsf{W}[2]$-h (Thm.~\ref{thm:edge:hard-k}) & & ${}^{\star}\mathsf{W}[2]$-h (Thm.~\ref{thm:rnode:hard-k}) \\
      
      $\tw=1$ & $k,p$ & ${}^{\star}$$\mathsf{FPT}$ (Thm.~\ref{thm:edge:easy-kp}) & & ${}^{\star}$$\mathsf{FPT}$ (Thm.~\ref{thm:rnode:easy-kp}) \\
      
      $\tw=1,p= 2$ & --- & $\mathsf{NP}$-h~\cite{CalinescuEtAl2003} & & $\mathsf{NP}$-h~\cite{CalinescuEtAl2003} \\
      
      $\tw=2, p=2$ & --- & & $\mathsf{NP}$-h~\cite{CalinescuEtAl2003} &  \\
      
    \bottomrule
  \end{tabular}}
  
  \centering
  \caption{Summary of known and new complexity results for \problemSC{}, where new results are marked with ${}^{\star}$; the other entries are either known or follow easily from known results in the literature. Only maximal $\mathsf{FPT}$ results and minimal $\mathsf{W}[\cdot]$- or $\mathsf{NP}$-hardness results are listed; empty cells are dominated by other results. E.g.\ \problemESC{} with parameter $t$ is hard, since it is already $\mathsf{NP}$-hard for $t=3, p=2$. For \problemNSC{}, one also has to apply the rule that $k < t$ (see Section~\ref{sec:easyandknown}) to generate a full characterization of all cases. Tree diagrams of this table are offered in Appendix~\ref{sec:trees}.}
\label{tab:undirected_parameterized_complexity_results}
\end{table}

\begin{theorem} \label{thm:intro:k+t-edge}%
\problemESC{} is fixed-parameter tractable for the parameter $k+t$.
\end{theorem}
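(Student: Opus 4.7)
The natural attempt---for each $T_i$, guess a separated pair $(u_i,v_i)\in T_i\times T_i$ and invoke the Marx--Razgon algorithm for \problemMC{}---fails, since the number of candidate pairs per terminal set is $\binom{|T_i|}{2}$, which is not bounded in $k+t$ when $p$ is unbounded. The plan is to follow the first proof route announced in the introduction and apply the randomized contractions technique of Chitnis~\etaln

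I would begin with iterative compression on $k$: assume a solution $S'$ of size $k+1$ is given, guess the intersection $A=S\cap S'$ (at most $2^{k+1}$ options), and search for $S\setminus A$ of size $\le k-|A|$ in $G\setminus A$ that differs from $S'\setminus A$ and still splits every $T_i$. After compression, $G\setminus S'$ has at most $k+2$ connected components $C_1,\ldots,C_r$, and any target $S$ only mildly refines this partition. The randomized-contractions step then assigns each $C_j$ an independent random colour from an alphabet of size $O(k)$ drawn from the Chitnis~\etal distribution; with probability $2^{-O(k\log k)}$ the colouring is compatible with the partition of $\{C_1,\ldots,C_r\}$ induced by the unknown components of $G\setminus S$, and, conditioned on this event, contracting each colour class into a supernode yields an auxiliary graph of bounded size $g(k)$ in which $S$ survives as a cut of size $\le k$.

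The crucial gain is that each $T_i$ now collapses to at most $g(k)$ super-terminals---one per colour class it intersects---so the condition ``$T_i$ is split by $S$'' becomes a constraint on these super-terminals alone and is entirely independent of $|T_i|\le p$. Since the auxiliary graph has $g(k)$ vertices, I can enumerate all edge subsets of size $\le k$ in time $g(k)^{O(k)}$ and, for each, verify in polynomial time that every $T_i$ is split; derandomizing the colouring via $(n,O(k))$-universal sets then yields a deterministic $f(k)\cdot\poly(n)$ algorithm. The parameter $t$ enters through a $(k+1)^{O(t)}$ pre-branching step in which, for each $T_i$, one guesses a pair of colour classes witnessing its split.

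The main obstacle is proving the structural lemma that justifies the random colouring: with probability $f(k)^{-1}$, the colouring must be both compatible with the unknown components of $G\setminus S$ and preserve the ``$T_i$ is split'' constraint for every $i$ simultaneously. The second requirement is the new subtlety specific to Steiner multicut---unlike \problemMWC{}, where each terminal has a prescribed side, here the split of $T_i$ may be realised by $S$ in many different ways, and it must be captured without incurring a factor in $|T_i|$; the parameter $t$ is precisely what makes per-terminal-set bookkeeping of a constant-size witness affordable. An alternative deterministic route, matching the second proof announced in the introduction, would instead prove that any optimal $S$ decomposes as $S_1\cup\cdots\cup S_t$ with each $S_i$ built from important separators or minimum $s$--$t$ cuts associated to a carefully chosen pair in $T_i$, and then branch over the resulting $f(k,t)$ choices using the classical $4^{k}$ bound on important separators of size $\le k$.
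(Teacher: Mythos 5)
Your proposal does not track either of the paper's proofs, and its central step has a genuine gap. Consider the iterative-compression setup carefully: after guessing $A=S\cap S'$, the residual solution $S\setminus A$ is disjoint from $S'\setminus A$, and since the edges between components of $G\setminus S'$ are exactly the edges of $S'\setminus A$, the set $S\setminus A$ lies \emph{entirely inside} the components $C_1,\ldots,C_r$. If you now contract each colour class of $\{C_1,\ldots,C_r\}$ to a supernode, every edge internal to every $C_j$ disappears, and with it all of $S\setminus A$. The auxiliary graph thus cannot contain $S$ as a cut, contrary to the claim ``contracting each colour class \ldots yields an auxiliary graph \ldots in which $S$ survives as a cut of size $\le k$.'' A second issue is that the ``partition of $\{C_1,\ldots,C_r\}$ induced by the unknown components of $G\setminus S$'' is not well-defined: since $S\setminus A$ lies inside components, a single $C_j$ can be split by $S$ across several components of $G\setminus S$, so there is no partition of the $C_j$'s to be ``compatible'' with.

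The paper's actual randomized-contractions proof does not use iterative compression at all, and it does not colour a fixed set of at most $k+2$ components. It instead iteratively locates a $2k$-bordered induced subgraph $G'$ that has no $(q,k)$-good edge separation (a large ``highly connected'' piece with a small interface), branches on the equivalence relation the hidden solution induces on the at most $2k$ border vertices and on which terminal sets remain ``active'' in those classes, and uses Lemma~\ref{lem:esc:family} (a derandomized splitter-like construction on \emph{edges}, not components) to identify a set of edges that can be safely contracted; after contraction, a smallest solution of size $\le k$ provably persists, so the graph shrinks and the process repeats until exhaustive enumeration finishes the job. The structural condition ``no small good separation'' is what makes the contraction step safe; your proposal never establishes any analogue of it. Your alternative route via important separators also does not match the paper's second proof: that proof does not guess a pair per $T_i$, but instead picks a hitting set $Y$ of size $\le t$ for the terminal sets, reduces to the case of a small ``pedal'' set, proves (Lemma~\ref{lem:disjoint}) that a $Y$-closest cut is a \emph{disjoint} union of important $x$--$Y$ separators, and handles the remaining part via treewidth reduction (Theorem~\ref{thm:tw:marx}) followed by the bounded-treewidth algorithm of Theorem~\ref{thm:t+tw}. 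To repair your argument you would need, at minimum, a correct preservation lemma for the contraction step; as written the approach would fail on any instance where the optimum uses an edge strictly inside a component of $G\setminus S'$.
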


\begin{theorem} \label{thm:intro:k+t}%
\problemNSC{} and \problemRNSC{} are $\mathsf{W}[1]$-hard for the parameter $k+t$.
\end{theorem}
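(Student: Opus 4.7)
The plan is to give an FPT-reduction from \problemMCC{} (Multicolored Clique) parameterized by the number of color classes $k$, which is known to be $\mathsf{W}[1]$-hard. Given an instance $(H, V_1, \ldots, V_k)$ with color classes of size $n$ each, I will construct an equivalent instance of \problemNSC{} (and \problemRNSC{}) whose cut budget $k'$ is $O(k)$ and whose number $t'$ of Steiner terminal sets is $O(k^2)$, so that $k' + t'$ is bounded by a function of $k$ alone.

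The first ingredient is, for each color class $V_i$, a collection of \emph{selector nodes} $v_{i,1}, \ldots, v_{i,n}$ in bijection with the vertices of $V_i$. These nodes are arranged (for instance, as the internal nodes of a path between two fixed terminals shared with the verification gadgets) so that every feasible solution must delete exactly one selector per color class, and the identity $a_i$ of the deleted selector encodes the chosen clique vertex in $V_i$. Altogether this consumes the entire budget $k' = k$.

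The second ingredient is, for each pair $(i, j)$ of color classes, a \emph{verification gadget} with a Steiner terminal set $T_{ij}$ of size three. Its wiring routes paths among the three terminals of $T_{ij}$ through the selector nodes of $V_i$ and $V_j$, arranged so that the disjunctive obligation of a size-three Steiner set—at least one of the three pairs in $T_{ij}$ must be separated—can be met by the selection cuts alone if and only if the selected vertices $a_i$ and $a_j$ are adjacent in $H$. The ``OR'' semantics of Steiner terminal sets is crucial here to encode adjacency; this flexibility is absent for $p=2$, where \problemMC{} is already FPT in $k$ alone. Correctness then follows in both directions: a multicolored clique $(a_1, \ldots, a_k)$ yields the cut $\{v_{i, a_i}\}_i$ of size $k$, which disconnects every $T_{ij}$ since each $a_i a_j \in E(H)$; conversely, any feasible cut of size $k'$ must, by the design of the selector structure, induce one selection per color class, and the disconnection of every $T_{ij}$ forces those selections to be pairwise adjacent in $H$.

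The main obstacle is the careful design of the verification gadget so that (i) when $a_i a_j \in E(H)$ the selection cuts alone suffice to disconnect $T_{ij}$ without any additional cuts, and (ii) when $a_i a_j \notin E(H)$ no cut within the global budget can disconnect $T_{ij}$, even by re-allocating cuts away from some selection gadget. To handle the \textsc{Restricted Node} variant uniformly, I would keep the selector nodes $v_{i,a}$ strictly non-terminal while using separate, non-selector nodes for the three terminals of each $T_{ij}$, so that the intended cut set never touches a terminal. Since $k' + t' = O(k^2)$ is a function of $k$ alone, the construction is an FPT-reduction, establishing $\mathsf{W}[1]$-hardness of both \problemNSC{} and \problemRNSC{} for the parameter $k + t$.
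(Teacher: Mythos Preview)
Your high-level plan (reduce from \problemMCC{}, force one deletion per color class, then verify adjacency pairwise) is exactly the paper's strategy, but the proposal has a genuine gap at the only hard step: you never actually build the verification gadget. You write that ``the main obstacle is the careful design of the verification gadget'' and then describe only its desired input/output behavior. That is not a proof; it is a specification.

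More concretely, your choice of \emph{size-three} terminal sets $T_{ij}$, one per color pair, looks infeasible for this theorem. With a single fixed triple $T_{ij}=\{x,y,z\}$ and selector paths, whether some pair in $T_{ij}$ gets separated is a simple threshold-type predicate in the cut positions $(a_i,a_j)$; it cannot encode an \emph{arbitrary} adjacency relation $E_{ij}\subseteq [n]\times[n]$ by itself. You seem to be importing the $p=3$ idea from the hardness for parameter $k$ alone (Theorem~\ref{thm:intro:k}), but that reduction uses \emph{unboundedly many} size-three terminal sets (one per NAE-clause), which is exactly what you cannot afford when $t$ is a parameter.

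The paper's construction resolves this by abandoning bounded $p$ altogether. It subdivides every edge of $H$ and, for each color pair $(i,j)$, takes $T_{i,j}=N_{i,j}\cup V(C)$, where $N_{i,j}$ is the set of \emph{all} subdivision nodes for edges in $E_{i,j}$ and $C$ is an auxiliary $2k$-clique. A subdivision node $n_{uv}$ has exactly the two neighbors $u,v$, so deleting the selected $v_i\in V_i$ and $v_j\in V_j$ disconnects $T_{i,j}$ iff the edge $v_iv_j$ exists in $H$. Selection is enforced not by paths but by two terminal sets $T_i=V_i\cup\{c_{2i-1}\}$ and $T_i'=V_i\cup\{c_{2i}\}$ per color class; since $C$ is a clique, separating both forces a deletion inside $V_i=T_i\cap T_i'$. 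The large terminal sets are what make a \emph{single} set per pair suffice. If you want to salvage your outline, you should drop the $p=3$ restriction and supply a concrete verification gadget along these lines; without it, the proposal does not establish the theorem.
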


\begin{theorem} \label{thm:intro:k}
\problemNSC{}, \problemESC{}, and \problemRNSC{} are $\mathsf{W}[1]$-hard for the parameter $k$, even if $p=3$ and $\tw(G)=2$.
\end{theorem}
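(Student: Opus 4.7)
The plan is to prove Theorem~\ref{thm:intro:k} by a parameterized reduction from \problemMCC, which is $W[1]$-hard when parameterized by the number of colors $\ell$. Given an instance $(H, V_1 \cup \cdots \cup V_\ell)$ with $|V_i| = n$, the goal is to produce a \problemESC{} instance $(G, \mc T, k)$ where $G$ is a tree plus one extra node $r$ (so $\tw(G) = 2$), every terminal set has size at most $3$, and $k = \ell + \binom{\ell}{2}$; hardness for \problemNSC{} and \problemRNSC{} would then follow from a standard subdivision trick that replaces each edge by a short path of fresh non-terminal vertices, so that vertex deletion on a subdivided edge plays the same role as edge deletion.

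The tree $T$ is a spider centered at a distinguished vertex $c$. For each color $i$, I attach a \emph{vertex leg} $P_i$ of length $n$ from $c$ to a leaf $x_i$, labeling its $a$-th edge by the vertex $v_i^a \in V_i$; for each unordered pair $\{i,j\}$, I attach an \emph{edge leg} $Q_{ij}$ from $c$ to a leaf $x_{ij}$, labeling its edges by the edges of $H$ between $V_i$ and $V_j$. The extra vertex $r$ is joined by a single edge to every leaf $x_i$ and $x_{ij}$, so $G$ is tree-plus-one-node. The terminal sets fall into two families, all of size at most $3$: \emph{selection triples} $\{c, x_\star, r\}$, one per leg, which under the tight budget $k$ force exactly one cut on each leg (separating $r$ from the rest would cost $\Omega(\ell^2)$ cuts); and \emph{consistency triples} $\{u, w, r\}$ with $u$ on a vertex leg $P_i$ and $w$ on an edge leg $Q_{ij}$, positioned so that the triple is satisfied precisely when the chosen edge on $Q_{ij}$ is incident in $H$ to the chosen vertex on $P_i$.

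Correctness would then follow by a standard bidirectional argument: a multicolored clique $\{v_1,\dots,v_\ell\}$ in $H$ yields a feasible multicut by cutting the edge labeled $v_i$ on each $P_i$ and the edge labeled $(v_i, v_j)$ on each $Q_{ij}$; conversely, any feasible multicut performs exactly one cut per leg by the selection triples, and the consistency triples force those cuts to describe a clique in $H$. The main obstacle I anticipate is the design of the consistency triples within the tree-plus-one-node structure under the constraint $p = 3$: because the apex $r$ is shared by all legs, a naive triple is satisfied too cheaply by any single cut. The planned resolution is to place $u$ and $w$ along their legs at positions determined by the index being verified, exploiting that a cut at position $a$ on a leg isolates a contiguous ``$r$-side'' segment whose length encodes the selection, so that the three disjuncts $(u,w)$, $(u,r)$, $(w,r)$ jointly cover exactly the consistent cut configurations.
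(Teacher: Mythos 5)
Your graph is essentially the same as the paper's (a spider on $\ell + \binom{\ell}{2}$ legs whose leaves share a common apex), but your choice of terminal sets breaks the reduction. The apex $r$ has degree exactly $\ell + \binom{\ell}{2} = k$, so deleting every edge incident to $r$ is within budget and isolates $r$; since \emph{every} one of your terminal sets contains $r$, this single trivial cut separates all of them, and your instance is always a ``yes''-instance regardless of whether $H$ has a multicolored clique. You estimate this costs $\Omega(\ell^2)$ as though that were prohibitive, but $k = \Theta(\ell^2)$, so it is exactly affordable. The paper avoids this by placing the apex and the spider center into just \emph{one} terminal set (the pair $\{s,t\}$), while every other terminal set consists of three interior leg nodes, so that isolating an apex node leaves the remaining triples untouched. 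There is also a deeper problem with your consistency triples: once one cut per leg is forced, the graph splits into only two components (a $c$-side and an $r$-side), and since $r$ always lands on the $r$-side, the triple $\{u,w,r\}$ is separated iff at least one of $u,w$ is on the $c$-side. That is a 2-variable monotone ``or'' of threshold inequalities, and such constraints alone cannot encode the equalities needed to certify a clique. The paper obtains a genuinely 3-variable not-all-equal primitive precisely because all three terminals in each triple are interior leg nodes, free to land on either side of the cut; it isolates this primitive by first proving that \problemNAE{} is $\mathsf{W}[1]$-hard in the number of variables (Lemma~\ref{lem:NAEhard}) and then reducing \problemNAE{} to \problemESC{} with $p=3$, $\tw=2$.

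Your appeal to a ``standard subdivision trick'' for \problemNSC{} and \problemRNSC{} is also too quick. After subdividing, the high-degree center $c$ is still present and (in \problemNSC{}) deletable, and removing $c$ alone disconnects all legs simultaneously; the budget $k$ then no longer forces one interior deletion per leg and the reduction collapses. The paper's \problemNSC{} construction is not a mechanical subdivision: it lengthens each leg and introduces a terminal pair entirely inside each leg (namely $\{v^i_1, v^i_{2n-1}\}$) that is immune to the deletion of any apex node, so the budget genuinely forces one interior deletion per leg; it then uses pairs of ``representative'' nodes per value to survive that single deletion. You would need an analogous device before the node versions go through.
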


\noindent Observe the sharp gap described by Theorem~\ref{thm:intro:k+t-edge} and Theorem~\ref{thm:intro:k+t} between the parameterized complexity of the edge deletion version versus the node deletion version; this gap does not \mbox{exist} for the {\sc Multicut} problem. We also note that Theorem~\ref{thm:intro:k} implies that the fixed-parameter algorithms for \problemMC{} for parameter $k$~\cite{BousquetEtAl2011,MarxRazgon2011} do not generalize to \problemSC{}.

To obtain the fixed-parameter algorithm of Theorem~\ref{thm:intro:k+t-edge}, we have to avoid the brute-force choice of a pair of separated terminals of each terminal set: Although one can trivially reduce every instance of the {\sc Edge Steiner Multicut} problem to at most ${p \choose 2}^{t}$ instances of \problemMC{} parameterized by $k$, this only yields an $f(k) \cdot n^{O(t)}$-time algorithm (for unbounded~$p$).
Our contribution in Theorem~\ref{thm:intro:k+t-edge} is that we improve on this simple algorithm and obtain a runtime of $f(k,t) \cdot n^{O(1)}$. We give two independent proofs of Theorem~\ref{thm:intro:k+t-edge}:
\begin{itemize}
\item Our first proof uses the recent technique of Chitnis~\etal\cite{ChitnisEtAl2012} known as randomized contractions (even though the technique actually yields deterministic algorithms). The rough idea of the algorithm is to first determine a large subgraph $G'$ of the input graph $G$, such that $G'$ has no small cut and only has a small interface (\ie a small number of vertices that connect the subgraph to the rest of the graph). We can then branch on the behavior of a solution on the interface to determine a set $U \subseteq E(G')$ of `useless' edges, in the sense that when $U$ is contracted in $G$ a smallest solution (of size at most $k$) persists in the remaining graph. By iterating this procedure, we can reduce the size of the graph until it is small enough to be handled by exhaustive enumeration.
Our algorithm is similar to the one for \textsc{Edge Multiway Cut-Uncut} in the paper by Chitnis~\etal\cite{ChitnisEtAl2012}; however, in contrast to that problem, there seems to be no straightforward projection of the instance onto $G'$ in our case, and therefore more involved arguments are needed to determine the set $U$.
\item Our second proof is based on several novel structural lemmas that show that a minimal edge Steiner cut can be decomposed into important separators and minimal $s$-$t$ cuts.
Using a branching strategy, we ascertain the topology of the decomposition that is promised by the structural lemmas.
Since there are only few important separators of bounded size~\cite{Marx2006,ChenEtAl2009,MarxRazgon2011} and all relevant minimal $s$-$t$ cuts lie in a graph of bounded treewidth (following the treewidth reduction techniques of Marx~\etal\cite{MarxEtAl2013}), we can then optimize over important separators and minimal $s$-$t$ cuts.
\end{itemize}
The advantage of the first algorithm over the second is that it runs in single-exponential time, instead of double-exponential time. However, the second algorithm is slightly faster in terms of $n$. Moreover, as part of the correctness proof of the second algorithm, we present some structural lemmas that give additional insight into the properties of the cut, which may be of independent interest. Therefore, we present both algorithms.

The $\mathsf{W}[1]$-hardness results of Theorem~\ref{thm:intro:k+t} and~\ref{thm:intro:k} all rely on reductions from the \problemMCC{} problem~\cite{FellowsEtAl2009}. 
For the proof of Theorem~\ref{thm:intro:k}, we introduce a novel intermediate problem, \problemNAE{}, which is an integer variant of the better known \textsc{Not-All-Equal-3-SAT} problem.
We show that \problemNAE{} is $\mathsf{W}[1]$-hard parameterized by the number of variables.
This is a powerful starting point for parameterized hardness reductions and should turn out to be useful to prove the hardness of other problems.

To complete our dichotomy, the second part of our paper charts the full (parameterized) complexity of \problemSC{} on trees, that is, for graphs $G$ with $\tw(G)=1$.
In fact, some of the hardness results that we prove for \problemSC{} on general graphs even hold for trees.
We also show that many of the results for trees do not carry over to graphs of bounded treewidth, the only exception being a fixed-parameter algorithm for parameters $\tw(G) + t$.

We remark that our characterization induces a polynomial time vs. $\mathsf{NP}$-hardness dichotomy for \problemSC{}, i.e., for any choice of $k,p,t,\tw(G)$ as any constants or unbounded (and all three problem variants), we either prove that \problemSC{} is in $\mathsf{P}$ or that it is $\mathsf{NP}$-hard. This characterization can be obtained from Table~\ref{tab:undirected_parameterized_complexity_results} by considering all its polynomial time and $\mathsf{NP}$-hardness results as well as using the rule that any fixed-parameter algorithm induces a polynomial-time algorithm by setting all parameters to $O(1)$.

\subsection{Related Work}
We briefly cite the most relevant results on the parameterized complexity of graph cut problems.
We already mentioned several results on the special case of \problemSC{} when $p=2$ (\problemMC{})~\cite{Marx2006,Xiao2010,Guillemot2011,BousquetEtAl2011,MarxRazgon2011,MarxEtAl2013}.
\problemMC{} is itself a generalization of \problemMWC{}, also known as \textsc{Multiterminal Cut}, where the goal is to delete $k$ edges or nodes to separate all terminals from each other.
This problem is $\mathsf{NP}$-complete even for three terminals~\cite{DahlhausEtAl1994} and has been extensively studied from a parameterized point of view (see, e.g., the work of Cao et al.~\cite{CaoEtAl2013} or Cygan et al.~\cite{CyganEtAl2013}).
The parameterized complexity of many different other graph cut problems has also been considered in recent years~\cite{DowneyEtAl2003,ChitnisEtAl2012,KawarabayashiThorup2011,Marx2006}.
On trees, we only mention here that \problemEMC{} and \problemRNMC{} remain $\mathsf{NP}$-hard~\cite{CalinescuEtAl2003}, but are fixed-parameter tractable~\cite{GuoNiedermeier2005,GuoEtAl2006}.
In contrast, \problemNMC{} has a polynomial-time algorithm on trees~\cite{CalinescuEtAl2003}.

\subsection{Organization}
We begin our exposition in Section~\ref{sec:easyandknown} by giving easy results for certain parameter combinations of \problemSC{}.
Thereafter, we present our fixed-parameter algorithms for \problemESC{} (Theorem~\ref{thm:intro:k+t-edge}) in Section~\ref{sec:ktedge2} and~\ref{sec:ktedge}.
Following this, in the two subsequent sections, we present our $\mathsf{W}[1]$-hardness proofs: the proof of Theorem~\ref{thm:intro:k} in Section~\ref{sec:steinercutsintreewidth}, and of Theorem~\ref{thm:intro:k+t} in Section~\ref{sec:kt}.
Section~\ref{sec:steinercutsintrees} then focuses on trees to complete our dichotomy.
We conclude with some discussion and open problems in Section~\ref{sec:discussion}.
For basic notions of parameterized complextiy as well as the notion of treewidth we refer the reader to Appendix~\ref{sec:basic}.

\section{Easy and Known Results}
\label{sec:easyandknown}
In this section, we collect easy and known results about the {\sc Steiner Multicut} problem. Some of these results are scattered throughout the literature, while others are new.
First, observe that whenever the cut size $k$ is constant, we can solve the problem in polynomial time by simply guessing the desired set $S$ of at most $k$ edges or nodes. 

Furthermore, \problemNSC{} is trivially solvable when $t \leq k$, as in this case we may simply delete an arbitrary terminal node from each set $T_i$, resulting in a solution of size at most $k$; thus, any instance is always a ``yes''-instance in this case. 

We may reduce \problemSC{} to ${p \choose 2}^t$ instances of \problemMC{} by branching for each terminal set over its separated terminals.
Since \problemMC{} is in \FPT\ for parameter $k$, we obtain a fixed-parameter algorithm for \problemSC{} for parameter $k+t+p$.
Also, since ${p \choose 2}^t \le n^{O(t)}$, \problemSC{} is in \FPT\ for parameter $k$ and any constant $t$.

Now, \problemMC{} on instances with $t=1$ (\ie instances that have only one terminal pair $|T_1|=\{s,t\}$) is polynomial-time solvable by running an $s-t$ cut algorithm.
For $t=2$ a result by Yannakakis et al.~\cite[Lemma 1]{YannakakisEtAl1983} also yields a polynomial time algorithm for \problemMC{}.
Again by branching over the separated terminals in both terminal sets, we obtain a polynomial time algorithm for \problemSC{} for $t \le 2$.

When there are three or more terminal sets, then \problemSC{} generalizes \problemMWC{} and thus is $\mathsf{NP}$-complete~\cite{DahlhausEtAl1994} even when $p=2$.

We next show that \problemRNSC{} is as least as hard as \problemNSC{}.
Therefore, whenever \problemNSC{} is $\mathsf{W}[1]$-hard (or $\mathsf{NP}$-hard) for a certain combination of parameters, then so is \problemRNSC{}.
\begin{lemma} \label{lem:NSCtoRNSC}
  Any instance of \problemNSC{} can be reduced in polynomial time to an instance of \problemRNSC{} with the same parameter values $k$, $t$, $p$, and $\mathsf{tw}(G)$. 
\end{lemma}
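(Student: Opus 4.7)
The plan is to apply the standard \emph{pendant terminal} transformation. Given a \problemNSC{} instance $(G, T_1, \ldots, T_t, k)$, I would build $G'$ by introducing, for each terminal $v \in T_1 \cup \cdots \cup T_t$, a fresh non-terminal vertex $v'$, rerouting every edge incident to $v$ so that it becomes incident to $v'$ instead (handling edges between two terminals analogously), and then attaching $v$ to $v'$ by a single new edge. The terminal sets stay exactly $T_1, \ldots, T_t$. In $G'$ every terminal $v$ becomes a degree-one leaf whose unique neighbor is the non-terminal $v'$, so deleting $v'$ is the only \problemRNSC{}-legitimate way to sever $v$ from the rest of the graph.

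Next I would establish the solution correspondence. Given a node multicut $S$ of size at most $k$ in $G$, obtain $S' \subseteq V(G') \setminus \bigcup_i T_i$ by replacing every terminal $v \in S$ with $v'$ and keeping the non-terminals of $S$ in place; conversely, invert this substitution. Both maps preserve cardinality. For each pair $u, v \in T_i$, connectivity in $G \setminus S$ and in $G' \setminus S'$ agree: away from the new pendants the graph $G'$ is essentially $G$ with terminals renamed, and removing $v'$ isolates $v$ in $G'$, which has exactly the same separating effect on every terminal set containing $v$ as removing $v$ itself in $G$.

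Finally I would verify parameter preservation. The values $k$, $t$, and $p$ are unchanged by construction. For treewidth, I would take an optimal tree decomposition of $G$, rename every occurrence of a terminal $v$ in any bag to $v'$, and attach to any bag containing $v'$ a new leaf bag $\{v, v'\}$; this is a valid decomposition of $G'$ of width at most $\tw(G)$. Contracting the pendant edges $\{v, v'\}$ in $G'$ recovers $G$, giving the reverse inequality, so $\tw(G') = \tw(G)$.

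The main obstacle I anticipate, though mild, is handling a terminal $v$ that belongs to several sets $T_i$ simultaneously: the single substitution $v \mapsto v'$ must serve all of them at once. This works out automatically because deleting the unique neighbor $v'$ completely isolates $v$, mirroring the effect that deleting $v$ has on every terminal set containing it in $G$. The remainder is routine case analysis on whether deleted vertices are terminals, new pendants, or original non-terminals.
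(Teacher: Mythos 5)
Your proposal is correct and uses essentially the same pendant-terminal construction as the paper; the only difference is a cosmetic relabeling of which endpoint of the new pendant edge inherits the original vertex's adjacencies (the paper keeps $v$'s edges in place and attaches a fresh leaf terminal $v'$, whereas you move $v$'s edges to a new internal vertex $v'$ and make $v$ itself the leaf), yielding the same graph up to renaming, so that your solution correspondence becomes a substitution where the paper's is just the identity map on cut sets. You also spell out the treewidth preservation explicitly, which the paper leaves implicit.
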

\begin{proof}
  Take an instance $(G, \mc T, k)$, $\mc T = \{T_1,\ldots,T_t\}$, of \problemNSC{} and transform it to an instance of \problemRNSC{} by adding for each terminal node $v\in T_1\cup\hdots\cup T_t$ a new pendant node $v'$.
  Then replace $v$ by $v'$ in every terminal set $T_i$.
  It is easy to see that the original instance admits a node cut of size $k$ if and only if the new instance admits a node cut of size $k$ that does not use any terminal nodes.
\end{proof}

\section{Tractability for Edge Deletion and Parameter \texorpdfstring{$k+t$}{k + t}: First Proof}
\label{sec:ktedge2}
In this section, we prove Theorem~\ref{thm:intro:k+t-edge}, namely that \problemESC{} parameterized by $k+t$ is fixed-parameter tractable. Here, we give the first proof, which uses the technique of randomized contractions pioneered by Chitnis~\etal\cite{ChitnisEtAl2012}; the second proof is in the next section.
Later we will see that this result is ``maximal'', in the sense that \problemESC{} is $\mathsf{W}[1]$-hard parameterized by $k$ or $t$ alone (this follows from Theorem~\ref{thm:edge:hard-k} and the fact that even \problemEMC{} is $\mathsf{NP}$-hard when $t=3$~\cite{DahlhausEtAl1994} respectively), that the corresponding node deletion problem is $\mathsf{W}[1]$-hard parameterized by $k+t$ (Theorem~\ref{thm:k+t}), and that there exists no polynomial kernel for \problemESC{} parameterized by $k+t$ (Theorem~\ref{thm:edge:nokernel-kt}).

We first state some notions and supporting lemmas from the paper of Chitnis~\etal\cite{ChitnisEtAl2012}, which are needed to make our proof work.
The \emph{identification} of two vertices $v,w \in V(G)$ results in a graph $G'$ by removing $v,w$, adding a new vertex $vw$, and if $v$ or $w$ is an endpoint of an edge, then we replace this endpoint by $vw$. Note that the identification of two vertices does not remove any edges, and generally results in a multigraph (with parallel edges and self-loops). Without confusion, we may sometimes refer to $vw$ by its old names $v$ or $w$.

The \emph{contraction} of an edge $(v,w) \in E(G)$ results in a graph $G'$ by removing all edges between $v$ and $w$, and then identifying $v$ and $w$. This is also known as contraction without removing parallel edges. Again, the result of a contraction is generally a multigraph. Given a set $F \subseteq E(G)$ of edges that induce a connected subgraph of $G$ with $a+1$ vertices of which $v$ is one, after contracting all edges of $F$, we say that $a$ vertices were contracted \emph{onto} $v$.

\begin{lemma}[\cite{ChitnisEtAl2012}] \label{lem:esc:universe}
Given a universe $\mathbb{U}$ and integers $a',b'$ with $0 \leq a',b' \leq |\mathbb{U}|$, one can in time $2^{O(\min\{a',b'\} \log(a'+b'))}\ |\mathbb{U}|\, \log\, |\mathbb{U}|$ find a family $\mc{F}$ of $2^{O(\min\{a',b'\}\log(a'+b'))}\ \log\, |\mathbb{U}|$ subsets of $\mathbb{U}$ such that for any two disjoint sets $A,B \subseteq \mathbb{U}$ of size at most $a'$ and $b'$ respectively, there exists a set $S \in \mc{F}$ that contains all elements of $A$ but is disjoint from $B$.
\end{lemma}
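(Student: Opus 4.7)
The plan is to prove this by a biased random-sampling argument followed by derandomization, which is the standard recipe for splitting/separating families. Set $s := \min\{a',b'\}$ and $L := a'+b'$, assume without loss of generality $a' = s$ (the other case is symmetric), and dispense with the degenerate case $s = 0$ using the single-set family $\{\emptyset\}$ or $\{\mathbb{U}\}$.

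For the probabilistic core, I would sample $N$ i.i.d.\ uniformly random hashes $h_1,\dots,h_N : \mathbb{U} \to [L]$, and put $h_j^{-1}(I)$ into $\mc{F}$ for every $j$ and every $s$-subset $I \subseteq [L]$. For any fixed disjoint $(A,B)$ with $|A|\le s$ and $|B|\le L-s$, conditioning on $h_j|_A$ and using $|h_j(A)|\le s$ yields
\[
  \Pr[h_j(A)\cap h_j(B) = \emptyset] \;\ge\; \bigl((L-s)/L\bigr)^{|B|} \;\ge\; e^{-s}/O(1).
\]
Whenever this color-separation event holds one can extend $h_j(A)$ to an $s$-subset $I \subseteq [L]\setminus h_j(B)$ (possible because $|[L]\setminus h_j(B)|\ge s$), and the corresponding $h_j^{-1}(I)\in\mc{F}$ contains $A$ and is disjoint from $B$. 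Taking $N = 2^{O(s)}\cdot L \log|\mathbb{U}|$ hashes and union-bounding over the at most $|\mathbb{U}|^L$ relevant pairs shows the resulting family is valid with positive probability, and the total size is $N\cdot\binom{L}{s} = 2^{O(s)}\cdot 2^{O(s\log L)}\log|\mathbb{U}| = 2^{O(s\log L)}\log|\mathbb{U}|$ using $\binom{L}{s}\le(eL/s)^s$.

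The remaining, and most delicate, step is to derandomize the choice of the hashes $h_j$ inside the allotted time $2^{O(s\log L)}\cdot|\mathbb{U}|\log|\mathbb{U}|$. The probabilistic argument only uses the $L$-wise marginals of the joint distribution on $(h_j)_j$, so any $L$-wise (almost) independent sample space suffices combinatorially; however, a naive $L$-wise independent construction over $[L]^{|\mathbb{U}|}$ has size $\poly(|\mathbb{U}|)$, inflating the $|\mathbb{U}|$-dependence beyond the single $\log|\mathbb{U}|$ factor the lemma allows. I would instead plug in an explicit $(s,L-s)$-cover-free family (equivalently, a superimposed code), for instance a Kautz--Singleton-type concatenation built on top of Reed--Solomon codewords, whose size and construction time are of the form required. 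The probabilistic computation certifies existence and the target parameters; substituting in the explicit construction delivers the lemma.
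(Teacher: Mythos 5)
The paper does not prove this lemma internally---it cites it directly from Chitnis et al.~\cite{ChitnisEtAl2012}---so there is no in-paper proof to compare against; your proposal must be judged on its own.

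Your probabilistic core is sound, and you have correctly recognized the target object: in dual form, setting $C_u = \{S \in \mc{F} : u \in S\}$ for $u \in \mathbb{U}$, the family $\{C_u\}_{u \in \mathbb{U}}$ must be an $(a',b')$-cover-free family on ground set $\mc{F}$, with $|\mathbb{U}|$ members and $|\mc{F}| = 2^{O(s\log L)}\log|\mathbb{U}|$, where $s=\min\{a',b'\}$ and $L=a'+b'$. The gap is in the derandomization paragraph, and it is genuine in two ways. First, the Kautz--Singleton concatenation over Reed--Solomon codes yields $(1,r)$-cover-free families only. If $s\geq 2$ you need a $(w,r)$-cover-free family with $w=s\geq 2$, and the Reed--Solomon argument breaks there: $\bigcap_{u\in A} C_u$ corresponds to the coordinates on which $s\geq 2$ codewords simultaneously agree, which by the minimum distance is at most $k-1$ coordinates, while $\bigcup_{v\in B}C_v$ can touch up to $r(k-1)$ coordinates, so the intersection is easily swallowed; $(w,r)$-cover-free constructions with $w\geq 2$ require genuinely different machinery. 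Second, even for $s=1$ the Reed--Solomon-based family has ground set size $\Theta\bigl(r^2\log^2|\mathbb{U}|/\log^2(r\log|\mathbb{U}|)\bigr)$, which grows strictly faster than the single $\log|\mathbb{U}|$ factor the lemma promises. The derandomization that actually closes this is the Naor--Schulman--Srinivasan splitter: build a deterministic $(|\mathbb{U}|,\,a'+b',\,(a'+b')^2)$-perfect hash family $\mc{H}$ of size $(a'+b')^{O(1)}\log|\mathbb{U}|$, assume without loss of generality $a'\leq b'$, and set
$$\mc{F}=\bigl\{h^{-1}(X)\ :\ h\in\mc{H},\ X\subseteq[(a'+b')^2],\ |X|\leq a'\bigr\}.$$
Some $h\in\mc{H}$ is injective on $A\cup B$, so $h(A)\cap h(B)=\emptyset$ and $X=h(A)$ witnesses the separation; the size is $(a'+b')^{O(1)}\log|\mathbb{U}|$ times the number of subsets of $[(a'+b')^2]$ of size at most $a'$, i.e.\ $2^{O(a'\log(a'+b'))}\log|\mathbb{U}|$, as claimed (the case $a'>b'$ is handled by taking complements $\mathbb{U}\setminus h^{-1}(X)$ with $|X|\le b'$). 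Your randomized calculation correctly certifies existence with the right parameters, but the Kautz--Singleton substitute does not deliver the stated bounds; you would need to replace it with the splitter construction.
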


\begin{definition}[\cite{ChitnisEtAl2012}] \label{def:esc:sep}
Given two integers $a,b$, an \emph{$(a,b)$-good edge separation} of a connected graph $G$ is a partition $(V_{1},V_{2})$ of $V(G)$ such that $|V_{1}|, |V_{2}| > a$, $G[V_{1}]$ and $G[V_{2}]$ are connected, and the number of edges between $V_{1}$ and $V_{2}$ is at most $b$.
\end{definition}

\begin{lemma}[\cite{ChitnisEtAl2012}] \label{lem:esc:sep}
Given a connected graph $G$ and two integers $a,b$, one can decide in time $2^{O(\min\{a,b\}\, \log(a+b))}\ |V(G)|^{3}\ \log\, |V(G)|$ whether $G$ has an $(a,b)$-good edge separation, and if it does, find such a separation in the same time.
\end{lemma}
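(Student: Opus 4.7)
The plan is to apply Lemma~\ref{lem:esc:universe} to the universe $\mathbb{U}=V(G)$ with carefully chosen parameters, producing a small family $\mc F$ of candidate subsets of $V(G)$, and then verify each candidate in polynomial time via a handful of minimum-cut computations. If we arrange things so that $\min\{a',b'\}=O(\min\{a,b\})$, then $|\mc F|=2^{O(\min\{a,b\}\log(a+b))}\log |V(G)|$, and combined with a polynomial cost per candidate this will match the claimed bound $2^{O(\min\{a,b\}\log(a+b))}\,|V(G)|^3\log|V(G)|$.

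Let $(V_1,V_2)$ be a hypothetical $(a,b)$-good edge separation, and let $\partial V_i\subseteq V_i$ denote the endpoints on the $V_i$-side of the at most $b$ cut edges, so $|\partial V_1|,|\partial V_2|\le b$ and $\partial V_1\cap\partial V_2=\emptyset$. Assume without loss of generality that $|V_1|\le|V_2|$; since $|V_1|>a$, fix some $(a+1)$-subset $X\subseteq V_1$. Now invoke Lemma~\ref{lem:esc:universe} with $a'=a+1$ and $b'=b$, noting $\min\{a+1,b\}=O(\min\{a,b\})$. This produces a family $\mc F$ such that for some $S^\ast\in\mc F$, $X\subseteq S^\ast$ and $\partial V_2\cap S^\ast=\emptyset$.

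For each $S\in\mc F$, the verification step picks a vertex $v\in S$ and iterates over all $w\notin S$. It computes a minimum $v$-$w$ edge cut; if the cut has size at most $b$, let $(A_v,A_w)$ be the resulting partition (with $v\in A_v$, $w\in A_w$). The algorithm outputs $(A_v,A_w)$ whenever $|A_v|,|A_w|>a$ and both $G[A_v]$ and $G[A_w]$ are connected. For the correct guess $S=S^\ast$, any $v\in X$, and any $w\in\partial V_2$, the true cut edges $(\partial V_1,\partial V_2)$ already form a $v$-$w$ cut of size at most $b$, so a minimum $v$-$w$ cut has size at most $b$; taking the \emph{source-closest} minimum $v$-$w$ cut (uniquely determined by submodularity of the cut function), the $v$-side is forced to lie inside $S^\ast$ and therefore to avoid $\partial V_2$, which together with an analogous guess of an $(a+1)$-subset of $V_2$ (which can be folded into the same universe-lemma call by doubling the parameter) pins down that both $|A_v|,|A_w|>a$.

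The main obstacle in this plan is the correctness of the verification step: the universe-guessing only pins down one side of the separation approximately via containment in $S^\ast$, and there may be several candidate minimum $v$-$w$ cuts only some of which correspond to a valid $(a,b)$-good edge separation. Overcoming this requires the submodularity-based argument above for choosing the source-closest cut, together with care that the guessed $(a+1)$-subsets $X\subseteq V_1$ (and symmetrically on the $V_2$-side) are fully contained on the appropriate side of the min cut, so that both size thresholds $|A_v|,|A_w|>a$ are met. All polynomial overheads, namely at most $O(|V(G)|^2)$ pairs $(v,w)$ and $O(|V(G)|\cdot|E(G)|)$ per min-cut computation, comfortably fit inside the claimed $|V(G)|^3$ factor.
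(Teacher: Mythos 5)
This lemma is cited directly from Chitnis~et al.\ \cite{ChitnisEtAl2012}; the paper under review gives no proof of its own, so your attempt can only be checked for internal correctness.

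The attempt has a genuine gap precisely where you flag an ``obstacle''. The claim that the source-closest minimum $v$--$w$ cut has its $v$-side $A_v$ contained in $S^\ast$ is not justified: $S^\ast$ is just a set produced by Lemma~\ref{lem:esc:universe}, with no structural relationship to edge cuts in~$G$, so nothing prevents the source-closest min-cut from crossing into or out of $S^\ast$ arbitrarily. Worse, the source-closest cut \emph{minimizes} $|A_v|$, which works against the goal $|A_v| > a$; even if $v \in X$, there is no mechanism forcing the whole of $X$ to lie in $A_v$. You would need something like the fact that for the \emph{true} separation $(V_1, V_2)$ the edge set $\partial(V_1)$ is a $v$--$w$ cut of size at most $b$ that already has both sides large and connected, but a min-cut computation has no reason to return that particular cut or anything with its size/connectivity properties. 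There is also a smaller but real issue in the set-up: you ``pick a vertex $v \in S$'' and only iterate over $w \notin S$, but an arbitrary $v \in S^\ast$ need not lie in $X$ (or even in $V_1$), since the universe lemma only promises $X \subseteq S^\ast$ and $\partial V_2 \cap S^\ast = \emptyset$, not $S^\ast \subseteq V_1$; one must iterate over $v \in S$ as well.

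More broadly, the verification strategy of deducing a good separation from a single minimum $s$--$t$ cut between two guessed vertices is insufficient: the min $v$--$w$ cut can have size $\le b$ yet have a tiny or disconnected side, and there is no argument that among all pairs $(v,w)$ and all $S \in \mc F$, some call will output a valid $(a,b)$-good separation whenever one exists. To make this line of attack sound, the highlighted set $S^\ast$ has to control \emph{both} endpoints of every cut edge (so that contracting components of $G[S^\ast]$ and $G[V(G)\setminus S^\ast]$ preserves the cut) together with a connected $(a+1)$-subset on each side, and the verification must reason about min cuts in the contracted graph; this is a different and more delicate argument than the one you sketch, and is what makes the $\min\{a,b\}$ versus the $a+b$ appear in the exponent. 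As written, the proof does not go through.
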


We now define several notions and prove a few lemmas that are implicit in the work of Chitnis~\etal\cite{ChitnisEtAl2012}. We provide full proofs only for sake of completeness.

\begin{definition}
A \emph{$b$-bordered} subgraph of $G$ is a connected induced subgraph $G'$ of $G$ such that in $G$ at most $b$ vertices of $V(G')$ have an edge to a vertex of $V(G) \setminus V(G')$. We call the vertices of $G'$ that have an edge in $G$ to a vertex of $V(G) \setminus V(G')$ the \emph{border vertices} of $G'$.
\end{definition}

\begin{lemma} \label{lem:esc:border}
Given a connected graph $G$ and two integers $a,b$ ($b$ even), one can find in time $2^{O(\min\{a,b\}\, \log(a+b))}\ |V(G)|^{4}\ \log |V(G)|$ a $b$-bordered subgraph of $G$ that does not admit an $(a,b/2)$-good edge separation.
\end{lemma}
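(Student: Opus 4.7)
The plan is to iterate Lemma~\ref{lem:esc:sep} in a peeling fashion. I would start from the subgraph $G' := G$, which is trivially $0$-bordered since no vertex lies outside $V(G)$, and is connected by hypothesis. At each step I invoke Lemma~\ref{lem:esc:sep} on the current $G'$ with parameters $a$ and $b/2$. If it reports that no $(a, b/2)$-good edge separation exists, then $G'$ is the required output. Otherwise it returns a partition $(V_1, V_2)$ of $V(G')$ with $G[V_1]$ and $G[V_2]$ both connected and of size strictly greater than $a$, and with at most $b/2$ edges of $G'$ crossing between $V_1$ and $V_2$. I then replace $G'$ by $G[V_i]$ for a carefully chosen side $i \in \{1, 2\}$.

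The key structural step is the choice of $i$ that maintains the invariant ``$G'$ is $b$-bordered'' throughout the iteration. Every border vertex of $G[V_i]$ with respect to $G$ is either (a) a border vertex of $G'$ that happens to lie in $V_i$, or (b) an endpoint in $V_i$ of a crossing edge between $V_1$ and $V_2$ in $G'$ (since any edge leaving $V_i$ in $G$ either already left $G'$ or went into $V_{3-i}$). The number of vertices of type (b) is at most $b/2$ because there are at most $b/2$ crossing edges. For type (a), since the old border of $G'$ has at most $b$ vertices and splits between $V_1$ and $V_2$, pigeonhole lets me pick $i$ so that at most $b/2$ old border vertices lie in $V_i$. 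Adding the two bounds gives at most $b$ border vertices for $G[V_i]$, which preserves the invariant. Connectivity of $G[V_i]$ is built into the definition of a good edge separation.

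For termination and the running time, both sides of the partition are nonempty (in fact of size $> a$), so $V_i \subsetneq V(G')$ and the vertex count strictly decreases; the loop therefore executes at most $|V(G)|$ times. Each iteration calls Lemma~\ref{lem:esc:sep} at cost $2^{O(\min\{a,b\}\log(a+b))}\,|V(G)|^3\log |V(G)|$ (here $\min\{a, b/2\}$ and $\log(a + b/2)$ are absorbed into the stated asymptotics), plus $O(|V(G)| + |E(G)|)$ bookkeeping to identify the border set of $G[V_i]$ and to pick the side that minimises (a). Multiplying by the iteration count yields the claimed $2^{O(\min\{a,b\}\log(a+b))}\,|V(G)|^4\log |V(G)|$ bound.

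I do not anticipate a real obstacle; the argument is a clean peeling procedure. The only delicate point is the bookkeeping of border vertices, where the reason the good edge separation is required to have only $b/2$ crossing edges (rather than $b$) becomes visible: this is precisely the budget that together with the pigeonhole split of the old border closes at $b$ and keeps the invariant alive through every peeling step.
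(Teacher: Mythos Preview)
Your proposal is correct and follows essentially the same approach as the paper: start with $G'=G$, iteratively apply Lemma~\ref{lem:esc:sep} to find an $(a,b/2)$-good edge separation, and pass to the side containing at most half of the current border vertices, so that the $b/2$ old border vertices plus the at most $b/2$ new ones (endpoints of crossing edges) keep $G'$ $b$-bordered. The termination and running-time analysis also match the paper's.
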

\begin{proof}
Initially, let $G' = G$. We apply an iterative procedure that maintains the invariant that $G'$ is a $b$-bordered subgraph of $G$. Note that the invariant holds for $G' = G$. Now run the algorithm of Lemma~\ref{lem:esc:sep} on $G'$ to decide the existence of an $(a,b/2)$-good edge separation in $G'$. If no such separation exists, then we simply return $G'$. Otherwise, let $(V_{1},V_{2})$ be the separation returned by the algorithm. Assume without loss of generality that $V_{1}$ contains less border vertices of $G'$ than $V_{2}$. Since $G'$ is $b$-bordered, $V_{1}$ contains at most $b/2$ border vertices of $G'$. Moreover, since $(V_{1},V_{2})$ is an $(a,b/2)$-good edge separation in $G'$, there are at most $b/2$ edges between $V_{1}$ and $V_{2}$, and thus at most $b/2$ vertices of $V_{1}$ have an edge to a vertex of $V_{2}$ in $G'$. Since $G'$ is an induced subgraph of $G$, this implies at most $b/2$ vertices of $V_{1}$ have an edge to a vertex of $V_{2}$ in $G$. Hence, at most $b$ vertices of $V_{1}$ have an edge to a vertex of $V(G) \setminus V_{1}$. Hence, $G[V_{1}]$ is a $b$-bordered subgraph of $G$. Set $G' = G[V_{1}]$ and iterate. Since $V_{1}$ and $V_{2}$ are both nonempty, this procedure terminates after $O(|V(G)|)$ steps. Combined with the running time of the algorithm of Lemma~\ref{lem:esc:sep}, this implies the lemma.
\end{proof}

Let $G$ be a connected graph and let $a$ be an integer. Given a set $F \subseteq E(G)$, let $G_{F}$ denote the graph obtained from $G$ by contracting all edges of $F$, and then identifying into a single vertex (which we denote by $h_{F}$) all vertices onto which at least $a$ vertices were contracted.
Observe that $G_{F}$ is potentially a multigraph, and that $h_{F}$ might not exist.

A subset $Y$ of the edges of a connected graph $G$ is a \emph{separator} if $G \setminus Y$ has more than one connected component. The set $Y$ is a \emph{minimal} separator if there is no $Y' \subset Y$ such that $G \setminus Y'$ has the same connected components as $G \setminus Y$.

\begin{lemma}[\cite{ChitnisEtAl2012}] \label{lem:esc:nosep}
Let $G$ be a connected graph, let $a,b$ be two integers ($b$ even), and let $F \subseteq E(G)$ with $|F| \leq b/2$. If $G$ admits no $(a,b/2)$-good edge separation, then $G \setminus F$ has at most $(b/2)+1$ connected components, of which at most one has more than $a$ vertices.
\end{lemma}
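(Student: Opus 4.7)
\medskip
\noindent\emph{Proof plan.} The first claim, that $G \setminus F$ has at most $(b/2)+1$ connected components, is immediate: $G$ is connected, and each edge removal from a connected (multi)graph increases the number of components by at most one, so removing $|F| \leq b/2$ edges yields at most $(b/2)+1$ components.

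For the second claim I would argue by contradiction. Suppose that $G \setminus F$ contains two distinct connected components $C_1$ and $C_2$ both with more than $a$ vertices. The plan is to produce from this an $(a, b/2)$-good edge separation of $G$, contradicting the hypothesis. Let $C_1, \dots, C_r$ be all the components of $G \setminus F$ and consider the auxiliary (multi)graph $H$ whose vertices are $C_1, \dots, C_r$ and whose edges correspond to the edges of $F$ (each edge of $F$ has endpoints in two components, giving a corresponding edge in $H$; loops from edges of $F$ with both endpoints in the same component can be discarded). Since $G$ is connected, $H$ is connected, and $|E(H)| \leq |F| \leq b/2$.

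Now I would pick any spanning tree $T$ of $H$, look at the unique $C_1$--$C_2$ path in $T$, and remove one edge of this path to split $T$ into two subtrees $T_1 \ni C_1$ and $T_2 \ni C_2$. Define $V_i$ as the union of the vertex sets of the components corresponding to vertices of $T_i$, for $i=1,2$. Then $(V_1, V_2)$ is a partition of $V(G)$; since $C_1 \subseteq V_1$ and $C_2 \subseteq V_2$ we have $|V_1|, |V_2| > a$. Each $G[V_i]$ is connected because the components inside it are individually connected and are glued together by the $F$-edges corresponding to the edges of $T_i$. Finally, any edge of $G$ between $V_1$ and $V_2$ must cross between two distinct components of $G\setminus F$, hence must lie in $F$, so there are at most $|F| \leq b/2$ such edges. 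This yields an $(a,b/2)$-good edge separation of $G$, contradicting the assumption.

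I expect the only subtle point to be ensuring that both sides of the separation are connected \emph{in $G$} (not merely in $G \setminus F$); this is precisely what the spanning-tree argument on the auxiliary multigraph $H$ takes care of, since the tree edges translate into concrete $F$-edges of $G$ that link the components in each part. Everything else (component counting, the edge-count across the partition) is routine.
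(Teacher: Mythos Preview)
Your argument is correct. The component count is immediate, and your spanning-tree trick on the auxiliary component graph $H$ cleanly delivers the $(a,b/2)$-good edge separation with both sides connected in $G$; the point you flag as subtle is exactly the right one, and your handling of it is sound.

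There is nothing to compare against here: in the paper this lemma is quoted from Chitnis~\etal\ without proof, so the paper contains no argument of its own for it. Your write-up is a perfectly good self-contained justification of the cited fact.
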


\begin{lemma} \label{lem:esc:family}
Let $G$ be a connected graph, let $a,b$ be any two integers ($b$ even) such that $G$ does not admit an $(a,b/2)$-good edge separation and such that $|V(G)| > a(b/2+1)$, and let $Y \subseteq E(G)$ with $|Y| \leq b/2$ be a minimal separator.
In time $2^{O(b \log(a+b))}\ |E(G)|\, \log\, |E(G)|$ one can find a family $\mc{F}$ of $2^{O(b \log(a+b))}\ \log\, |E(G)|$ subsets of $E(G)$ that contains a set $F_{0} \subseteq E(G)$ with the following properties: (1) $F_{0} \cap Y = \emptyset$, (2) $h_{F_{0}}$ exists in $G_{F_{0}}$, (3) $h_{F_{0}}$ is the identification of a subset of the vertices of a connected component of $G \setminus Y$, and (4) for each connected component $C$ of $G_{F_{0}} \setminus \{h_{F_{0}}\}$, $Y$ either contains all edges of $G_{F_{0}}[C \cup \{h_{F_{0}}\}]$ or none of these edges.
\end{lemma}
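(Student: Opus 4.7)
The plan is to construct $\mc{F}$ via Lemma~\ref{lem:esc:universe} applied to $\mathbb{U}=E(G)$. First, I would analyze the structure of $G\setminus Y$: by Lemma~\ref{lem:esc:nosep} and the hypothesis $|V(G)|>a(b/2+1)$, the graph $G\setminus Y$ has exactly one ``large'' component $L$ with $|V(L)|\geq a+1$, and at most $b/2$ ``small'' components $C_1,\dots,C_r$ satisfying $|V(C_j)|\leq a$ and $\sum_j |V(C_j)|\leq ab/2$. Since $Y$ is a nonempty minimal separator in the connected graph $G$, the border $N_L\subseteq V(L)$ (those vertices of $L$ incident to an edge of $Y$) is nonempty and $|N_L|\leq |Y|\leq b/2$.

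Next, for each candidate $Y$ I would pinpoint a small witness edge-set $A(Y)\subseteq E(G)\setminus Y$ so that every superset of $A(Y)$ disjoint from $Y$ qualifies as $F_0$. Take $A(Y)$ to consist of (a) for each $v\in N_L$, a BFS-tree $T_v$ in $L$ rooted at $v$ spanning exactly $a+1$ vertices (which exists because $L$ is connected with at least $a+1$ vertices; $T_v$ uses $a$ edges), and (b) a spanning tree of each small component $C_j$ (contributing $|V(C_j)|-1$ edges). All edges of $A(Y)$ are $L$-internal or $C_j$-internal, hence outside $Y$, and $|A(Y)|\leq a\cdot|N_L|+\sum_j(|V(C_j)|-1)\leq ab$. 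I would then invoke Lemma~\ref{lem:esc:universe} with $a'=ab$ and $b'=b/2$; since $\min(a',b')=O(b)$ and $\log(a'+b')=O(\log(a+b))$, the resulting family $\mc{F}$ has the claimed size and computation time, and by the universal property applied with $A=A(Y)$ and $B=Y$, it contains some $S\supseteq A(Y)$ with $S\cap Y=\emptyset$; set $F_0:=S$.

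Finally, I would verify the four properties. (1)--(3) are straightforward: (1) follows directly; (2) from the fact that each $T_v\subseteq F_0$ contracts a connected $(a+1)$-vertex subset of $L$ to a clump of size $\geq a+1$; (3) because $F_0\cap Y=\emptyset$ confines every contracted clump inside a single component of $G\setminus Y$, and $L$ is the only such component large enough to host an $h$-worthy clump. Property (4) is the delicate point: $A(Y)$ forces every vertex of $N_L$ into $h_{F_0}$ and contracts each $C_j$ into one clump, so every component $D$ of $G_{F_0}\setminus\{h_{F_0}\}$ either (i) consists of leftover $L$-clumps contained in $V(L)\setminus N_L$ (so every edge of $G_{F_0}[D\cup\{h_{F_0}\}]$ is $L$-internal and hence not in $Y$), or (ii) consists of $C_j$-clumps linked by $Y$-edges among the small components (so every edge of $G_{F_0}[D\cup\{h_{F_0}\}]$ lies in $Y$). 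The main obstacle is to show that (4) continues to hold when $F_0$ strictly exceeds $A(Y)$, i.e., when $S$ contains additional non-$Y$ edges: any such extra contracted edge merges two clumps inside one component of $G\setminus Y$, which either enlarges $h_{F_0}$ (removing the merged clump's component from $G_{F_0}\setminus\{h_{F_0}\}$) or occurs entirely inside one existing component $D$ without disturbing the uniform $Y$-status of $G_{F_0}[D\cup\{h_{F_0}\}]$.
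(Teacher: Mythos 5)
Your proposal is correct and follows essentially the same route as the paper's own proof: identify the unique large component of $G \setminus Y$, span each small component, attach to each border vertex of the large component a connected subgraph of $a+1$ vertices lying inside that component, take the union as the set $A$ to be captured (disjoint from $B = Y$), and invoke Lemma~\ref{lem:esc:universe}. The only cosmetic differences are that the paper takes the $(a+1)$-vertex pieces as subtrees of a single fixed spanning tree $T_0$ of $C_0$ (rather than independent BFS trees rooted at each border vertex) and obtains a slightly tighter bound $|A| \leq (2a-1)(b/2)$ versus your $|A| \leq ab$ — both yield the same $2^{O(b\log(a+b))}$ asymptotics — and your verification of property (4), including the stability under adding extra non-$Y$ edges to $F_0$, matches the paper's argument in substance.
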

\begin{proof}
The idea of the proof is to show that $Y$ induces a set $F_{0} \subseteq E(G)$ as in the lemma statement, which one can discover among the sets in the family $\mc{F}$ returned by Lemma~\ref{lem:esc:universe} for an appropriate choice of $\mathbb{U}$, $a'$, and $b'$.

We first describe an essential property of $F_{0}$.
Let $C_{0},\ldots,C_{\ell}$ be the connected components of $G \setminus Y$. By Lemma~\ref{lem:esc:nosep} and since $Y$ is a separator, $1 \leq \ell \leq b/2$. Following the same lemma and the assumption that $|V(G)| > a(b/2 + 1)$, there is exactly one component of more than $a$ vertices; without loss of generality, this is $C_{0}$. For every connected component $C_{i}$, let $T_{i}$ denote an arbitrary spanning tree of it. 
For each vertex $v$ of $C_{0}$ incident to an edge of $Y$, let $T_{0}^{v}$ be a subtree of $T_{0}$ that contains $v$ and that has $a+1$ vertices; note that such a subtree exists, as $T_{0}$ has more than $a$ vertices. Let $T_{0}'$ denote the union of all of these subtrees; note that $T_{0}'$ is a forest in general. Then let $A = \bigcup_{i=1}^{\ell} E(T_{i}) \cup E(T_{0}')$ and let $B = Y$.  Let $F_{0}$ be an arbitrary subset of $E(G)$ that contains $A$ but is disjoint from $B$.

We claim that $F_{0}$ satisfies all the properties of the lemma statement. Clearly, $F_{0} \cap Y = F_{0} \cap B = \emptyset$, and thus the first property holds.

Let $H$ denote the graph obtained from $G$ by contracting all edges of $F_{0}$. Since $E(T_{0}') \subseteq A \subseteq F_{0}$ and $T_{0}'$ contains at least one tree that is a subtree of the spanning tree $T_{0}$ and that has $a+1$ vertices and (thus) $a$ edges, there is at least one vertex onto which at least $a$ vertices were contracted. Hence, $h_{F_{0}}$ exists, and thus the second property holds.

Observe that $F_{0}$ does not contain any edges between $C_{i}$ and $G \setminus C_{i}$, as each such edge belongs to $Y$ and $F_{0} \cap Y = F_{0} \cap B = \emptyset$. Therefore, since $\bigcup_{i=1}^{\ell} E(T_{i}) \subseteq F_{0}$, for $i = 1,\ldots, \ell$, $C_{i}$ gets contracted onto a single vertex of $H$, which we denote by $c_{i}$. Since $C_{i}$ has at most $a$ vertices, at most $a-1$ vertices are contracted onto $c_{i}$. Hence, $c_{1},\ldots,c_{\ell}$ exist in $G_{F_{0}}$; in other words, they are not identified with $h_{F_{0}}$. It follows that $h_{F_{0}}$ is the identification of a subset of the vertices of $C_{0}$, a connected component of $G \setminus Y$, and thus the third property holds.

From the above observation, it follows that all edges incident to $c_{i}$ in $G_{F_{0}}$ for $i=1,\ldots,\ell$ belong to $Y$, and since $Y$ is minimal, all edges of $Y$ are incident to a $c_{i}$ in $G_{F_{0}}$ for $i=1,\ldots,\ell$. It remains to show that all edges of $G_{F_{0}}$ incident to exactly one $c_{i}$ for $i=1,\ldots,\ell$ have $h_{F_{0}}$ as their other endpoint. Let $v$ be any vertex of $C_{0}$ incident to an edge $e$ of $Y$. By construction, $v$ is contained in a subtree of $T_{0}'$ with at least $a+1$ vertices. Since $E(T_{0}') \subseteq F_{0}$, $v$ is contained in a connected component of $G[F_{0}]$ with at least $a+1$ vertices. Therefore, to obtain $H$, $v$ is contracted onto a vertex $w$ onto which at least $a$ vertices are contracted. To obtain $G_{F_{0}}$, $w$ is identified with zero or more other vertices to form $h_{F_{0}}$. Since $e\not\in F_{0}$, $e$ is incident on $h_{F_{0}}$.
Therefore, for each connected component $C$ of $G_{F_{0}} \setminus \{h_{F_{0}}\}$, $Y$ either contains all edges of $G_{F_{0}}[C \cup \{h_{F_{0}}\}]$ or none of these edges, and thus the fourth property holds.

Observe that since $|Y| \leq b/2$, $T_{0}'$ is built from at most $b/2$ subtrees of $T_{0}$, each of which has $a+1$ vertices and $a$ edges. Hence, $|A| \leq (a-1)\ell + a(b/2) \leq (2a-1)(b/2)$. Since $|B| = |Y| \leq b/2$, $\mc{F}$ contains a set $F_{0} \subseteq E(G)$ that contains $A$ but is disjoint from $B$, where $\mc{F}$ is the family returned by Lemma~\ref{lem:esc:universe} for $\mathbb{U} = E(G)$, $a' = (2a-1)(b/2)$, and $b'=b/2$. The lemma follows.
\end{proof}
It is important to observe that the construction of the family $\mc{F}$ does not require knowledge of $Y$ itself, beyond that it has size at most $b/2$. Moreover, note that all edges of $Y$ are present in $G_{F_{0}}$, as $F_{0} \cap Y = \emptyset$.

We are now ready to describe the algorithm for \problemESC{} for the parameter $k+t$. The basic intuition is to find a part of the graph that does not have a $(q,k)$-good edge separation for some $q$, but that only has a small number of border vertices. In this part of the graph we find and contract a set of edges that are provably not part of some smallest edge Steiner multicut. We repeat this procedure until the graph is small enough to be handled by an exhaustive enumeration algorithm.

Consider an instance $(G,\mc{T},k)$ with $\mc{T} = \{T_{1},\ldots,T_{t}\}$ of \problemESC{}. We may assume that $G$ is connected. Let $q$ be an integer determined later 
($q$ will depend on $k$ and $t$ only). 
We assume that $|E(G)| > q$, or we can use exhaustive enumeration to solve the problem in $t\, q^{O(k)}$ time.
 
We apply the algorithm of Lemma~\ref{lem:esc:border} to find a $2k$-bordered subgraph $G'$ of $G$ that does not admit a $(q,k)$-good edge separation. Let $B$ denote the set of border vertices of $G'$. Note that possibly $G' = G$, in which case $B = \emptyset$. The idea is now to determine a set of edges of $G'$ that is not used by some optimal solution.

Let $S$ be a smallest edge Steiner multicut of $(G,\mc{T})$. Let $G''$ denote the graph $(B \cup (V(G) \setminus V(G')), E(G) \setminus E(G'))$. Observe that $E(G')$ and $E(G'')$ partition $E(G)$. Let $S' = S \cap E(G')$ and let $S'' = S \cap E(G'')$.
We call a terminal set \emph{active} if it is not separated in $G \setminus S''$.
We call border vertices $b,b' \in B$ \emph{paired} if there is a path between $b$ and $b'$ in $G'' \setminus S''$. Note that this defines an equivalence relation on $B$.
We call an equivalence class $B'$ of this relation \emph{$i$-active} if the terminal set $T_{i}$ is active and the component of $G'' \setminus S''$ that contains $B'$ contains a terminal of $T_{i}$.
Intuitively, this information suffices to compute a set $Z \subseteq E(G')$ such that $Z \cup S''$ is a smallest edge Steiner multicut of $(G,\mc{T})$. Then we could contract (in $G$) all other edges of $G'$ to get a smaller instance, and repeat this until the instance is small enough to be solved by exhaustive enumeration.

Of course, we do not know $S$, and thus we do not know this equivalence relation on $B$ nor which classes are $i$-active for each $i=1,\ldots,t$. However, we can branch over all possibilities.
In each branch, we find a small set of edges, which we mark. At the end, we contract (in $G$) the set of edges of $G'$ that were not marked, and thus reduce the size of the instance. We will prove that in one of the branches, we mark a smallest set $Z$ of edges (of size at most $k$) such that $(Z \cup S'')$ is a smallest edge Steiner multicut (of size at most $k$) of $(G,\mc{T})$. Therefore, after contraction, a smallest edge Steiner multicut (of size at most $k$) of $(G,\mc{T})$ persists (if such a cut existed in the first place). In particular, we argue that we mark $Z$ in the branch with the active classes $\mc{T}^{S}$, the equivalence relation $\mc{B}^{S}$, and $i$-active classes $\mc{B}^{S}_{i}$ of $\mc{B}^{S}$ for $i=1,\ldots,|\mc{T}^{S}|$ that are induced by $S$.

The algorithm now branches over all possibilities. Let $\mc{T}' = \{T_{1}',\ldots,T_{t'}'\}$ be an arbitrary subset of $\mc{T}$, let $\mc{B}$ be an arbitrary equivalence relation on $B$, and let $\mc{B}_{i}$ denote an arbitrary subset of $\mc{B}$ for $i=1,\ldots,t'$. We say that we made the \emph{right choice} if $\mc{T}' = \mc{T}^{S}$, $\mc{B} = \mc{B}^{S}$, and $\mc{B}_{i} = \mc{B}_{i}^{S}$ for $i=1,\ldots,t'$. The algorithm considers two cases.

\medskip\noindent\textbf{Case 1:}\ 
If $|V(G')| \leq q(k+1)$, then we can essentially use exhaustive enumeration.
Let $\tG$ be the graph obtained from $G'$ by identifying two border vertices if they are in the same equivalence class of $\mc{B}$. This also makes each $\mc{B}_{i}$  a set of vertices, which by abuse of notation, we denote by $\mc{B}_{i}$ as well. For $i=1,\ldots,t'$, let $\tT_{i}$ be equal to $\mc{B}_{i} \cup (T_{i}' \cap (V(G') \setminus B))$. Then $\mc{\tT} = \{\tT_{1},\ldots,\tT_{t'}\}$. 
We verify that no terminal set in $\mc{\tT}$ is a singleton; otherwise, we can continue with the next branch.

\begin{lemma} \label{lem:esc:cor1}
Assume we made the right choice. Then $S'$ is an edge Steiner multicut of $(\tG,\mc{\tT})$. Moreover, for any edge Steiner multicut $X$ of $(\tG,\mc{\tT})$, $X \cup S''$ is an edge Steiner multicut of $(G,\mc{T})$.
\end{lemma}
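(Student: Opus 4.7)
The plan is to introduce a projection $\rho\colon V(G) \to V(\tG)\cup\{*\}$ that makes the correspondence between cuts in $G$ and cuts in $\tG$ precise. Define $\rho(w)=w$ if $w\in V(G')\setminus B$; $\rho(w)=[w]_{\mc B}$ if $w\in B$; and if $w\in V(G)\setminus V(G')$, let $\rho(w)$ be the class $[b]_{\mc B}$ of any border vertex $b$ lying in the component of $G''\setminus S''$ that contains $w$, setting $\rho(w)=*$ if that component contains no border vertex. A simple but crucial preliminary observation is that if $T_i$ is active and separated by $S$, then the component $C$ of $G\setminus S''$ containing $T_i$ must meet $V(G')$: otherwise $C\subseteq V(G)\setminus V(G')$, no edge of $S'\subseteq E(G')$ is incident to $C$, and so $S$ would fail to separate $T_i$. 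Consequently, every $u\in T_i$ reaches a border vertex in $G''\setminus S''$, so $\rho(u)\neq *$; combined with the definition of $\mc B_i^S$, one gets $\rho(u)\in\tT_i$.

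For the first claim of the lemma, fix $T_i\in\mc T^S$. Since $S$ separates $T_i$ while $S''$ does not, pick $u,v\in T_i$ in different components of $G\setminus S$. By the above, $\rho(u),\rho(v)\in\tT_i$. First, $\rho(u)\neq\rho(v)$: if they coincided, then $u$ and $v$ would both be connected in $G''\setminus S''$ to a common class, giving a $u$-$v$ walk in $G''\setminus S''\subseteq G\setminus S$, a contradiction. Second, suppose a path $\rho(u)=x_0,x_1,\dots,x_\ell=\rho(v)$ exists in $\tG\setminus S'$; lift each edge to some $\{a_j,b_j\}\in E(G')\setminus S'$. Between consecutive edges, either $x_j$ is an interior vertex and $b_j=a_{j+1}$, or $x_j$ is a class of $\mc B$ and both $b_j,a_{j+1}$ lie in that class, hence are connected in $G''\setminus S''$. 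Prepending a connection from $u$ to $a_1$ and appending one from $b_\ell$ to $v$ (both inside $G''\setminus S''$) produces a $u$-$v$ walk in $G\setminus S$, contradicting the separation. Hence $\rho(u)$ and $\rho(v)$ are separated in $\tG\setminus S'$, so $S'$ is indeed an edge Steiner multicut of $(\tG,\mc{\tT})$.

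For the second claim, let $X$ be an edge Steiner multicut of $(\tG,\mc{\tT})$ and fix $T_i\in\mc T$. If $T_i\notin\mc T^S$ then $T_i$ is already separated by $S''\subseteq X\cup S''$. Otherwise $\tT_i\in\mc{\tT}$, so $X$ separates some $\tilde u,\tilde v\in\tT_i$ in $\tG$. Lift them: if $\tilde u\in V(G')\setminus B$ set $u=\tilde u$, and if $\tilde u\in\mc B_i$ choose a terminal $u\in T_i$ that lies in the component of $G''\setminus S''$ containing $\tilde u$ (such a $u$ exists by definition of $\mc B_i^S$); define $v$ from $\tilde v$ analogously. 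Suppose for contradiction that $u$ and $v$ lie in a common component of $G\setminus(X\cup S'')$, via a path $u=w_0,\dots,w_m=v$. Apply $\rho$ vertexwise. Each edge $\{w_{j-1},w_j\}\in E(G')\setminus X$ yields an edge of $\tG\setminus X$ between $\rho(w_{j-1})$ and $\rho(w_j)$. Each maximal subpath made of $E(G'')$-edges avoids $S''$ and therefore lies inside a single component of $G''\setminus S''$, so its endpoints map to the same class of $\mc B$ and the subpath collapses to a single vertex of $\tG$. Stitching the pieces produces a walk from $\rho(u)$ to $\rho(v)$ in $\tG\setminus X$; since by choice $\rho(u)=\tilde u$ and $\rho(v)=\tilde v$ (using again that $T_i$ is active, the endpoint-recovery arguments from the preliminary observation), this contradicts the $X$-separation of $\tilde u$ and $\tilde v$.

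The main obstacle is organizing the case analysis for vertices sitting in $V(G')\setminus B$, in $B$, or in $V(G)\setminus V(G')$, and in particular confirming that activeness of a terminal set forces the enveloping $G\setminus S''$-component to meet $V(G')$; without this, a terminal outside $V(G')$ could lack a representative in $\tT_i$. Once the projection $\rho$ and this component-meets-$V(G')$ fact are in place, both directions reduce to routine path-lifting through $G''\setminus S''$.
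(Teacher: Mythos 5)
Your proof is correct and follows essentially the same approach as the paper's: it expands paths in $\tG \setminus S'$ into walks in $G \setminus S$ for the first claim, and compresses paths in $G \setminus (X \cup S'')$ into walks in $\tG \setminus X$ for the second. The difference is organizational: the projection $\rho$ unifies the paper's three-way case analysis over where each terminal lives (inside $V(G')\setminus B$, in $B$, or outside $V(G')$) into a single vertex-wise map, and your preliminary observation explicitly justifies that an active terminal set's $G\setminus S''$-component meets $V(G')$, so every relevant terminal has a well-defined image under $\rho$ --- a fact the paper uses without comment when it says ``let $b$ and $b'$ be border vertices reachable in $G'' \setminus S''$.'' This is a genuine tidying, not a different route; the lifting and stitching steps match the paper's argument in substance.
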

\begin{proof}
Let $v,w$ be any two vertices of $\tG$ (possibly $v=w$) that are in the same connected component of $\tG \setminus S'$, and let $P$ be an arbitrary $v,w$-path in $\tG \setminus S'$. Observe that each border vertex of $\tG$ corresponds to a set of border vertices in $G'$, which are in the same connected component of $G'' \setminus S''$, since we made the right choice. Hence, $P$ can be `expanded' into a $v,w$-walk of $G \setminus S$, and thus $v,w$ are in the same connected component of $G \setminus S$.

Suppose that $\tT_{i}$ is not separated in $\tG \setminus S'$. Consider any two terminals $s,s' \in T_{i}'$. We show that $s$ and $s'$ are in the same connected component of $G \setminus S$. Then there are several cases:
\begin{itemize}
\item if $s,s' \in V(G'')$, then let $b$ and $b'$ be border vertices reachable in $G'' \setminus S''$ from $s$ and $s'$ respectively (note that possibly $b=b'$). By construction, $b,b' \in \tT_{i}$, and since $\tT_{i}$ is not separated in $\tG \setminus S'$, there is a path between $b$ and $b'$ in $\tG \setminus S'$. By the above observation, there is a path between $b$ and $b'$ in $G \setminus S$. Since $b$ and $b'$ are reachable in $G'' \setminus S''$ from $s$ and $s'$ respectively, it follows that $s$ and $s'$ are in the same connected component of $G \setminus S$.
\item if $s,s' \in V(G')$, then $s,s' \in \tT_{i}$. Hence, $s,s'$ are in the same connected component of $\tG \setminus S'$, and by the above observation, in the same connected component of $G \setminus S$.
\item if $s \in V(G')$ and $s' \in V(G'')$, then by combining the reasoning of the above two cases, we can again show that $s$ and $s'$ are in the same connected component of $G \setminus S$.
\end{itemize}
It follows that all terminals of $T_{i}'$ are in the same connected component of $G \setminus S$, contradicting that $S$ is an edge Steiner multicut of $(G,\mc{T})$. Therefore, $S'$ is an edge Steiner multicut of $(\tG,\mc{\tT})$.

Let $X$ be any edge Steiner multicut of $(\tG,\mc{\tT})$. 
Suppose that $T_{i}'$ is not separated in $G \setminus (X \cup S'')$. Consider any two terminals $s,s' \in \tT_{i}$. Define $z\in V(G)$ such that $z=s$ if $s$ is not a border vertex, or such that $z$ is a terminal of $T_{i}'$ in the connected component of $G''\setminus S''$ that contains $s$ otherwise. Define $z'$ similarly. Since we made the right choice and by construction, $z$ and $z'$ are properly defined. Since $T_{i}'$ is not separated in $G \setminus (X \cup S'')$, there is a path $P$ in $G \setminus (X \cup S'')$ between $z$ and $z'$. If $s$ (respectively $s'$) is a border vertex, then $P$ contains $s$ (respectively $s'$). Hence, there is a path $P'$ between $s$ and $s'$ in $G \setminus (X \cup S'')$. Since we made the right choice, any subpath of $P'$ that lies in $G'' \setminus S''$ and goes between two border vertices, can be replaced by the vertex into which these two border vertices were identified. Hence, $P'$ can be `compressed' into an $s,s'$-walk of $\tG \setminus X$, and thus $s,s'$ are in the same connected component of $\tG \setminus X$. It follows that all terminals of $\tT_{i}$ are in the same connected component of $\tG \setminus X$, contradicting that $X$ is an edge Steiner multicut of $(\tG,\mc{\tT})$. Therefore, $X \cup S''$ is an edge Steiner multicut of $(G,\mc{T})$.\end{proof}
Now the algorithm uses exhaustive enumeration to find a smallest edge Steiner multicut of $(\tG,\tT)$ of size at most $k$ (if one exists) in $t(qk)^{O(k)}$ time. Mark this set of edges in $G'$.

By Lemma~\ref{lem:esc:cor1}, if we made the right choice, we mark a set $Z$ such that $Z \cup S''$ is a smallest edge Steiner multicut of $(G,\mc{T})$.

\medskip\noindent\textbf{Case 2:}\ 
If $|V(G')| > q(k+1)$, then a more complicated approach is needed,
because we cannot just use exhaustive enumeration. 
In fact, we cannot work with $(\tG,\mc{\tT})$ directly here, as $\tG$ might have a $(q,k)$-good edge separation, even though $G'$ does not.

We proceed as follows. Apply Lemma~\ref{lem:esc:family} with $a = q$ and $b = 2k$ to $G'$ with respect to $Y := S'$, and let $\mc{F}$ be the resulting family. Note that $Y=S'$ is a minimal separator, $|V(G')| > q(k+1) = a(b/2 + 1)$, and $G'$ has no $(a,b/2)$-good edge separation, and thus the lemma indeed applies. Consider an arbitrary $F \in \mc{F}$. We augment our definition of the \emph{right choice} by adding the condition that $F = F_{0}$, where $F_{0}$ is the family that Lemma~\ref{lem:esc:family} promises exists in $\mc{F}$. Now find $G'_{F}$. If $h_{F}$ does not exist in $G'_{F}$, then we proceed to the next set $F$, as Lemma~\ref{lem:esc:family} promises that $h_{F}$ exists if we made the right choice.

We call a set $X \subseteq E(G'_{F})$ an \emph{all-or-nothing cut} if for each connected component $C'$ of $G'_{F} \setminus \{h_{F}\}$, $X$ either contains all edges of $G'_{F}[C' \cup \{h_{F}\}]$ or none of these edges. Note that Lemma~\ref{lem:esc:family} promises that $S'$ is an all-or-nothing cut in $G'_{F}$ for $F = F_{0} \in \mc{F}$.

Let $\mc{C}$ denote the set of connected components of $G'_{F} \setminus \{h_{F}\}$ that contain a vertex onto which a border vertex was contracted. Let $Y \subseteq \bigcup_{C \in \mc{C}} E(G'_{F}[C \cup \{h_{F}\}])$ be an arbitrary set of edges that contains for each $C \in \mc{C}$ either all edges of $G'_{F}[C \cup \{h_{F}\}]$ or none of these edges. Note that $Y$ is basically an all-or-nothing cut restricted to the edges induced by $\mc{C}$. The algorithm will consider all possible choices of $Y$. We augment our definition of the \emph{right choice} again, by adding the condition that $Y = Y_{0}$, where $Y_{0} := S' \cap (\bigcup_{C \in \mc{C}} E(G'_{F}[C \cup \{h_{F}\}]))$.

Now the algorithm deletes every edge in $Y$ and contracts every edge in $(\bigcup_{C \in \mc{C}} E(G'_{F}[C \cup \{h_{F}\}])) \setminus Y$. Denote the resulting graph by $H_{F}$. Let $\hG$ be the graph obtained from $H_{F}$ by identifying two border vertices if they are in the same equivalence class of $\mc{B}$. This also compresses each $\mc{B}_{i}$ into a set of vertices, which by abuse of notation, we denote by $\mc{B}_{i}$ as well. For $i=1,\ldots,t'$, let $\hT_{i}$ be equal to $\mc{B}_{i} \cup (T_{i}' \cap (V(\hG) \setminus B))$. Then $\mc{\hT}$ consists of all $\hT_{i}$ that are not already separated in $H_{F}$.
We verify that no terminal set in $\mc{\hT}$ is a singleton; otherwise, we can continue with the next branch.

\begin{lemma} \label{lem:esc:cor2}
Assume that we made the right choice. Then $S' \setminus Y$ is an edge Steiner multicut of $(\hG,\mc{\hT})$ that is an all-or-nothing cut. Moreover, for any edge Steiner multicut $X$ of $(\hG,\mc{\hT})$, $Y \cup X \cup S''$ is an edge Steiner multicut of $(G,\mc{T})$.
\end{lemma}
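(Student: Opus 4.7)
My plan is to prove the two halves of the lemma in parallel with Lemma~\ref{lem:esc:cor1}, with $\hG$ here playing the role of $\tG$ there. The new difficulty is that $\hG$ is obtained from $G'$ through several additional identification mechanisms: the $F$-contractions, the $h_F$ identification of hubs, and the contractions of $(\bigcup_{C\in\mc{C}} E(G'_F[C \cup \{h_F\}])) \setminus Y$. I will show that each of these corresponds to a walk in $G \setminus S$ (for Part~1) or in the relevant cut-free subgraph of $G$ (for Part~2), so that the lifting and projection arguments from Lemma~\ref{lem:esc:cor1} go through in this richer setting. Properties~(1), (3) and~(4) of Lemma~\ref{lem:esc:family} are what make this possible: (1) ensures no $S'$-edge is ever contracted, (3) guarantees that the vertices fused into $h_F$ lie in one component of $G' \setminus S'$, and (4) provides the all-or-nothing structure of $S'$ in $G'_F$.

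\textbf{Part~(1).} I would first verify that $S' \setminus Y \subseteq E(\hG)$: property~(1) ensures $S'$ survives the $F$-contractions that produce $G'_F$, and the right-choice definition $Y = S' \cap \bigcup_{C \in \mc{C}} E(G'_F[C \cup \{h_F\}])$ makes $S' \setminus Y$ disjoint from $\bigcup_{C \in \mc{C}} E(G'_F[C \cup \{h_F\}])$, so it avoids both the deletion of $Y$ and the subsequent contractions used to form $H_F$; border-vertex identification removes no edges. For the all-or-nothing property I argue that every component of $\hG \setminus \{h_F\}$ either corresponds to a component $C \not\in \mc{C}$ of $G'_F \setminus \{h_F\}$, where $Y$ is empty so $S' \setminus Y$ agrees with $S'$ and the property is inherited from Lemma~\ref{lem:esc:family}(4), or is an isolated singleton coming from a $C \in \mc{C}$ with the ``all'' choice of $Y$, where the property holds vacuously. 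For the multicut property, I mimic the argument of Lemma~\ref{lem:esc:cor1}: a hypothetical path between $s, s' \in \hT_i$ in $\hG \setminus (S' \setminus Y)$ lifts to a walk in $G \setminus S$ between corresponding terminals $z, z' \in T_i$ chosen via $i$-activity as in that lemma. Each $\hG$-edge traversed corresponds to an edge of $G' \setminus S'$, while each identification encountered becomes a non-$S$ walk: via edges of $F$, via Lemma~\ref{lem:esc:family}(3) for jumps through $h_F$, via edges of $(\bigcup_{C \in \mc{C}} \cdots) \setminus Y$, and via $G'' \setminus S''$-paths for border identifications. The contradiction against $S$ being a multicut of $(G, \mc{T})$ then follows as in Lemma~\ref{lem:esc:cor1}.

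\textbf{Part~(2).} Fix $T_i \in \mc{T}$. If $T_i \notin \mc{T}^S$, then $T_i$ is separated already in $G \setminus S''$, hence in $G \setminus (Y \cup X \cup S'')$. If $T_i \in \mc{T}^S$ and $\hT_i \in \mc{\hT}$, then by assumption $X$ separates $\hT_i$ in $\hG$; taking $s, s' \in \hT_i$ in different components of $\hG \setminus X$ and corresponding terminals $z, z' \in T_i$ as in Part~(1), I project a hypothetical $z$--$z'$ path in $G \setminus (Y \cup X \cup S'')$ through the identification sequence $G' \to G'_F \to H_F \to \hG$, using the same edge-by-edge correspondence as in Part~(1) but now with every deletion and contraction matched by an edge the path avoids; the result is a walk in $\hG \setminus X$ connecting $s$ and $s'$, a contradiction. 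If instead $T_i \in \mc{T}^S$ but $\hT_i \notin \mc{\hT}$, so $\hT_i$ is already separated in $H_F$, the same projection machinery applied to a hypothetical $z$--$z'$ path in $G \setminus (Y \cup S'')$ produces a walk certifying that the pre-images of the separating pair lie in a common component of $H_F$, contradicting the assumption.

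\textbf{Main obstacle.} The crux of the proof is making the lifting and projection rigorous across all four identification mechanisms simultaneously: $F$-contractions, the $h_F$ identification of hubs, $\mc{C}$-non-$Y$ contractions, and the border identification inherited from Lemma~\ref{lem:esc:cor1}. For each mechanism one must justify that whenever the projected or lifted walk jumps between identified vertices, the jump is realised by a walk in $G$ that avoids the relevant cut edges. Property~(3) of Lemma~\ref{lem:esc:family} plays the most subtle role, because the vertices merged into $h_F$ are not connected by any contracted edge: it is the guarantee that they all lie in a single component of $G' \setminus S'$ that certifies $h_F$-jumps can be realised by $S$-avoiding walks in $G$. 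The bulk of the writing will go to this layered bookkeeping.
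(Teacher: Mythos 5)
Your proposal follows the paper's proof essentially step for step: establish the all-or-nothing property of $S'\setminus Y$ by tracking it through the sequence $G'_F \to H_F \to \hG$ (using the fact that border vertices of $H_F$ are isolated or equal to $h_F$), then prove the two multicut directions by the same expansion (lift to a walk in $G\setminus S$) and compression (project a $G\setminus(Y\cup X\cup S'')$-path to a walk in $\hG\setminus X$) arguments as in Lemma~\ref{lem:esc:cor1}, with properties (1), (3), (4) of Lemma~\ref{lem:esc:family} supplying exactly the jump-realisability guarantees you list. The one place you diverge is the explicit case split in Part~(2) on whether $\hT_i\in\mc{\hT}$; the paper does not separate this case out, and your claimed resolution for $\hT_i\notin\mc{\hT}$ is not quite airtight --- the projected walk lives in $\hG$, not in $H_F$, because border-vertex identifications in $\hG$ can merge $H_F$-components, so ``a walk certifying the pre-images lie in a common component of $H_F$'' does not directly follow. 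That said, this is a subtlety the paper's own exposition also glosses over, and it does not represent a departure from the intended argument; the overall structure, key lemmas, and lifting/projection mechanics are the same as the paper's.
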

\begin{proof}
Since $Y$ is an all-or-nothing cut of $G'_{F}$ and we made the right choice, it is immediate from Lemma~\ref{lem:esc:family} that $S'\setminus Y$ is an all-or-nothing cut of $G'_{F}$ as well. Since $(\bigcup_{C \in \mc{C}} E(G'_{F}[C \cup \{h_{F}\}])) \setminus Y$ is also an all-or-nothing cut of $G'_{F}$, which is disjoint from $S'$ (as we made the right choice), $S' \setminus Y$ is an all-or-nothing cut of $H_{F}$. Notice that border vertices of $H_{F}$ are either isolated or equal to $h_{F}$. By the construction of $\hG$, $S'\setminus Y$ is an all-or-nothing cut of $\hG$.

The remainder of the proof is similar to Lemma~\ref{lem:esc:cor1}, but more involved due to the more complex construction of $\hG$.

Let $v,w$ be any two vertices of $\hG$ (possibly $v=w$) that are in the same connected component of $\hG \setminus (S' \setminus Y)$, and let $P$ be an arbitrary $v,w$-path in $\hG \setminus (S' \setminus Y)$. Observe that each border vertex of $\hG$ corresponds to a set of border vertices in $H_{F}$, which are in the same connected component of $G'' \setminus S''$, since we made the right choice. Hence, $P$ can be `expanded' into a $v,w$-walk $W$ of $(G'' \setminus S'') \cup (H_{F} \setminus (S' \setminus Y))$. By the construction of $H_{F}$, $W$ can be `expanded' into a $v,w$-walk $W'$ of $(G'' \setminus S'') \cup (G'_{F} \setminus S')$. Because we made the right choice, by Lemma~\ref{lem:esc:family} all vertices identified into $h_{F}$ come from the same connected component of $G' \setminus S'$ and $F \cap S' = \emptyset$, and thus $W'$ can be `expanded' into a $v,w$-walk of $(G'' \setminus S'') \cup (G' \setminus S')$. Hence, $v,w$ are in the same connected component of $G \setminus S$.

Suppose that $\hT_{i}$ is not separated in $\hG \setminus (S' \setminus Y)$. Consider any two terminals $s,s' \in T_{i}'$. We show that $s$ and $s'$ are in the same connected component of $G \setminus S$. There are several cases:
\begin{itemize}
\item if $s,s' \in V(G'')$, then let $b$ and $b'$ be border vertices reachable in $G'' \setminus S''$ from $s$ and $s'$ respectively (note that possibly $b=b'$). By construction, $b,b' \in \hT_{i}$, and since $\hT_{i}$ is not separated in $\hG \setminus (S' \setminus Y)$, there is a path between $b$ and $b'$ in $\hG \setminus (S'\setminus Y)$. By the above observation, there is a path between $b$ and $b'$ in $G \setminus S$. Since $b$ and $b'$ are reachable in $G'' \setminus S''$ from $s$ and $s'$ respectively, it follows that $s$ and $s'$ are in the same connected component of $G \setminus S$.
\item if $s,s' \in V(G')$, then $s,s' \in \hT_{i}$. Hence, $s,s'$ are in the same connected component of $\hG \setminus (S' \setminus Y)$, and by the above observation, in the same connected component of $G \setminus S$.
\item if $s \in V(G')$ and $s' \in V(G'')$, then by combining the reasoning of the above two cases, we can again show that $s$ and $s'$ are in the same connected component of $G \setminus S$.
\end{itemize}
It follows that all terminals of $T_{i}'$ are in the same connected component of $G \setminus S$, contradicting that $S$ is an edge Steiner multicut of $(G,\mc{T})$. Therefore, $S' \setminus Y$ is an edge Steiner multicut of $(\hG,\mc{\hT})$.

Let $X$ be any edge Steiner multicut of $(\hG,\mc{\hT})$. 
Suppose that $T_{i}'$ is not separated in $G \setminus (Y \cup X \cup S'')$. Consider any two terminals $s,s' \in \hT_{i}$. Define $z \in V(G)$ such that $z=s$ if $s$ is not a border vertex, and such that $z$ is a terminal of $T_{i}'$ in the connected component of $G''\setminus S''$ that contains $s$ otherwise. Define $z'$ similarly. Since we made the right choice and by construction, $z$ and $z'$ are properly defined. Since $T_{i}'$ is not separated in $G \setminus (Y \cup X \cup S'')$, there is a path $P$ in $G \setminus (Y \cup X \cup S'')$ between $z$ and $z'$. If $s$ (respectively $s'$) is a border vertex, then $P$ contains $s$ (respectively $s'$). Hence, there is a path $P'$ between $s$ and $s'$ in $G \setminus (Y \cup X \cup S'')$. Observe that $P'$ consists of several subpaths in $G' \setminus (Y \cup X)$ and several in $G'' \setminus S''$. Let $Q$ be any maximal subpath of $P'$ in $G' \setminus (Y\cup X)$. Since $G'_{F}$ is obtained from $G'$ by contracting edges and identifying vertices, $Q$ corresponds to a walk $W$ in $G'_{F} \setminus (Y\cup X)$ between the same vertices. By the construction of $H_{F}$, $W$ corresponds to a walk $W'$ in $H_{F} \setminus X$ between the same vertices. Finally, by the construction of $\hG$, $W'$ corresponds to a walk in $\hG \setminus X$ between the same vertices. Since we made the right choice, any subpath of $P'$ that lies in $G''$ and goes between two border vertices, can be replaced by the vertex of $\hG$ into which these two border vertices were identified. Hence, $P'$ can be `compressed' into an $s,s'$-walk of $\hG \setminus X$, and thus $s,s'$ are in the same connected component of $\hG \setminus X$. It follows that all terminals of $\hT_{i}$ are in the same connected component of $\hG \setminus X$, contradicting that $X$ is an edge Steiner multicut of $(\hG,\mc{\hT})$. Therefore, $X \cup S''$ is an edge Steiner multicut of $(G,\mc{T})$.
\end{proof}

We now aim to find a smallest edge Steiner multicut $X$ of $(\hG,\mc{\hT})$ that is an all-or-nothing cut.
Let $\{C'_{1},\ldots,C'_{u}\}$ be the set of connected components of $\hG \setminus \{h_{F}\}$. Let $\mc{\hT}|_{i}$ denote the set of terminal sets in $\mc{\hT}$ that are separated if one removes all edges of $E(\hG[\{h_{F}\} \cup C'_{i}])$ from $G'_{F}$. Define $z[\mc{U},i]$, where $\mc{U} \subseteq \mc{\hT}$ and $1 \leq i \leq u$, as the size of the smallest all-or-nothing cut of the terminal sets in $\mc{U}$ using only edges in or going out of $C'_{1},\ldots,C'_{i}$. Then for any $\mc{U}\subseteq \mc{\hT}$,
$$z[\mc{U},1] = \left\{ \begin{array}{ll}
\infty & \mbox{if $\mc{U} \not\subseteq \mc{\hT}|_{1}$} \\
|E(G'_{F}[\{h_{F}\} \cup C'_{1}])| & \mbox{otherwise (\ie if $\mc{U} \subseteq \mc{\hT}|_{1}$)}
\end{array}\right.$$
and for $i > 1$,
$$z[\mc{U},i] = \min\left\{ z[\mc{U},i-1],\ |E(G'_{F}[\{h_{F}\} \cup C'_{i}])| + z[\mc{U} \setminus \mc{\hT}|_{i}, i-1] \right\}.$$
Note that $z[\mc{\hT},u]$ holds the size of the smallest edge Steiner multicut of $(\hG,\hT)$ that is an all-or-nothing cut (if one exists). Finding the set achieving this smallest size is straightforward from the dynamic-programming table. Finally, over all choices of $F$ and all choices of $Y$, mark in $G'$ the smallest set of edges that was found if it has size at most $k$.

By Lemma~\ref{lem:esc:cor2}, if we made the right choice, we mark a set $Z$ such that $Z \cup S''$ is a smallest edge Steiner multicut of $(G,\mc{T})$.

\medskip
In both cases, let $M$ be the set of marked edges. Now we contract all unmarked edges $E(G') \setminus M$ in $G$. 
Let $\tilde{G}$ denote the resulting graph; note that in general $\tilde{G}$ is a multigraph. Each time we contract an edge between two vertices $u$ and $v$, we replace $u$ and $v$ by $uv$ in all terminal sets in $\mc{T}$. Let $\tilde{\mc{T}}$ denote the resulting set of terminal sets. 

Observe that if a terminal set $\tilde{T}_{i}$ in $\tilde{\mc{T}}$ is a singleton set and $T_{i}$ was not a singleton set in $\mc{T}$, then we can answer ``no''.
Indeed, if $(G,\mc{T},k)$ would be a ``yes''-instance, then from the construction of $M$, for any smallest edge Steiner multicut $S$ of $(G,\mc{T})$ there is a smallest edge Steiner multicut $Z \cup (S \setminus E(G'))$ of $(G,\mc{T})$ such that $Z \subseteq M$. In particular, since $T_{i} \subseteq V(G')$ for it to be contracted to a single vertex, $Z$ (and thus also $M$) is an edge Steiner multicut of $(G',\{T_{i}\})$, which contradicts that all vertices of $T_{i}$ are contracted into a single vertex.

Now it remains to prove that $(G,\mc{T},k)$ is a ``yes''-instance if and only if $(\tilde{G},\tilde{\mc{T}},k)$ is. Let $\tilde{S}$ be an edge Steiner multicut of $(\tilde{G},\tilde{\mc{T}})$ of size at most $k$. Since $\tilde{G}$ is obtained from $G$ by a sequence of edge contractions (without deleting parallel edges), there is a natural mapping from $E(\tilde{G})$ to $E(G)$. Let $S$ denote the set of edges obtained from $\tilde{S}$ through this mapping. If $S$ is not an edge Steiner multicut of $(G,\mc{T})$, then there is a path between in $G \setminus S$ between any two terminals of some terminal set of $\mc{T}$. However, because of the way $S$ was obtained from $\tilde{S}$, any path in $G\setminus S$ corresponds to a walk in $\tilde{G}\setminus \tilde{S}$. Hence, there is a walk in $\tilde{G} \setminus \tilde{S}$ between any two terminals of this terminal set in $\tilde{\mc{T}}$, a contradiction. The converse follows immediately from the construction of $M$, \ie that for any smallest edge Steiner multicut $S$ of $(G,\mc{T})$ there is a smallest edge Steiner multicut $Z \cup (S \setminus E(G'))$ of $(G,\mc{T})$ such that $Z \subseteq M$.

Finally, we analyze the running time of our algorithm. 
Since there are at most $2k$ border vertices and $t$ terminal sets, there are $r = 2^{O(kt\log k)}$ different branches that we consider for $\mc{T}'$, $\mc{B}$, and $\mc{B}_{i}$, and in each we mark at most $k$ edges. Choose $q = rk+1$. Now note that $|E(G')| \geq q$:
\begin{itemize}
\item if $G=G'$, then this follows from the assumption that $|E(G)| > q$.
\item if $G\not=G'$, then $G'$ was obtained after considering multiple $(q,k)$-good separations. Hence, $|V(G')| > q$, and since $G'$ is connected, $|E(G')| \geq q$.
\end{itemize}
Since $q = rk+1$, at least one edge of $G'$ was not marked and thus contracted. Therefore, $|V(G)|$ decreases by at least one, and the entire procedure finishes after at most $|V(G)|$ iterations.

To analyze the running time, observe that the dynamic-programming procedure requires $2^{O(t)}k + O(t|E(G)|)$ time. The family $\mc{F}$ has $2^{O(k^{2}t\log k)}\ \log\, |E(G)|$ sets and can be constructed in $2^{O(k^{2}t\log k)}\ |E(G)|\ \log |E(G)|$ time. Hence, Case~2 runs in $2^{O(k^{2}t\log k)}\ |E(G)|\ \log |E(G)|$ time. Case~1 runs in $2^{O(k^{2}t\log k)}$ time. Since there are $r = 2^{O(kt\log k)}$ different branches for $\mc{T}'$, $\mc{B}$, and $\mc{B}_{i}$ that we consider, and it takes $2^{O(k^{2}t\log k)}\ |V(G)|^{4}\ \log |V(G)|$ time to find a $2k$-bordered subgraph that does not admit a $(q,k)$-good separation, each iteration takes $2^{O(k^{2}t\log k)}\ |V(G)|^{4}\ \log |V(G)|$ time. Since there are at most $|V(G)|$ iterations, the total running time is $2^{O(k^{2}t\log k)}\ |V(G)|^{5}\ \log |V(G)|$. Actually, using the recurrence outlined by Chitnis~\etal\cite{ChitnisEtAl2012}, one can show a bound on the running time of $2^{O(k^{2}t\log k)}\ |V(G)|^{4}\ \log |V(G)|$.

\section{Tractability for Edge Deletion and Parameter \texorpdfstring{$k+t$}{k + t}: Second Proof}
\label{sec:ktedge}

In this section, we give the second proof of Theorem~\ref{thm:intro:k+t-edge} through important separators and treewidth reduction.
Our presentation of the algorithm is optimized for readability, not for the final runtime.

We first show an easy reduction to the following problem variant.

\defparproblem{Multipedal Steiner Multicut}{$k+t + |Y|$}{A connected undirected graph $G$, a set $Y \subseteq V(G)$, terminal sets $\mc{T} = \{T_1,\hdots,T_t\subseteq V(G)\}$ with $T_i \cap Y \ne \emptyset$ for all $i$, an integer $k$.}{Find an \problemESC{} $S$ of $T_1,\ldots,T_t$ of size at most $k$.}

\begin{lemma} \label{thm:k+t-edge-reduction}
  \problemESC{} for the parameter $k+t$ is fixed-parameter tractable if \problemMPSC{} is fixed-parameter tractable.
\end{lemma}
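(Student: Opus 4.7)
The plan is essentially trivial once one notices that \problemMPSC{} is just \problemESC{} augmented with an auxiliary input $Y$ whose only constraint is that it intersects every terminal set; the multicut task itself is unchanged. So the entire reduction amounts to manufacturing such a $Y$ of size bounded by a function of $k+t$ without altering the set of feasible solutions.

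First, I would deal with the connectedness requirement of \problemMPSC{}. Given an instance $(G,\mc{T},k)$ of \problemESC{}, discard every $T_i$ whose terminals already lie in two or more connected components of $G$ (it is separated for free), and then split the remaining instance into one independent subproblem per connected component of $G$, restricting each surviving terminal set to its host component. The minimum multicut of the original instance is the sum of the minima of these subproblems, so it suffices to solve each on a connected graph; one can either sum the per-component minimum cut sizes directly, or branch over compositions of $k$ across the at most $t$ non-trivial components in $(k+1)^t$ ways. Either way, fixed-parameter tractability in $k+t$ is preserved.

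On a connected subproblem, for each $i = 1,\ldots,t$ pick an arbitrary representative $v_i \in T_i$ and set $Y := \{v_1,\ldots,v_t\}$. Then $T_i \cap Y \supseteq \{v_i\} \neq \emptyset$ and $|Y| \leq t$, so $(G, Y, \mc{T}, k)$ is a valid instance of \problemMPSC{}. Since the task in both problems is the identical ``find an \problemESC{} of $T_1,\ldots,T_t$ of size at most $k$'', the two instances have literally the same set of solutions. The \problemMPSC{} parameter $k+t+|Y|$ is bounded by $k+2t$, which is linear in the \problemESC{} parameter $k+t$, so any $f(k+t+|Y|)\cdot n^{O(1)}$ algorithm for \problemMPSC{} yields an $f(k+2t)\cdot n^{O(1)}$ algorithm for \problemESC{}. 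There is no real obstacle to overcome; the purpose of the lemma is purely organizational, namely to allow the subsequent proof to assume that every terminal set comes equipped with a designated ``pedal'' vertex in $Y$, a structural hypothesis that the algorithm for \problemMPSC{} will exploit.
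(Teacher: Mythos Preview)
Your proposal is correct and follows essentially the same approach as the paper: pick one vertex $v_i \in T_i$ for each $i$ and set $Y = \{v_1,\ldots,v_t\}$, so that $|Y| \le t$ and the \problemMPSC{} parameter $k+t+|Y|$ is bounded in $k+t$. The paper's proof is exactly this two-line argument; you additionally handle the connectedness hypothesis of \problemMPSC{} explicitly (by stripping already-separated terminal sets and splitting into connected components), a detail the paper silently glosses over.
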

\begin{proof}
  Note that the collection of terminal sets $\mc T = \{T_1\ldots,T_t\}$ has a hitting set $Y$ of size at most $t$ that can be computed in polynomial time (just pick any node $v_i \in T_i$ and set $Y := \{v_1,\ldots,v_t\}$).
  Since $|Y| \le t$ and $t$ is a parameter, this is indeed a parameterized reduction to \problemMPSC{}.
\end{proof}

In the rest of this section, we first show a fixed-parameter algorithm for the special case of \problemMPSC{} where $|Y| = 1$ (in Section~\ref{sec:unipedalsteinercut}), and then generalize this algorithm to \problemMPSC{} (in Section~\ref{sec:bipedalsteinercut}).
We will use the following simple fact.

\begin{lemma}
\label{lem:ScutsTatx}
  Let $S \subseteq E(G)$, $T$ be a terminal set and $x \in T$. Then $S$ cuts $T$ if and only if there is a node $v \in T$ such that $S$ cuts $v$ from $x$.
\end{lemma}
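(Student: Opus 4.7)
The plan is a direct two-sided verification from definitions. Recall that ``$S$ cuts $T$'' means that some pair of terminals in $T$ lies in different connected components of $G \setminus S$, and ``$S$ cuts $v$ from $x$'' is just the special case of this for the pair $\{v,x\}$.

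For the ``if'' direction, I would simply observe that if there is a node $v \in T$ with $v$ and $x$ in different components of $G \setminus S$, then since $x \in T$ as well, the pair $\{v, x\} \subseteq T$ already witnesses that $S$ cuts $T$.

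For the ``only if'' direction, I would start from any witnessing pair $u, w \in T$ of terminals separated by $S$ and argue by a one-line pigeonhole on the anchor $x$: the vertex $x$ lies in exactly one connected component of $G \setminus S$, so it cannot lie in the same component as both $u$ and $w$. Hence at least one of $u, w$ lies in a different component from $x$; choosing $v$ to be that vertex yields $v \in T$ with $S$ cutting $v$ from $x$.

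There is essentially no obstacle here; the lemma is just the standard observation that a separation of a set is equivalent to a separation from a fixed anchor, and it will presumably be invoked throughout the subsequent sections to reduce reasoning about terminal-set cuts to reasoning about $s$--$t$ cuts emanating from the pedal vertex $x$.
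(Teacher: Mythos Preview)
Your proof is correct and essentially matches the paper's: both unpack the definition and verify the two directions directly, with the only cosmetic difference being that the paper phrases the ``only if'' direction as a contrapositive (if every $v \in T$ is connected to $x$ in $G \setminus S$, then $T$ lies in one component), whereas you argue it directly via the witnessing pair $u,w$.
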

\begin{proof}
  Recall that ``$S$ cuts $T$'' means that $S$ cuts some pair of nodes in $T$. 
  Thus, if there is a node $v \in T$ such that $S$ cuts $v$ from $x$ then clearly $S$ cuts $T$.
  Further, if no node $v \in T$ is cut from $x$ by $S$, then $T$ is a connected set in $G \setminus S$, so that $T$ is not cut by $S$.
\end{proof}

\subsection{Unipedal Steiner Multicut}
\label{sec:unipedalsteinercut}

\newcommand{\ol}{\overline}

We first show how to solve the special case of \problemMPSC{} for $|Y| = 1$.

\defparproblem{\problemUSC{}}{$k+t$}{A connected undirected graph $G$, a node $y \in V(G)$, terminal sets $T_1,\hdots,T_t\subseteq V(G)$ with $y \in T_i$ for all $i$, an integer $k$.}{Find an \problemESC{} $S$ of $T_1,\ldots,T_t$ of size at most $k$.}

\begin{theorem} \label{thm:USCfpt}
  \problemUSC{} is fixed-parameter tractable.
\end{theorem}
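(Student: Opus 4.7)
The plan is to design a recursive branching algorithm on important separators rooted at $y$, with correctness established by a structural decomposition of any minimum solution. First, by Lemma~\ref{lem:ScutsTatx}, since $y \in T_i$ for every $i$, a set $S \subseteq E(G)$ is a valid edge Steiner multicut of $(G,\mc{T})$ if and only if for each $i$ the component $R(y,S)$ of $y$ in $G \setminus S$ fails to contain all of $T_i$. Equivalently, we seek a set $R \ni y$ with boundary $\delta(R)$ of size at most $k$ such that $T_i \not\subseteq R$ for every $i$, where $\delta(R)$ denotes the set of edges with exactly one endpoint in $R$. This reformulation turns the problem into a constrained small-boundary region search around $y$.

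The algorithmic core is a recursive branching over important separators. At each node of the recursion, pick some terminal set $T_i$ that is not yet cut by the current partial solution; for an appropriate subset $X \subseteq T_i \setminus \{y\}$ and each important $y$-$X$ separator $C$ of size at most the remaining budget, branch by deleting $C$ from $G$ and recursing with budget reduced by $|C|$. Correctness hinges on the following structural lemma, which I plan to establish via a pushing argument: for any minimum solution $S^*$ of $(G,y,\mc{T},k)$ there exist an index $i$ and a set $X \subseteq T_i \setminus \{y\}$ such that, up to a size-preserving modification, $S^*$ contains an important $y$-$X$ separator. Starting from $\delta(R(y,S^*))$, which is a minimal $y$-$X$ separator for $X := T_i \cap (V(G) \setminus R(y,S^*))$ (nonempty since $S^*$ cuts $T_i$), the standard pushing lemma yields the desired important separator of no greater size.

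To achieve an FPT running time in $k+t$, the branching factor per recursive call must be bounded by some function $f(k,t)$. Marx's $4^k$ bound on the number of important $y$-$X$ separators of size $\le k$ takes care of enumerating $C$ for a fixed $X$; enumerating $X$ requires more work. I plan to bound $|X| \le k+1$ (the $y$-component is adjacent to at most $k$ other components in $G \setminus S^*$) and combine this with the treewidth-reduction techniques of Marx et al.~\cite{MarxEtAl2013} so that only $f(k,t)$ relevant subsets $X$ survive inside a bounded-size auxiliary structure; alternatively, a "first-index" strategy may sidestep the explicit enumeration of $X$ altogether, using instead a single important separator per chosen $T_i$ plus a minimal $s$-$t$ cut handled by bounded-treewidth dynamic programming. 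Since each branch removes at least one edge, the recursion depth is at most $k$, giving overall running time $f(k,t) \cdot \poly(|V(G)|)$.

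The main obstacle I anticipate lies in the proof of the structural lemma: replacing part of $S^*$ by the pushed important separator $C$ can enlarge the $y$-component from $R(y,S^*)$ to $R(y,C)$, potentially reconnecting $y$ to a terminal $v_j$ of some other set $T_j$ that was previously separated, and thereby breaking the cut of $T_j$. I expect to handle this by choosing the index $i$ strategically (for instance, by picking the $T_i$ whose non-$y$-side $T_i \setminus R(y,S^*)$ is minimal under inclusion, so that pushing never needs to enlarge across another terminal set's cut), or by a more intricate decomposition of $S^*$ into an important separator plus internal minimum $s$-$t$ cuts, the latter found by treewidth-reduction-based dynamic programming as suggested in the paper's outline.
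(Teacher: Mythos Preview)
Your proposal has a genuine gap, and the obstacle you yourself flag in the last paragraph is the crux of it. When you push $\delta(R(y,S^*))$ to an important $y$--$X$ separator $C$, you are maximizing $R^y(C)$, so $R(y,C) \supseteq R(y,S^*)$; this can indeed reconnect $y$ to terminals of other sets $T_j$, and neither of your proposed fixes is viable. Choosing $i$ ``strategically'' does not help: the components of $G\setminus S^*$ not containing $y$ can be nested in arbitrary ways with respect to the different $T_j$, so there need not be any $T_i$ whose non-$y$-side is inclusion-minimal. Falling back on treewidth reduction at this point essentially abandons the branching scheme and restarts with a different method. There is a second, independent problem with the running time: even if the pushing worked, bounding the number of relevant subsets $X \subseteq T_i \setminus \{y\}$ by $f(k,t)$ is not possible, since $|T_i|$ is unbounded in this problem (the parameter is $k+t$, not $k+t+p$), and your $|X|\le k+1$ bound does not hold---many terminals of $T_i$ can lie in a single component of $G\setminus S^*$ away from $y$.

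The paper resolves both issues by reversing the direction of the push and replacing branching by dynamic programming. Instead of important $y$--$X$ separators, it uses important $x$--$\{y\}$ separators (maximizing the $x$-side, hence \emph{shrinking} $R^y$); pushing in this direction can only cut more terminal sets, never fewer. The key structural lemma is that any $\{y\}$-closest cut (one with $R^y$ inclusion-minimal among cuts of its size) decomposes as a \emph{disjoint} union $S^* = S_1 \uplus \cdots \uplus S_m$ of such important separators, with $\ol R^y(S^*) = \biguplus_i \ol R^y(S_i)$. This additivity in both size and ``type'' (the set of indices $i$ with $T_i$ cut) makes a simple DP over subsets of $[t]$ correct: iterate once over all $O(4^k n)$ important $x$--$\{y\}$ separators, updating $\DP[I \cup \type(S)] \leftarrow \min\{\DP[I \cup \type(S)],\, \DP[I]+|S|\}$. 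The factor $n$ is absorbed into the polynomial part rather than the branching factor, giving $O^*(2^{2k+t})$ overall---precisely what your branching scheme cannot achieve.
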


Our algorithm for \problemUSC{} heavily relies on the notions of important separators and closest cuts, due to Marx and Razgon~\cite{MarxRazgon2011} and Marx\cite{Marx2011}. 

Let $G$ be an undirected graph and $Y \subseteq V(G)$.
A set $S \subseteq E(G)$ cuts $V(G)$ into two sets: $R^Y(S) = R^Y_G(S)$, the union of components of $G \setminus S$ that contain a node of $Y$ (i.e., the nodes reachable from $Y$ in $G \setminus S$), and $\ol R^Y(S) = \ol R^Y_G(S)$, the union of components of $G \setminus S$ disjoint from $Y$.

\begin{definition}
  A cut $S\subseteq V(G)$ a \emph{$Y$-closest cut} if it is inclusion-wise minimal (i.e., there is no $S' \subset S$ with $R^Y(S) = R^Y(S')$) and there is no $S' \subseteq E(G)$ with $|S'| \le |S|$ and $R^Y(S') \subset R^Y(S)$.
\end{definition}

Let $X,Y \subseteq V(G)$ be two disjoint sets.
A set $S \subseteq E(G)$ is an \emph{$X-Y$ separator} if no node in~$Y$ is reachable from a node in $X$ in $G \setminus S$.
For a node $x$ we also write $x-Y$ separator instead of $\{x\}-Y$ separator and $R^x(S)$ instead of $R^{\{x\}}(S)$.

\begin{definition}[\cite{Marx2006}]
  Let $X,Y \subseteq V(G)$ be two disjoint sets and $S \subseteq E(G)$ an $X-Y$ separator.
  We call $S$ an \emph{important $X-Y$ separator} if it is inclusion-wise minimal and there is no $X-Y$ separator $S'$ with $|S'| \le |S|$ and $R^X(S') \supset R^X(S)$. 
\end{definition}

The most valuable properties of important separators with size at most $\ell$ are that there are not too many of them, and that we can enumerate all of them in $O^*(f(\ell))$ time.
We remark that typically this is proven for the node deletion version of important separators, but a standard construction of the line graph (augmented with appropriate terminal node) shows that the same result also holds for the edge deletion variant that we consider in this section.

\begin{theorem}[\cite{MarxRazgon2011}]
\label{thm:numimpsep}
  Let $X,Y \subseteq V(G)$ be two disjoint sets.
  The number of important $X-Y$ separators of size at most $\ell$ is at most $4^\ell$.
  Furthermore, these important separators can be enumerated in time $O(4^\ell \ell (n+m))$.
\end{theorem}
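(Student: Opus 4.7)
The plan is to prove both the cardinality and the enumeration-time bound simultaneously by a single recursive branching algorithm that enumerates all important $X$-$Y$ separators of size at most $\ell$. The algorithm is driven by the unique $X$-closest minimum $X$-$Y$ cut $C_0$, and two submodular uncrossing lemmas do all the work.

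First I would compute $\lambda := \lambda_G(X,Y)$ by $\ell+1$ rounds of Ford--Fulkerson augmenting-path max-flow in $O(\ell(n+m))$ time and handle the base cases: return $\emptyset$ if $\lambda > \ell$ (no separator of size $\le \ell$ exists) and $\{\emptyset\}$ if $\lambda = 0$. Otherwise, a BFS from $X$ on the residual graph yields the unique $X$-closest minimum $X$-$Y$ cut $C_0 = \delta(R^X(C_0))$; uniqueness follows from the submodular identity that if $C_1, C_2$ are minimum $X$-$Y$ cuts, then $\delta(R^X(C_1) \cap R^X(C_2))$ is again a minimum $X$-$Y$ cut, so the family of $X$-sides of minimum cuts is closed under intersection and has a unique $\subseteq$-minimum. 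Then pick any edge $e = (u,v) \in C_0$ with $u \in R^X(C_0)$ (and hence $v \notin R^X(C_0)$) and branch twice: in \emph{Branch A}, commit $e \in S$ and recurse on $(G - e, X, Y, \ell - 1)$, reinserting $e$ into each returned separator; in \emph{Branch B}, commit $e \notin S$, skip if $v \in Y$, and otherwise recurse on $(G, X \cup \{v\}, Y, \ell)$. Return the union of the two returned families.

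Correctness rests on the key uncrossing lemma that every important $X$-$Y$ separator $S$ of size at most $\ell$ satisfies $R^X(C_0) \subseteq R^X(S)$. Otherwise $\delta(R^X(S) \cup R^X(C_0))$ would be an $X$-$Y$ separator $S''$ with $R^X(S'') \supsetneq R^X(S)$; combining submodularity of the edge boundary with the observation that $\delta(R^X(S) \cap R^X(C_0))$ is itself an $X$-$Y$ separator of size at least $\lambda$ gives $|S''| \le |S|$, contradicting the importance of $S$. Consequently in Branch B, $u \in R^X(C_0) \subseteq R^X(S)$ and $v \in R^X(S)$ through the edge $e \notin S$, so $S$ remains an important $(X \cup \{v\})$-$Y$ separator in the sub-instance, which justifies the recursion.

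For the counting and time bounds I would use the measure $\mu := 2\ell - \lambda$. In Branch A, removing the saturated edge $e \in C_0$ reduces the max flow by exactly one (every edge of a minimum cut is saturated by every max flow), so both $\ell$ and $\lambda$ drop by one and $\mu$ drops by one. For Branch B, a parallel submodular uncrossing shows that any hypothetical minimum $(X \cup \{v\})$-$Y$ cut $C^*$ of size $\lambda$ would yield, by intersecting with $C_0$, a minimum $X$-$Y$ cut whose $X$-reachable side is a proper subset of $R^X(C_0)$, contradicting the choice of $C_0$ as the $X$-closest minimum cut; hence $\lambda_G(X \cup \{v\}, Y) > \lambda$ and $\mu$ again drops by at least one. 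Since $\mu_0 \le 2\ell$, the recursion tree has depth at most $2\ell$ and therefore at most $2^{2\ell} = 4^\ell$ leaves. Each important $X$-$Y$ separator of size $\le \ell$ is produced at some leaf, giving the $4^\ell$ cardinality bound; each recursive call performs $O(\ell(n+m))$ work for the max-flow and residual-BFS extraction of $C_0$, so the total running time is $O(4^\ell\,\ell\,(n+m))$. The main technical obstacle is the Branch B measure step, where the submodular uncrossing needs careful case analysis on whether $v$ lies on the $X$-side of the hypothetical competing cut $C^*$, in order to exploit the unique-minimality of $R^X(C_0)$ and force the required contradiction.
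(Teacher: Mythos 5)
Your overall plan---enumerate important edge separators by a two-way branch on an edge of an extremal minimum cut, tracking the potential $\mu = 2\ell - \lambda$---is the standard approach, and most of the ingredients are right: the base cases, the Branch~A analysis, and the submodular proof of the pushing lemma. The paper itself offers no proof; it cites Marx--Razgon and remarks that the usual node-deletion argument transfers to edges via a line-graph construction, so a direct edge-version proof is a perfectly reasonable alternative route.

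However, there is a genuine gap in your Branch~B measure argument: you drive the branching by the $X$-\emph{closest} minimum cut $C_0$, and for that choice the claimed contradiction does not exist. If $C^*$ is a minimum $(X\cup\{v\})$-$Y$ cut of size $\lambda$, then $C^*$ is also a minimum $X$-$Y$ cut, and the closest-cut property gives $R^X(C_0) \subseteq R^X(C^*)$; intersecting the two source sides just returns $R^X(C_0)$---no proper subset, no contradiction, and $\lambda$ need not increase. A concrete counterexample is the path $x - a_1 - \cdots - a_n - y$ with $X=\{x\}$, $Y=\{y\}$, $\ell=1$: the closest cut is $\{(x,a_1)\}$, Branch~B sets $X:=\{x,a_1\}$, and $\lambda$ stays $1$, so $\mu$ does not decrease. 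Your recursion nests $\Theta(n)$ Branch~B's and runs in $\Theta(n(n+m))$ time rather than $O(4\cdot(n+m))$, violating the claimed time bound (here the $4^\ell$ \emph{count} is coincidentally not violated because the path has only one important separator of size $1$, but your recursion-tree argument for it is broken).

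The fix is to drive the branching by the $X$-\emph{furthest} minimum cut $C_{\max}$, i.e., the one whose source side $X_{\max} := R^X(C_{\max})$ is inclusion-maximal (obtained by a residual-graph BFS from $Y$, taking the complement, rather than a BFS from $X$). Your submodular pushing argument actually proves the stronger fact $R^X(S) \supseteq R^X(C)$ for \emph{every} minimum cut $C$, in particular $R^X(S) \supseteq X_{\max}$, so Branch~B remains justified. And now the measure step works: if $\lambda_G(X_{\max}\cup\{v\},Y)=\lambda$, then an inclusion-minimal minimum $(X_{\max}\cup\{v\})$-$Y$ cut has a source side $R \supseteq X_{\max}\cup\{v\} \supsetneq X_{\max}$ with $X\subseteq R$, $R\cap Y=\emptyset$, and $|\delta(R)|=\lambda$, contradicting the maximality of $X_{\max}$ among such sets. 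With this substitution the $4^\ell$ bound and the $O(4^\ell\,\ell\,(n+m))$ running time go through as you intended.
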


We now come to the key property of edge separators that we use for our fixed-parameter result.
It shows that any $Y$-closest set is a \emph{disjoint} union of important $x-Y$ separators, where~$x$ may range over all nodes in $V(G) \setminus Y$.

Fix a set $Y \subseteq V(G)$ and integer $\ell$ and let $x \in V(G) \setminus Y$.
We denote by $\cI_x$ the set of important $x-Y$ separators of size at most $\ell$.
Further set $\cI^Y_\ell := \bigcup_{x \in V(G) \setminus Y} \cI_x$.
We denote a \emph{disjoint} union by $\uplus$. 

\begin{lemma}
\label{lem:disjoint}
  Let $S$ be a $Y$-closest cut.
  Then there are $S_1,\ldots,S_m \in \cI^Y_\ell$ with $S = \biguplus_i S_i$ and $\ol R^Y(S) = \biguplus_i \ol R^Y(S_i)$.
\end{lemma}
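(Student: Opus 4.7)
The plan is to decompose $S$ according to the components of $\ol R^Y(S)$ in $G[\ol R^Y(S)]$. Let $C_1,\ldots,C_m$ be these components; for each $i$ set $S_i := \{e \in S : e \text{ has an endpoint in } C_i\}$, and pick an arbitrary $x_i \in C_i$. The first step is to use inclusion-wise minimality of $S$ to show that every edge of $S$ has exactly one endpoint in $R^Y(S)$ and one in $\ol R^Y(S)$: otherwise, removing it would preserve $R^Y(S)$, contradicting minimality. Two consequences follow at once: $S = \biguplus_i S_i$, and (call this fact $(\star)$) no edge of $G$ connects distinct components $C_i$ and $C_j$. Combining these with a direct reachability argument (in $G \setminus S_i$, each $C_j$ with $j \ne i$ retains a path to $Y$ via edges of $S_j \subseteq S \setminus S_i$, while $C_i$ is severed entirely from $R^Y(S)$) gives $\ol R^Y(S_i) = C_i$, so $\biguplus_i \ol R^Y(S_i) = \biguplus_i C_i = \ol R^Y(S)$ and also $R^{x_i}(S_i) = C_i$.

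It remains to show each $S_i$ is an important $x_i$--$Y$ separator, so that $S_i \in \cI^Y_\ell$ (the size bound being $|S_i| \le |S| \le \ell$). Inclusion-minimality of $S_i$ is immediate from $(\star)$: for any $e \in S_i$ with endpoints $u \in C_i$ and $v \in R^Y(S)$, the vertex $x_i$ reaches $Y$ in $G \setminus (S_i \setminus \{e\})$ by following a path in $G[C_i]$ from $x_i$ to $u$, crossing $e$, and then following a $v$-to-$Y$ path in $G \setminus S$; hence $S_i \setminus \{e\}$ is not an $x_i$--$Y$ separator. For the ``important'' condition, suppose toward a contradiction that some $x_i$--$Y$ separator $S_i^*$ satisfies $|S_i^*| \le |S_i|$ and $R^{x_i}(S_i^*) \supsetneq C_i$, and set $S^* := S_i^* \cup (S \setminus S_i)$, so $|S^*| \le |S|$. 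One checks that $\ol R^Y(S^*) \supseteq \ol R^Y(S)$: for $j \ne i$ the same argument as before applies, and for $j = i$ we have $C_i \subseteq R^{x_i}(S_i^*) \subseteq \ol R^Y(S_i^*) \subseteq \ol R^Y(S^*)$. Picking any $y \in R^{x_i}(S_i^*) \setminus C_i$, a path from $x_i$ to $y$ in $G \setminus S_i^*$ must leave $C_i$ through some edge, whose far endpoint -- by $(\star)$ -- is a vertex $w \in R^Y(S)$; this $w$ lies in $R^{x_i}(S_i^*) \cap R^Y(S) \subseteq \ol R^Y(S^*) \setminus \ol R^Y(S)$. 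Hence $R^Y(S^*) \subsetneq R^Y(S)$ with $|S^*| \le |S|$, contradicting that $S$ is $Y$-closest.

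The main obstacle is the importance step: one must ensure that any enlargement of the cut-off region around $x_i$ is witnessed \emph{inside} $R^Y(S)$, rather than merely extending into another component $C_j$ of $\ol R^Y(S)$. This is exactly what the structural property $(\star)$ guarantees, and $(\star)$ itself depends on the inclusion-wise minimality of $S$, which is therefore the linchpin of the whole argument.
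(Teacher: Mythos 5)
Your proof is correct and takes essentially the same route as the paper: decompose $S$ by the components $C_1,\ldots,C_m$ of $G\setminus S$ disjoint from $Y$, use inclusion-wise minimality of $S$ to show every edge of $S$ crosses between $R^Y(S)$ and $\ol R^Y(S)$ (whence $S=\biguplus_i S_i$ and no edge joins two $C_i$'s), and establish importance of each $S_i$ by a swap argument: replace $S_i$ by a hypothetical better $S_i^*$ and exhibit a witness vertex $w\in R^Y(S)\cap \ol R^Y(S^*)$ contradicting that $S$ is $Y$-closest. The only cosmetic difference is that you isolate the crossing property as a named fact $(\star)$ and locate the witness $w$ by walking a path out of $C_i$, whereas the paper takes $z\in N(C_i)\cap R^x(S_i')$ directly; both arguments are the same in substance.
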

\begin{proof}
  Let $S$ be a $Y$-closest cut.
  Let $C_1,\ldots,C_m$ be the components of $G \setminus S$ disjoint from $Y$.
  Let $S_i \subseteq S$ be the edges incident to a node in $C_i$ and a node in $R^Y(S)$.
  Since $C_i$ is not reachable from $Y$, the set $S_i$ consists of all $C_i - R^Y(S)$ edges in $G$.
  Clearly, the sets $S_i$ are pairwise disjoint.
  Further, if $S$ contains an edge in $G[R^Y(S)]$ or $G[\ol R^Y(S)]$ then the deletion of this edge does not change $R^Y(S)$, as this would contradict $S$ being inclusion-wise minimal.
  Hence, we have $S = \biguplus_i S_i$.
  Furthermore, this shows that $\ol R^Y(S) = C_1 \uplus \ldots \uplus C_m = \ol R^Y(S_1) \uplus \ldots \uplus \ol R^Y(S_m)$. 
  
  It remains to argue that the $S_i$ are important separators, i.e., that $S_i \in \cI^Y_\ell$ for all $i$.
  Consider any $x \in C_i$.
  Since $S_i$ consists of all $C_i - R^Y(S)$ edges in $G$, $C_i$ is a connected component, and~$S$ contains no edges in $G[\ol R^Y(S)]$, $S_i$ is an inclusion-wise minimal $x-Y$ separator. 
  Assume for the sake of contradiction that $S_i$ is not an important $x-Y$ separator.
  Then there exists a set $S'_i \subset E(G)$ such that $|S'_i| \le |S_i|$ and $R^{x}(S'_i) \supset R^{x}(S_i)$. 
  Note that $R^{x}(S'_i)$ contains a node~$z$ in $N(C_i) \subseteq R^Y(S_i)$ (otherwise $S_i'$ contains $S_i$).
  Further, observe that $z \not\in \ol R^Y(S)$ (otherwise~$z$ is in some component $C_j$ and in $N(C_i)$, so $S$ contains an edge in $G[\ol R^Y(S)]$, contradicting minimality).
  For notational purposes, set $S_j' := S_j$ for $j \ne i$.
  Consider the set $S' := \bigcup_j S_j'$.
  We have $|S'| \le  \sum_j |S'_j| \le |S|$ (this uses that $S$ is the \emph{disjoint} union of the $S_j$).
  Further, we have $\ol R^Y(S') \supseteq \bigcup_j \ol R^Y(S_j') \supset \ol R^Y(S)$, since $z$ is in $\ol R^Y(S'_i) \supseteq R^{x}(S_i')$ but not in $\ol R^Y(S)$.
  This contradicts $S$ being a $Y$-closest cut.
\end{proof}

Observe that the number of $Y$-closest cuts can be huge, e.g., the star with midpoint $y$ and~$n$ outgoing edges has ${n \choose \ell}$ $y$-closest cuts of size $\ell$.
However, the number of important separators is much smaller: Theorem~\ref{thm:numimpsep} shows that $|\cI^Y_\ell| \le 4^\ell n$.
Since Lemma~\ref{lem:disjoint} shows that $Y$-closest cuts are generated by important separators, we can optimize over $Y$-closest cuts, although their number may be huge.
We explain the details of this in the remainder of this section. 

We want to emphasize that, although the above lemma resembles the Pushing Lemma~\cite[Lemma 3.10]{MarxRazgon2011} at first glance, it does not hold for the node deletion variant of closest cuts and important separators (see Marx and Razgon~\cite{MarxRazgon2011} for the analogous definitions).
By inspection of the following example one can see that a closest cut is in general not a \emph{disjoint} union of important separators in the node deletion case.

\tikzset{vertex/.style={minimum size=2mm,circle,fill=black,draw,inner sep=0pt},
         decoration={markings,mark=at position .5 with {\arrow[black,thick]{stealth}}}}
\begin{figure}[htpb]
  \centering
    \begin{tikzpicture}[xscale=0.99, yscale=0.7]
      \node (y) at (0,0) [vertex,label=below:$y$]{};
      \node (s_1) at (-1,1) [vertex,label=left:$s_1$]{};
      \node (s_2) at (0,1) [vertex,label=left:$s_2$]{};
      \node (s_3) at (1,1) [vertex,label=right:$s_3$]{};
      \node (v_1) at (-0.5,2) [vertex,label=left:$v_1$]{};
      \node (v_2) at (0.5,2) [vertex,label=right:$v_2$]{};
      \draw(y)--(s_1);
      \draw(y)--(s_2);
      \draw(y)--(s_3);
      \draw(s_1)--(v_1);
      \draw(s_2)--(v_1);
      \draw(s_2)--(v_2);
      \draw(s_3)--(v_2);
    \end{tikzpicture}
  \caption{Example showing that a closest cut is in general \emph{not} a disjoint union of important separators in the node deletion case.}
  \label{fig:example}
\end{figure}
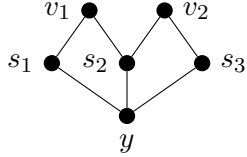

\medskip
\noindent\textbf{Example: Node deletion variant.}
\emph{Consider the graph depicted in Figure~\ref{fig:example}.
Then $\{s_1,s_2,s_3\}$ is a $y$-closest cut, which is the union of the important $v_1-y$ separator $\{s_1,s_2\}$ and the important $v_2-y$ separator $\{s_2,s_3\}$.
However, $\{s_1,s_2,s_3\}$ is not a \emph{disjoint} union of important separators.}
\medskip

We are now ready to present the fixed-parameter algorithm for \problemUSC{}.
Recall that we are given a graph $G$ with a special node $y \in V(G)$ and terminal sets $T_1,\ldots,T_t$ with $y \in T_i$ for all~$i$.
We want to find an \nameESC{} $S^*$ of size at most $k$.
Recall that a set $S \subseteq E(G)$ cuts a terminal set $T_i$ if and only if $S$ cuts a node $v \in T_i$ from $y$ (Lemma~\ref{lem:ScutsTatx}).

We define the \emph{type} of an important separator $S \in \cI^{\{y\}}_k$ as the set of all $i \in \{1,\hdots,t\}$ with $T_i \cap \ol R^{y}(S) \ne \emptyset$, i.e., the set of all $i$ such that $T_i$ is cut by~$S$, denoted by $\type(S)$. 
Our algorithm is a simple dynamic program where, after a trivial initialization, we iterate over all important separators $S \in \cI^{\{y\}}_k$ and update the table $\DP$ as follows:
\begin{algorithmic}
  \algnotext{EndFor}
  \State $\DP[\emptyset] := 0$
  \ForAll{$\emptyset \ne I \subseteq [t]$}
    \State $\DP[I] := \infty$
  \EndFor
  \ForAll{$S \in \cI^{\{y\}}_k$}
    \ForAll{$I \subseteq [t]$}
      \State $\DP[I \cup \type(S)] := \min\{\DP[I \cup \type(S)], \DP[I] + |S| \}$
    \EndFor
  \EndFor
  \Return $\DP[\{1,\ldots,t\}]$
\end{algorithmic}

By Theorem~\ref{thm:numimpsep} we have $|\cI^{\{y\}}_k| \le 4^k n$ and we can enumerate $\cI^{\{y\}}_k$ in $O^*(4^k)$ time.
Overall, the above algorithm runs in time $O^*(2^{2k + t})$.
We remark that the running time of the algorithm is $O(n(n+m))$ for fixed $k,t$.
It remains to prove correctness of the algorithm.

\begin{lemma}
\label{lem:edge:DP}
  The dynamic program returns a value of at most $k$ if and only if the optimal \nameESC{}~$S^*$ has size at most $k$ (in which case the return value coincides with $|S^*|$).
\end{lemma}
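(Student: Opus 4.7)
The plan is to show two inequalities: (i) if the DP returns a finite value $v \le k$, then there is an \nameESC{} of size at most $v$, and (ii) if there is an \nameESC{} $S^*$ with $|S^*| \le k$, then the DP returns a value at most $|S^*|$. Together these give both the ``if and only if'' and the equality $\DP[\{1,\ldots,t\}] = |S^*|$ when $|S^*| \le k$.

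For (i), I would first show by induction on the DP iterations that whenever $\DP[I] < \infty$, there exist separators $S_{i_1},\ldots,S_{i_r} \in \cI^{\{y\}}_k$ with $\bigcup_j \type(S_{i_j}) \supseteq I$ and $\sum_j |S_{i_j}| \le \DP[I]$. Then setting $S := \bigcup_j S_{i_j}$ yields a set of at most $\DP[I]$ edges, and for every $i \in I$ some $S_{i_j}$ already cuts $T_i$ at $y$ in the sense of Lemma~\ref{lem:ScutsTatx}; adding more edges can only shrink $R^{y}$, so $S$ still cuts $T_i$. Applying this with $I = \{1,\ldots,t\}$ gives the claim.

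For (ii), the key step is to choose $S^*$ carefully. Among all \nameESC{}s of size $|S^*|$, pick one, call it $S$, minimizing $|R^{y}(S)|$. I claim $S$ is a $\{y\}$-closest cut: otherwise there is $S'$ with $|S'| \le |S|$ and $R^{y}(S') \subsetneq R^{y}(S)$; since $y \in T_i$ for every $i$, $S$ cuts $T_i$ iff some $v \in T_i$ lies outside $R^{y}(S)$, and such $v$ must then also lie outside $R^{y}(S') \subset R^{y}(S)$, so $S'$ is a multicut of size $\le |S|$ with strictly smaller $R^{y}$, contradicting the choice of $S$. Now Lemma~\ref{lem:disjoint} (with $\ell = |S| \le k$) decomposes $S$ as a \emph{disjoint} union $S = \biguplus_{j=1}^{m} S_j$ with $S_j \in \cI^{\{y\}}_k$, and moreover $\ol R^{y}(S) = \biguplus_j \ol R^{y}(S_j)$. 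For each $i \in [t]$ some $v \in T_i$ lies in $\ol R^{y}(S)$, hence in some $\ol R^{y}(S_j)$, so $i \in \type(S_j)$; thus $\bigcup_j \type(S_j) = [t]$.

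Finally, consider the order in which the DP processes $S_1,\ldots,S_m$ (interspersed with other separators), say it encounters them as $S_{\pi(1)},\ldots,S_{\pi(m)}$. Set $I_\ell := \type(S_{\pi(1)}) \cup \cdots \cup \type(S_{\pi(\ell)})$, so $I_0 = \emptyset$ and $I_m = [t]$. By induction on $\ell$, after processing $S_{\pi(\ell)}$ we have $\DP[I_\ell] \le \sum_{j \le \ell} |S_{\pi(j)}|$: the update at step $\ell$ uses the current value $\DP[I_{\ell-1}]$, which by induction and monotonicity of the DP is at most $\sum_{j<\ell}|S_{\pi(j)}|$. Taking $\ell = m$ and using that the $S_j$ are disjoint, $\DP[\{1,\ldots,t\}] \le \sum_j |S_j| = |S| = |S^*|$. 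Combined with (i) this gives $\DP[\{1,\ldots,t\}] = |S^*|$ whenever $|S^*| \le k$, and the ``iff'' statement follows.

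The main obstacle is the argument that an optimal multicut can be chosen $\{y\}$-closest; once this is in place, Lemma~\ref{lem:disjoint} and the disjointness of the decomposition make the DP accounting straightforward, since the sum $\sum_j |S_j|$ equals $|S|$ exactly because the union is disjoint.
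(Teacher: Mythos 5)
Your proof is correct and follows essentially the same approach as the paper: assume WLOG that the optimal cut is a $\{y\}$-closest cut, apply Lemma~\ref{lem:disjoint} to decompose it into a disjoint union of important separators, observe that the types cover $\{1,\ldots,t\}$, and account for the DP; the reverse direction argues that any value produced by the DP corresponds to a union of important separators whose combined type is $\{1,\ldots,t\}$. You spell out the DP induction and the $\{y\}$-closest argument more explicitly than the paper's terse version, but these are elaborations of the same steps, not a different route.
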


Note that determining the size of an optimal solution suffices because of self-reducibility.
With Lemma~\ref{lem:edge:DP}, we also finish the proof of Theorem~\ref{thm:USCfpt}.

\begin{proof}[Proof of Lemma~\ref{lem:edge:DP}]
  Suppose that the optimal \nameESC{}~$S^*$ has size at most $k$.
  We can assume that $S^*$ is a $\{y\}$-closest cut.
  Otherwise, we can replace $S^*$ by a $\{y\}$-closest cut $S'$ with the same (or lower) cost and $R^y(S') \subset R^y(S^*)$, and thus $S'$ still cuts all sets $T_i$ that are cut by $S^*$, see Lemma~\ref{lem:ScutsTatx}.
  By Lemma~\ref{lem:disjoint}, there are important separators $S_1,\ldots,S_m \in \cI^{\{y\}}_k$ such that $S^* = \biguplus_i S_i$, implying that $|S^*| = \sum_i |S_i|$, and $\ol R^y(S^*) = \biguplus_i \ol R^y(S_i)$.
  Note that the latter implies that $\type(S^*) = \bigcup_i \type(S_i)$.
  As $S^*$ cuts all terminal sets $T_i$ we have $\type(S^*) = \{1,\hdots,t\}$, showing $\bigcup_i \type(S_i) = \{1,\hdots,t\}$.
  Hence, $\bigcup_i S_i$ appears as one term in $\DP[\{1,\ldots,t\}]$, so that we return a value $\DP[\{1,\ldots,t\}] \le \sum_i |S_i| = |S^*| \le k$. 
  
  For the other direction, suppose that we return $\DP[\{1,\ldots,t\}] = \ell$.
  Then there are important separators $S_1,\ldots,S_m \in \cI^{\{y\}}_k$ such that $\sum_i |S_i| = \ell$ and $\bigcup_i \type(S_i) = \{1,\hdots,t\}$.
  Consider $S := \bigcup_i S_i$
  Since $\ol R^y(S) \supseteq \bigcup_i \ol R^y(S_i)$ we have $\type(S) \supseteq \bigcup_i \type(S_i) = \{1,\hdots,t\}$, so $S$ is a valid \nameESC{}.
  Further, $|S| \le \sum_i |S_i|$, so we proved the existence of an \nameESC{} of size at most $\DP[\{1,\ldots,t\}]$.
  Together with the first direction this proves the claim. 
\end{proof}

\subsection{Multipedal Steiner Multicut}
\label{sec:bipedalsteinercut}
\newcommand{\torso}{\textup{torso}}

In this section, we show a fixed-parameter algorithm for \problemMPSC{} in general.

We first show that any inclusion-wise minimal \nameESC{} $S$ can be split into two disjoint parts, one part consisting of a union of minimal $x-y$ separators for some $x,y \in Y$, and one part where $Y$ is not cut at all. 
For this, we denote by $Z$ the union of all edges in $G$ that appear in a minimal $x-y$ cut of size at most $k$ for some $x,y \in Y$. 

\begin{lemma} \label{lem:splitmincutunipedal}
  Let $S$ be an inclusion-wise minimal \nameESC{} of $G$. Then we can write $S = S' \uplus S''$ where:
  \begin{enumerate}[label=(\arabic*),noitemsep,topsep=1pt]
    \item $S' \subseteq Z$,
    \item $Y$ is contained in one component of $G \setminus S''$,
    \item $\ol R^y(S) = \ol R^y(S') \cup \ol R^y(S'')$ for all $y \in Y$.
  \end{enumerate}
\end{lemma}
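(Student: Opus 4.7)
The natural candidate is $S' := S \cap Z$ and $S'' := S \setminus Z$, which automatically partitions $S$ and makes property~(1) immediate from the definition.

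For (2), I argue by contradiction. If $Y$ were split in $G \setminus S''$, say $y, y' \in Y$ lay in different components, then $S''$ would contain an inclusion-minimal $y$-$y'$ cut $F$ of size $|F| \le |S''| \le |S| \le k$. By the definition of $Z$, every edge of $F$ lies in $Z$; but $F \subseteq S'' = S \setminus Z$ is disjoint from $Z$, a contradiction.

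For (3), the inclusion $\ol R^y(S) \supseteq \ol R^y(S') \cup \ol R^y(S'')$ is immediate because $S', S'' \subseteq S$. For the reverse, I need, for every $y \in Y$ and every $v \in \ol R^y(S)$, to show that $v$ is already cut from $y$ by $S'$ alone or by $S''$ alone. Suppose for contradiction that $v$ is reachable from $y$ in both $G \setminus S'$ and in $G \setminus S''$. Let $P$ be a $y$-$v$ path in $G \setminus S''$ and $P^{\dagger}$ one in $G \setminus S'$. By construction, the $S$-edges on $P$ all lie in $S'$ and those on $P^{\dagger}$ all lie in $S''$, and both sets are nonempty (else one of the paths would lie entirely in $G \setminus S$, contradicting $v \in \ol R^y(S)$).

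The main obstacle is then to leverage the minimality of $S$ together with the two paths to derive a contradiction. The plan is to exploit that each edge $e \in P \cap S' \subseteq Z$ is certified by some minimal $y_1$-$y_2$ cut $F_e$ of size at most $k$ with $y_1, y_2 \in Y$, while the minimality of $S$ as an \nameESC{} supplies, for each such $e$, a terminal set $T_{i(e)}$ on which $e$ is critical; invoking Lemma~\ref{lem:ScutsTatx} with the hitting-set vertex $y^{*} \in T_{i(e)} \cap Y$ reduces the criticality of $e$ to the separation of a single pair $(y^*, t)$. I expect a careful edge-exchange argument---replacing $e$ by suitable edges of $F_e \setminus \{e\}$ while using the path $P^{\dagger}$ to reroute $Y$-$Y$ connectivity through $S''$-edges---to yield a strictly smaller \nameESC{} than $S$, contradicting minimality. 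The hardest step is verifying that the exchange simultaneously preserves the cut of every terminal set, which I anticipate working out by induction on $|P \cap S'|$ so that it suffices to handle the case of a single troublesome $S'$-edge on $P$.
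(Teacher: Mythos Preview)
Your split $S' := S \cap Z$, $S'' := S \setminus Z$ differs from the paper's, and while your arguments for~(1) and~(2) are clean and correct, the proposal breaks down at~(3).

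The sketched edge-exchange for~(3) has two genuine problems. First, you aim to produce ``a strictly smaller \nameESC{} than $S$, contradicting minimality'', but the hypothesis is \emph{inclusion-wise} minimality, which says nothing about cardinality: finding an \nameESC{} of smaller size that is not a subset of $S$ yields no contradiction. Second, the exchange itself is not a reduction: replacing $e$ by $F_e \setminus \{e\}$ generally enlarges the set (since $|F_e|$ may be up to $k$), and you give no mechanism for cancelling edges elsewhere. The appeal to Lemma~\ref{lem:ScutsTatx} and the path $P^\dagger$ is too vague to see how the cut of every terminal set would be preserved; the inductive plan on $|P\cap S'|$ is not substantiated. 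So property~(3) is simply unproven for your decomposition, and it is unclear whether it even holds: one can check that the paper's $S''$ may contain edges of $Z$ (take $Y=\{y_1,y_2\}$, a path $y_1\!-\!a\!-\!y_2$, a path $y_1\!-\!b\!-\!y_2$, a pendant $v$ at $a$, and $S=\{(y_1,a),(a,y_2)\}$ cutting $T=\{y_1,v\}$; here both edges of $S$ lie in $Z$ yet the paper puts them in $S''$), so the two splits are genuinely different and the paper's proof of~(3) does not transfer.

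The paper avoids this difficulty by \emph{not} splitting along $Z$-membership. Instead, for each $y\in Y$ it looks at $S_y=\delta(R^y(S))$ and classifies the components of $G\setminus S_y$ other than $R^y(S)$ into those meeting $Y$ (call their boundaries $S'_y$) and those disjoint from $Y$ (boundaries $S''_y$); then $S'=\bigcup_y S'_y$ and $S''=\bigcup_y S''_y$. With this structural definition, each $\delta(C'_i)$ is by construction a minimal $x$--$y$ cut for some $x\in Y\cap C'_i$, giving~(1); property~(3) is immediate because $\ol R^y(S)$ is literally the disjoint union of the $C'$-components and the $C''$-components; and disjointness plus~(2) require separate (but short) arguments. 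The key idea you are missing is to let the component structure of $G\setminus S$ dictate the split, rather than the set $Z$.
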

\begin{proof}
  In this proof we denote by $\delta(U)$ the set of edges in $G$ that are incident to a node in $U$ and a node in $V(G) \setminus U$, for any $U \subseteq V(G)$.
  
  Consider an inclusion-wise minimal \nameESC{} $S$ of size at most $k$.
  For $y \in Y$, let $S_y := \delta(R^y(S))$ (i.e., the edges leaving the set of nodes that are reachable from $y$ in $G \setminus S$).
  Observe that $S_y \subseteq S$ and $R^y(S) = R^y(S_y)$.
  By the minimality of $S$, we have $S = \bigcup_{y \in Y} S_y$, since the deletion of any edge of $S$ in $G[R^y(S)]$, for $y \in Y$, or in $G[\ol R^Y(S)]$ is redundant, as its deletion does not change $R^y(S)$ for any $y \in Y$ (and thus does not change which terminal set are disconnected). 
  
  Deleting a set $S_y$ cuts $G$ into some connected components.
  Note that if $C$ is such a component, then $\delta(C) \subseteq S_y$ (otherwise, there is a node $v \in N(C)$ that is reachable in $G \setminus S_y$, which contradicts that $C$ is a component) and $\delta(C)$ consists only of $C - R^y(S_y)$ edges (since~$S_y$ consists of edges incident to $R^y(S_y)$).
  
  Let $C'_1,\ldots,C'_{r+1}$ be the components of $G \setminus S_y$ that contain some node from $Y$.
  Without loss of generality, we have $C'_{r+1} = R^y(S_y)$ (which we ignore from now on) and thus no other $C_i'$ contains $y$.
  For such a component $C_i'$ with $x \in Y \cap C_i'$, $1 \le i \le r$, the set of edges $\delta(C'_i)$ is a minimal $x-y$ cut: since $x \in C_i', y \not \in C_i'$ the set $\delta(C_i')$ is certainly an $x-y$ cut, and since $C_i'$ and $R^y(S_y)$ are connected and $\delta(C_i')$ consists only of $C_i' - R^y(S_y)$ edges, it is even a minimal cut.
  Set $S'_y := \bigcup_{1 \le i \le r} \delta(C_i') \subseteq S$ and $S' := \bigcup_{y \in Y} S'_y \subseteq S$.
  Note that we have $S' \subseteq Z$, since $S'$ is a union of minimal $x-y$ cuts for some $x,y \in Y$, each of size at most $|S| \le k$.
  This proves property (1).
  
  Let $C_1'',\ldots,C_\ell''$ be the remaining components of $G \setminus S_y$, i.e., the components that are disjoint from $Y$, and set $S''_y := \bigcup_i \delta(C_i'') \subseteq S$.
  Further, let $S'' := \bigcup_{y \in Y} S''_y \subseteq S$. 
  We claim that $Y$ is contained in one component of $G \setminus S''$.
  Consider any $x, y \in Y$ and a $x-y$ path $P$ in $G$ that passes through $S''$ a minimum number of times (among all $x-y$ paths).
  Let $z \in Y$ be such that~$S''_z$ contains some edge $e$ of $P$ in $S''$.
  Before $e$, the path $P$ passes through $S'_z$ (if $x \not\in R^z(S)$) or it starts in $R^z(S)$ (if $x \in R^z(S)$).
  Similarly, after $e$, the path $P$ passes through $S'_z$ or it ends in $R^z(S)$.
  In any case, since $R^z(S) \subseteq R^z(S'')$ (as $S'' \subseteq S$) we can replace $P$ by a path that avoids $e$ by routing it through the connected set $R^z(S)$.
  This decreases the number of passes of~$P$ through~$S''$, contradicting minimality.
  This proves property (2).
  
  We show that $S = S' \uplus S''$. First note that $S_y = S'_y \uplus S''_y$ by construction.
  Above we showed $S = \bigcup_{y \in Y} S_y$, which implies $S = S' \cup S''$.
  Moreover, consider any component $C_i''$ of $G \setminus S_y$ (that is disjoint from $Y$, as above).
  Note that $C_i''$ is disjoint from $R^x(S) = R^x(S_x)$ for any $x \in Y$, as $x \not\in C_i''$ and any $C_i'' - x$ path passes through $\delta(C_i'') \subseteq S$.
  This shows that $S''_y \cap S_x = \emptyset$ for any $x,y \in Y, x \ne y$, since any edge $e \in S''_y \cap S_x$ would be incident to all three of the disjoint sets $R^y(S_y) = R^y(S)$, $R^x(S_x) = R^x(S)$, and $C_i''$.
  Hence, $S_y''$ is disjoint from $S_x'$ for all $x \in Y$, so that~$S'$ and $S''$ are disjoint. 
  
  Finally, we show $\ol R^y(S) = \ol R^y(S') \cup \ol R^y(S'')$ for any $y \in Y$.
  With notation as above for the components of $G \setminus S_y$ we have $\ol R^y(S) = \ol R^y(S_y) = C'_1 \cup \ldots \cup C'_r \cup C''_1 \cup \ldots \cup C''_\ell$.
  By construction we have $\ol R^y(S'_y) = C'_1 \cup \ldots \cup C'_r$ and $\ol R^y(S''_y) = C''_1 \cup \ldots \cup C''_\ell$, implying $\ol R^y(S) \subseteq \ol R^y(S') \cup \ol R^y(S'')$.
  Since $S \supseteq S', S''$ we also have $\ol R^y(S) \supseteq \ol R^y(S') \cup \ol R^y(S'')$.
  This proves property (3).
\end{proof}

We may branch for each terminal set whether it is cut by $S'$ or by $S''$, splitting $\mc T$ into $\mc T' \uplus \mc T''$.
Furthermore, we may branch over $k' + k'' = k$ such that $|S'| \le k'$ and $|S''| \le k''$.
This branching leaves us with two subproblems:
\begin{enumerate}[label=(\arabic*),noitemsep]
  \item Find an \nameESC{} $S'$ of $(G,\mc T', k')$ with $S' \subseteq Z$,
  \item Find an \nameESC{} $S''$ of $(G, \mc T'', k'')$ such that $Y$ is contained in one component of $G \setminus S''$.
\end{enumerate}
At first sight, these subproblems are \emph{not independent}, since in Lemma~\ref{lem:splitmincutunipedal} the sets $S'$ and $S''$ are required to be \emph{disjoint}.
Nevertheless, we can solve these subproblems independently and put the solutions together to a solution of the original \problemMPSC{} instance.

However, first we change the above subproblems slightly to make them easier to solve. 
Let $Z' = Z'(G,Y)$ with $Z \subseteq Z' \subseteq V(G)$ to be fixed later.
We do this because it is not known how to determine $Z$, but a superset $Z'$ with nice properties can be found using ideas by Marx et al.~\cite{MarxEtAl2013}, as we will explain later.
Then we replace subproblem (1) by the following problem:

\begin{enumerate}[label=(1')]
  \item Find an \nameESC{} $S'$ of $(G, \mc T', k')$ with $S' \subseteq Z'$.
\end{enumerate}

For subproblem (2), let $G_*$ be the graph obtained from $G$ by adding a new node $y_*$ and connecting $y_*$ to every $y \in Y$ by $k+1$ parallel edges (to avoid parallel edges, one may alternatively subdivide all these parallel edges).
Further, for a terminal set $T$, let $T_* := (T \setminus Y) \cup \{y_*\}$ and let $\mc T_* := \{T_* \mid T \in \mc T\}$.
We replace subproblem (2) by the following problem.
Intuitively, $y_*$ forces $Y$ to be connected, so that actually find an \nameESC{} $S''$ that has $Y$ in one component of $G \setminus S''$.

\begin{enumerate}[label=(2')]
  \item Find an \nameESC{} $S''$ of $(G_*, \mc T_*'', k'')$.
\end{enumerate}

Now we show that the original \problemMPSC{} instance $(G, \mc T, k)$ has a solution if and only if for some branch both subproblems (1') and (2') have a solution, finishing a reduction to the two subproblems.

\begin{lemma}
  There is an \nameESC{} of $(G, \mc T, k)$ if and only if for some branch (over $\mc T' \uplus \mc T'' = \mc T$ and $k' + k'' = k$) there is an \nameESC{} $S'$ of $(G, \mc T', k')$ with $S' \subseteq Z'$ and an \nameESC{} $S''$ of $(G_*, T_*'', k'')$.
\end{lemma}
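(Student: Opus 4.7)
The plan is to prove the two directions separately, leveraging Lemma~\ref{lem:splitmincutunipedal} for the forward direction and the $(k+1)$-fold edge multiplicity in $G_*$ for the reverse direction. The key conceptual point is that the parallel edges force $y_*$ to remain connected to all of $Y$ in $G_* \setminus S''$ (for any $S''$ of size at most $k$), which is precisely what is needed to translate ``$S''$ cuts $T_*$ in $G_*$'' back and forth with ``$S''$ cuts $T$ in $G$ and $Y$ stays in one component''.

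For the ``only if'' direction, I would start from an inclusion-wise minimal \nameESC{} $S$ of $(G, \mc T, k)$ and apply Lemma~\ref{lem:splitmincutunipedal} to obtain the decomposition $S = S' \uplus S''$ with its three properties. To form the branching partition, I assign each $T \in \mc T$ to $\mc T'$ if $S'$ cuts $T$ and to $\mc T''$ otherwise; Lemma~\ref{lem:ScutsTatx} applied at any $x \in T \cap Y$, combined with property~(3) $\ol R^x(S) = \ol R^x(S') \cup \ol R^x(S'')$, guarantees that if $S$ cuts $T$ then at least one of $S', S''$ does, so $\mc T''$ is indeed cut by $S''$. Setting $k' := |S'|$ and $k'' := k - k'$, property~(1) gives $S' \subseteq Z \subseteq Z'$, so $S'$ solves subproblem~(1'). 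For~(2'), I would show that $S''$ cuts $T_*$ in $G_*$ for each $T \in \mc T''$: by property~(2), $Y$ lies in a single component of $G \setminus S''$, and since $|S''| \le k$ cannot destroy all $k+1$ parallel edges between $y_*$ and any $y \in Y$, the vertex $y_*$ is in the same component as $Y$ in $G_* \setminus S''$; applying Lemma~\ref{lem:ScutsTatx} inside $G$ at a chosen $x \in T \cap Y$ produces some $v \in T \setminus Y \subseteq T_*$ with $v$ separated from $x$, hence from $y_*$, in $G_* \setminus S''$.

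For the ``if'' direction, given $S'$ and $S''$ for a branch, I would take $S := S' \cup S''$, so $|S| \le k' + k'' = k$. Sets in $\mc T'$ are handled trivially since $S \supseteq S'$. For $T \in \mc T''$, Lemma~\ref{lem:ScutsTatx} applied in $G_*$ at $y_*$ yields some $v \in T_* \setminus \{y_*\} = T \setminus Y$ separated from $y_*$ in $G_* \setminus S''$; the same parallel-edge argument shows $v$ is then separated from every $y \in Y$, in particular from any $x \in T \cap Y$, in $G_* \setminus S''$. Since every $v$-$x$ path in $G \setminus S$ is also a $v$-$x$ path in $G_* \setminus S''$ (as $E(G) \setminus S \subseteq E(G_*) \setminus S''$), we conclude that $S$ separates the pair $v, x \in T$, hence cuts $T$.

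The main obstacle I anticipate is not a deep one: it is making sure the translation between ``cut in $G$'' and ``cut in $G_*$'' is airtight in both directions, and that the branching $\mc T = \mc T' \uplus \mc T''$ derived from an arbitrary optimal $S$ is consistent with what the lemma promises. Both are handled uniformly by repeatedly combining Lemma~\ref{lem:ScutsTatx} with the observation that the $(k+1)$-fold parallel edges make $y_*$ indistinguishable from $Y$ for any cut of budget at most $k$. No new structural insight is required beyond Lemmas~\ref{lem:ScutsTatx} and~\ref{lem:splitmincutunipedal}, so the proof is essentially a careful bookkeeping argument.
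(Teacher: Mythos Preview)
Your proposal is correct and follows essentially the same approach as the paper's proof: both directions use Lemma~\ref{lem:splitmincutunipedal} and Lemma~\ref{lem:ScutsTatx} exactly as you describe, with the parallel-edge argument handling the translation between $G$ and $G_*$. One small point you should make explicit in the ``if'' direction: before setting $S := S' \cup S''$, observe (as the paper does) that without loss of generality $S'' \subseteq E(G)$, since any edge of $S''$ among the $k+1$ parallel $y_*$--$y$ edges is useless and can be discarded; otherwise $S$ is not even a subset of $E(G)$.
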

\begin{proof}
  Consider a branch (over $\mc T' \uplus \mc T'' = \mc T$ and $k' + k'' = k$).
  Assume there is an \nameESC{} $S'$ of $(G, \mc T', k')$ with $S' \subseteq Z'$ and an \nameESC{} $S''$ of $(G_*, T_*'', k'')$.
  We can assume, without loss of generality, that $S''$ does not contain any edges in $E(G_*) \setminus E(G)$, since all these edges have $k+1$ parallel copies, so that picking any of these edges does not separate any nodes. 
  Note that $S := S' \cup S'' \subseteq V(G)$ is of size $|S| \le |S'| + |S''| \le k' + k'' = k$.
  Consider a terminal set $T \in \mc T$ with $y \in T \cap Y$.
  If $T \in \mc T'$, then some node $v \in T$ is cut from $y$ by $S'$ (since if all nodes of $T$ connect to $y$ then $T$ is not cut by $S'$).
  Observe that $\ol R^y(S) \supseteq \ol R^y(S') \cup \ol R^y(S'')$.
  Thus, $v$ is also cut from $y$ by $S$.
  A similar arguments works if $T \in \mc T''$, so that $T_*$ is cut by $S''$.
  Then some node $v \in T_*$ is cut from $y_* \in T_*$.
  Since $y_*$ is connected to $y$ in $G \setminus S''$, $v$ is also cut from $y$ by $S''$.
  Hence, $T$ is cut by $S''$ in $G_*$ and, thus, $T$ is cut by $S$ in $G$.
  We have shown that~$S$ is an \nameESC{} of $(G,\mc T, k)$.
  
  For the other direction, let $S$ be an \nameESC{} of $(G,\mc T, k)$.
  Without loss of generality, we assume that $S$ is inclusion-wise minimal.
  Pick sets $S',S''$ as in Lemma~\ref{lem:splitmincutunipedal} and consider a terminal set $T$ and $y \in T \cap Y$.
  Some node $v \in T$ is cut from $y$ by $S$.
  Since we have $\ol R^y(S) = \ol R^y(S') \cup \ol R^y(S'')$ by property (3) in Lemma~\ref{lem:splitmincutunipedal}, node $v$ is also cut from $y$ by $S'$ or $S''$. Hence, if we let $\mc T' \subseteq \mc T$ be all terminal sets cut by $S'$ and $\mc T'' := \mc T \setminus \mc T'$, then all terminal sets in $\mc T''$ are cut by $S''$. Set $k' := |S'|$ and $k'' := k - k'$, then since $|S'| + |S''| = |S| \le k$ we have $|S''| \le k''$.
  Note that $\mc T' \uplus \mc T'' = \mc T$ and $k' + k'' = k$ is a valid branch.
  Now, $S'$ is an \nameESC{} of $(G, \mc T', k')$ with $S' \subseteq Z \subseteq Z'$ by property (1) in Lemma~\ref{lem:splitmincutunipedal}. 
  Also, $S''$ is an \nameESC{} of $(G,\mc T'', k'')$. 
  To see that $S''$ is an \nameESC{} of $(G_*, \mc T''_*, k'')$, note that by property (2) of Lemma~\ref{lem:splitmincutunipedal}, $Y$ is contained in one connected component of $G \setminus S''$.
  Since in the construction of $G_*$ we only add new paths between nodes of $Y$, $S''$ is an \nameESC{} of $(G_*, \mc T'', k'')$.
  As $y \in Y$ and $y_*$ are connected in $G_* \setminus S''$, and $T \cap Y \ne \emptyset$ in any terminal set $T$, we may replace the nodes in $T \cap Y$ by $y_*$, so $S''$ still cuts $T_*$.
  Hence, $S''$ also cuts $\mc T_*''$.
\end{proof}

Note that $(G_*, \mc T''_*, k'')$ is an instance of \problemUSC{}, since each terminal set contains $y_*$.
Hence, we can solve subproblem (2') in $O^*(f(k,t))$ time by Theorem~\ref{thm:USCfpt}, and particularly, in $O(n(n+m))$ time for fixed $k,t$.

It remains to show how to solve subproblem (1'). 
We want to use the techniques by Marx et al.~\cite{MarxEtAl2013}.
However, as their work considers the node deletion variant of cuts, we first transfer our problem to \problemRNSC{}. 
Let $V := V(G)$, $E := E(G)$ and consider the following graph, which is closely related to the line graph of $G$:
\begin{equation*} 
  H = (E \cup V, \{\{e,e'\} \mid e, e' \in E, |e \cap e'| = 1\} \cup \{\{v,e\} \mid e \in E, v \in V, v \in e\}).
\end{equation*}
Then \problemESC{} on $(G, \mc T', k')$ is equivalent to \problemRNSC{} on $(H, \mc T', k')$, in a sense clarified in Lemma~\ref{lem:equivalenceGH}.

We make use of the following theorem, which is a variant of the Treewidth Reduction Theorem by Marx et al.~\cite[Theorem 2.11]{MarxEtAl2013} (this variant follows from using only the first three lines of the proof of Marx et al.~\cite[Theorem 2.11]{MarxEtAl2013}).
The \emph{torso} of a graph $G$ with respect to a set $U \subseteq V(G)$ is the graph on node set $U$ where $u,v \in U$ are adjacent if and only if there is a $u-v$ path in $G$ that is internally node-disjoint from $U$ (in particular, if $\{u,v\} \in E(G)$, then $u$ and $v$ are adjacent in the torso).
We denote this graph by $\torso(G,U)$.

\begin{theorem}[\cite{MarxEtAl2013}] \label{thm:tw:marx}
  Let $H$ be a graph, $Y \subseteq V(H)$, and let $k$ be an integer.
  Let $\tilde Z$ be the set of nodes of $H$ participating in a minimal $x-y$ cut $\tilde S \subseteq V(H)$ of size at most $k$ for some $x,y \in Y$.
  There is an algorithm that, in time $O^*(f(k,|Y|))$, computes a set $\tilde Z' \supseteq \tilde Z \cup Y$ such that $\torso(H,\tilde Z')$ has treewidth at most $h(k,|Y|)$ for some function $h$.
\end{theorem}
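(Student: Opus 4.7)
The plan is to mimic the first three lines of the proof of Marx et al.\cite[Theorem~2.11]{MarxEtAl2013}: enumerate important separators between every pair of nodes in $Y$, take the union together with $Y$, and then invoke the treewidth bound on the resulting torso.

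First, I would enumerate the important separators. For each ordered pair $(x,y)$ with $x,y \in Y$ and $x \ne y$, compute the set $\cI_{x,y}$ of important $x$-$y$ separators in $H$ of size at most $k$. By the node-deletion analogue of Theorem~\ref{thm:numimpsep} we have $|\cI_{x,y}| \le 4^k$, and $\cI_{x,y}$ can be enumerated in time $O^*(4^k)$. Doing this for all pairs costs $O^*(4^k |Y|^2)$ time. Define
\[
\tilde Z' \;:=\; Y \,\cup\, \bigcup_{x,y \in Y}\ \bigcup_{\tilde S \in \cI_{x,y}} \tilde S,
\]
so that $|\tilde Z'| \le |Y| + k \cdot 4^k \cdot |Y|^2$, bounded by a function of $k$ and $|Y|$.

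Next, I would verify the containment $\tilde Z \cup Y \subseteq \tilde Z'$. The inclusion $Y \subseteq \tilde Z'$ is immediate. For any $v \in \tilde Z$, pick $x,y \in Y$ and a minimal $x$-$y$ separator $\tilde S$ with $|\tilde S| \le k$ and $v \in \tilde S$. A standard pushing argument (iteratively replace $\tilde S$ by a separator of no larger size whose $x$-side is strictly bigger, while keeping $v$ on the separator) terminates at an important $x$-$y$ separator of size at most $k$ that still contains $v$; hence $v \in \tilde Z'$.

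The main obstacle is the treewidth bound: showing $\torso(H,\tilde Z')$ has treewidth at most $h(k,|Y|)$. The strategy, taken from Marx et al., is to argue that each connected component of $H \setminus \tilde Z'$ has only a bounded number of neighbours in $\tilde Z'$; otherwise one could exhibit a minimal $x$-$y$ cut between two nodes of $Y$ whose vertices are not all captured by important separators of size at most $k$, contradicting the definition of $\tilde Z'$. From this one builds a tree decomposition of $\torso(H,\tilde Z')$: each component of $H \setminus \tilde Z'$ together with its bounded boundary becomes a bag, and these bags are glued together along a tree decomposition of a reduced graph obtained by contracting each component. The delicate bookkeeping required to control the final width by $h(k,|Y|)$ is the core technical content; I would defer the detailed construction and the inductive width analysis to the corresponding steps of Marx et al.\cite[Theorem~2.11]{MarxEtAl2013}, since by assumption the same first three lines of their proof suffice.
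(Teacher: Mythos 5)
The paper does not actually prove this theorem; it is invoked as a black box from Marx~et~al.~\cite{MarxEtAl2013}, with the remark that the claimed variant is simply what the first few lines of their proof of the Treewidth Reduction Theorem already establish. Your proposal instead attempts to reconstruct a proof from important separators, and this reconstruction has two genuine gaps.

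First, the containment $\tilde Z \subseteq \tilde Z'$ does not follow from the pushing lemma. The standard pushing argument guarantees, for a minimal $x$--$y$ separator $\tilde S$, an important $x$--$y$ separator $\tilde S'$ with $|\tilde S'|\le|\tilde S|$ and $R^x(\tilde S')\supseteq R^x(\tilde S)$, but it gives no control over which vertices of $\tilde S$ survive into $\tilde S'$: a vertex $v\in\tilde S$ that is absorbed into the enlarged $x$-side simply disappears from the separator. Already on the path $x$--$a$--$b$--$y$, the minimal separator $\{a\}$ is pushed to $\{b\}$ and $a$ is lost. Taking both orderings $(x,y)$ and $(y,x)$ happens to repair this toy example, but you offer no argument that it does so in general, and the union of important separators (in either direction) is not known to contain all vertices on minimal separators of size at most $k$. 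This is precisely why Marx et al.\ define their set $C$ as the set of \emph{all} vertices on small minimal separators and compute a torso around a (differently constructed) superset of $C$, rather than around a union of important separators.

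Second, the treewidth argument is both hand-waved and structurally confused. You propose bags of the form ``a component of $H\setminus\tilde Z'$ together with its boundary'', but the components of $H\setminus\tilde Z'$ are not vertices of $\torso(H,\tilde Z')$ at all, so these are not bags of a tree decomposition of the torso. Even if one restricts each bag to the boundary $N(C)\cap\tilde Z'$, bounding the boundary sizes only controls the cliques introduced by the torso operation, not the width of the decomposition needed to cover $H[\tilde Z']$ itself; the ``tree decomposition of the reduced graph'' you defer to is essentially the object whose width you are trying to bound, so the argument is circular. The actual bound in Marx et al.\ rests on a delicate inductive analysis of the structure of minimal $s$--$t$ separators (not on a count of important separators), and deferring to their proof is not consistent with having started from a different set $\tilde Z'$ than the one their argument is about.
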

We remark that the algorithm of the above theorem runs in $O(n+m)$ time for fixed $k,|Y|$.

Now we fix $Z' := \tilde Z' \cap E(G)$.
Note that $\tw(\torso(H, Z')) \le \tw(\torso(H, \tilde Z')) \le h(k, |Y|)$.
We augment the graph $\torso(H, Z')$ by the nodes $V(G)$ to a new graph $H'$, as follows.

For each node $v \in V(G)$, consider the connected component $C$ of $H \setminus Z'$ that contains $v$.
Let~$K$ be the set of neighbors of $C$ in $Z'$.
Since $C$ is a connected component and by the construction of the torso, $K$ forms a clique in $\torso(H, Z')$.
We add the node $v$ to the graph $\torso(H, Z')$ and make it adjacent to all elements of $K$.
This yields a new graph $H'_v$.
We show that $H'_v$ also has bounded treewidth.
Since $K$ is a clique, by basic knowledge about tree decompositions, every tree decomposition of $\torso(H, Z')$ has a bag containing $K$~\cite[Lemma 3.1]{BodlaenderMoehring1993}; let $(T,\mathcal B)$ be a minimum-width tree decomposition of $\torso(H, Z')$ and let $B\in\mathcal B$ be a bag containing $K$.
We add a new bag $B' = B \cup \{v\}$ to $(T,\mathcal B)$ and make it adjacent to $B$, yielding a tree decomposition of $H'_v$.
This tree decomposition of $H'_v$ has width at most one more than the width of $(T,\mathcal B)$, and thus the treewidth of $H'_v$ is at most one more than the treewidth of $\torso(H, Z')$.
As the constructions of $H'_v$ and $H'_{v'}$ for distinct nodes $v,v' \in V(G)$ do not interfere, in total the treewidth of $H'$ is increased by at most one compared to the treewidth of $\torso(H,Z')$.
Hence, we have $\tw(H') \le h(k, |Y|) + 1$. 

\begin{lemma}
\label{lem:equivalenceGH}
  There is an \nameESC{} $S$ of $(G, \mc T', k')$ with $S \subseteq Z'$ if and only if there is a \nameRNSC{} $S'$ of $(H', \mc T', k')$.
\end{lemma}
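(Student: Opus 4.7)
The plan is to establish both directions of the equivalence via a path-translation correspondence between walks in $G\setminus S$ and paths in $H'\setminus S'$, exploiting the dual role of elements of $Z'$ as both edges in $E(G)$ and nodes in $V(H')$. The key structural observation about $H'$ is that every edge is of one of two types: a torso-edge $\{e_1,e_2\}$ with $e_1,e_2\in Z'$, which certifies a path in $H$ internally disjoint from $Z'$ and hence a sequence of non-$Z'$ edges in $G$ joining an endpoint of $e_1$ to an endpoint of $e_2$; or an adjacency $\{v,e\}$ with $v\in V(G)$ and $e\in K(v)$, which certifies that $v$ lies in the same component of $G\setminus Z'$ as some endpoint of $e$. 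In either case the certificate sits entirely inside $E(G)\setminus Z'$.

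For the forward direction, I take an \nameESC{} $S\subseteq Z'$ of $(G,\mc T',k')$ and view it as a subset of $Z'\subseteq V(H')$; these are non-terminal nodes since the terminals of $\mc T'$ lie in $V(G)$ and are hence disjoint from $Z'\subseteq E(G)$. For each $T_i'\in\mc T'$, some pair $u,w\in T_i'$ is separated in $G\setminus S$. Any putative $u$-$w$ path in $H'\setminus S$ is expanded via the above edge-type characterisation into a $u$-$w$ walk in $G$ that uses only $G$-edges corresponding to the $Z'$-nodes traversed and to the non-$Z'$ certificates; since the $H'$-path avoids $S$ and no non-$Z'$ edge lies in $S\subseteq Z'$, the resulting walk lies in $G\setminus S$, contradicting separation.

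For the backward direction, I will first reduce to the case $S'\subseteq Z'$. If $v\in S'\cap V(G)$ is non-terminal, its $H'$-neighbourhood is contained in $K(v)\subseteq Z'$, which by construction forms a clique in $\torso(H,Z')\subseteq H'$; hence any path of the form $\dots a\text{-}v\text{-}b\dots$ in $H'\setminus(S'\setminus\{v\})$ can be shortcut to $\dots a\text{-}b\dots$ via the clique edge $\{a,b\}\notin S'$, so $S'\setminus\{v\}$ is still an \nameRNSC{}, and iterating yields $S'\subseteq Z'$. Now set $S:=S'\subseteq Z'\subseteq E(G)$, so $|S|\le k'$. For each $T_i'$, I pick a pair $u,w\in T_i'$ separated in $H'\setminus S'$ and assume for contradiction that there is a $u$-$w$ path $P$ in $G\setminus S$. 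Listing the $Z'$-edges of $P$ in order as $z_{j_1},\ldots,z_{j_m}$, the non-$Z'$ segments of $P$ between (and on either side of) these edges certify the $H'$-edges $u$-$z_{j_1}$, $z_{j_1}$-$z_{j_2},\ldots,z_{j_m}$-$w$ via the torso and $V(G)$-adjacency characterisation; when $m=0$, the entire path lies in a single component $C$ of $G\setminus Z'$ with $K(u)=K(w)$, and the lift is a length-two path $u$-$e$-$w$ for any $e\in K(u)\setminus S'$. In all cases, this $H'\setminus S'$-path contradicts separation.

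The heart of the argument is the careful matching between $G$-paths and $H'$-paths through the two edge types of $H'$, including allowing walks (not only simple paths) when both endpoints of a $Z'$-edge might need to be visited. The main technical subtlety is the $m=0$ subcase of the backward direction, where the reduction to a length-two $u$-$e$-$w$ path depends on exhibiting some $e\in K(u)\setminus S'$, which in turn relies on the structural properties of $Z'$ supplied by Theorem~\ref{thm:tw:marx} and the boundary adjacencies $K(v)$ baked into the construction of $H'$.
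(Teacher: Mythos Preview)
Your proposal follows the paper's proof almost line for line. Both directions are proved by contradiction via path-translation between $G$ and $H'$, and your reduction to $S'\subseteq Z'$ via the clique-neighbourhood argument is exactly what the paper does. The paper phrases your backward direction by lifting a $u$-$v$ path in $G\setminus S$ to the $H$-path $(u,e_1,\ldots,e_\ell,v)$, picking the first and last edges $e_i,e_j$ lying in $Z'$, and invoking the torso separator property (Marx--Razgon, Prop.~3.3) to connect $e_i$ to $e_j$ in $\torso(H,Z')\setminus S$; for your forward direction it simply remarks that a path in $H'\setminus S$ unfolds to one in $H\setminus S$ and hence in $G\setminus S$.

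One comment on your $m=0$ subcase: you are right to flag it, and in fact the paper's own write-up implicitly assumes some edge of the $G$-path lies in $Z'$ (it just says ``let $e_i$ be the first\ldots edge of this path that is in $Z'$''). Your proposed fix---exhibit some $e\in K(u)\setminus S'$---is not obviously guaranteed by Theorem~\ref{thm:tw:marx} alone, so your appeal to ``structural properties of $Z'$'' is vague; note that if $u,w$ lie in the same component of $H\setminus Z'$ then $K(u)=K(w)$, and any restricted node cut separating them in $H'$ must contain \emph{all} of $K(u)$, so $K(u)\setminus S'$ is empty in exactly the case you need it nonempty. This corner case deserves a more careful argument than either you or the paper provides, but your proof is at the same level of rigour as the original.
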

\begin{proof}
  Let $S$ be a \nameRNSC{} of $(H', \mc T', k')$.
  Without loss of generality, we can assume that $S \subseteq Z'$, since $V(H') = Z' \cup V$ and the deletion of any non-terminal node $v \in V$ is redundant, since the neighborhood $N(v)$ is a clique (so that any path through $v$ can be stripped of $v$).
  We show that $S$ is also an \nameESC{} of $(G, \mc T', k')$.
  Assume, for the sake of contradiction, that $u,v \in T \in \mc T'$ are cut by $S$ in $H'$, but there is a $u-v$ path in $G \setminus S$.
  Let $(e_1,\ldots,e_\ell)$ be the edges of this path.
  Then $(u,e_1,\ldots,e_\ell,v)$ is a $u-v$ path in $H \setminus S$ (where this time we wrote down the \emph{nodes} of the path).
  Let $e_i$ be the first and $e_j$ be the last edge of this path that is in $Z'$.  
  By basic properties of the torso~\cite[Proposition 3.3]{MarxRazgon2011}, $e_i$ and $e_j$ are also connected in $\torso(H,Z') \setminus S$.
  Further, by the construction of $H'$, $u$ is adjacent to $e_i$ and $v$ is adjacent to~$e_j$ in $H' \setminus S$.
  Hence, there is a $u-v$ path in $H' \setminus S$, which contradicts that $S$ cuts $u,v$ in $H'$.
  
  For the other direction, let $S$ be an \nameESC{} of $(G, \mc T', k')$ with $S \subseteq Z'$.
  We show that $S$ is also a \nameRNSC{} of $(H', \mc T', k')$. 
  Assume, for the sake of contradiction, that $u,v \in T \in \mc T'$ are cut by $S$ in $G$, but not in $H'$, so that there is a $u-v$ path in $H' \setminus S$.
  By the construction of $\torso(H,Z')$ and $H'$ (and since $S \subseteq Z'$), there is also a $u-v$ path in $H \setminus S$.
  However, any such path corresponds to a path in $G \setminus S$, which contradicts that $u,v$ are cut by $S$ in $G$.
  Hence, $S$ is a \nameNSC{} of $(H', \mc T', k')$. 
  Since $S \subseteq Z' \subseteq E$ and $\mc T' \subseteq V$, $S$ is also a \nameRNSC{}.
\end{proof}

Finally, the instance $(H', \mc T', k')$ can be solved in $O^*(f(k,\mathsf{tw}(G)))$ time, since the treewidth of~$H'$ is bounded in $k$ and \problemRNSC{} is fixed-parameter tractable for parameter $t + \tw$ by Theorem~\ref{thm:t+tw}. In particular, that algorithm runs in $O(n+m)$ time for fixed $t,\tw$.
Therefore, \problemMPSC{} is fixed-parameter tractable. The final algorithm runs in time $O(n(n+m))$ for fixed $k,t$; moreover, since the treewidth of the structure returned by Theorem~\ref{thm:tw:marx} is exponential, the algorithm requires double-exponential time.
Combined with Lemma~\ref{thm:k+t-edge-reduction}, this proves Theorem~\ref{thm:intro:k+t-edge}.

\section{Steiner Multicuts for Graphs of Bounded Treewidth}
\label{sec:steinercutsintreewidth}

In this section, we consider \problemSC{} on graphs of bounded treewidth.
To start the exposition, we note that \problemESC{} and \problemRNSC{} are $\mathsf{NP}$-complete for trees and \problemNSC{} is $\mathsf{NP}$-complete on series-parallel graphs~\cite{CalinescuEtAl2003}, which are graphs of treewidth two.
This means that any efficient algorithm for \problemSC{} on graphs of bounded treewidth needs an additional parameter. 

We first show Theorem~\ref{thm:intro:k}, namely that all variants of \problemSC{} for the parameter $k$ are $\mathsf{W}[1]$-hard, even if $p=3$ and $\tw(G) = 2$ (but $t$ is unbounded).
The graph $G$ is in fact a tree plus one node.
We then contrast this result by showing that the problem is fixed-parameter tractable on bounded treewidth graphs when $t$ is a parameter.

For the reduction to prove Theorem~\ref{thm:intro:k} we introduce an intermediate problem, that we call ({\sc Monotone}) \problemNAE{}.
In this problem, we are given variables $x_1,\ldots,x_k$ that each take a value in $\{1,\ldots,n\}$ and clauses $C_1,\ldots,C_m$ of the form
\begin{equation*}
  \NAE( x_{i_1} \le a_1, x_{i_2} \le a_2, x_{i_3} \le a_3 ),
\end{equation*}
where $a_1,a_2,a_3 \in \{1,\ldots,n\}$,
and such a clause is satisfied if not all three inequalities are true and not all are false (i.e., they are ``not all equal'').
The goal is to find an assignment of the variables that satisfies all given clauses. 
We remark that \problemNAE{} generalizes \textsc{Monotone NAE-3-SAT} (by restriction to $n=2$), and  that \problemNAE{} can be solved in time $O(m \cdot n^k)$, by enumerating all assignments.
We complement this with a $\mathsf{W}[1]$-hardness result for parameter $k$.

To prove that \problemNAE{} is $\mathsf{W}[1]$-hard parameterized by $k$, we reduce from \problemMCC{}.
In that problem, which is known to be $\mathsf{W}[1]$-hard~\cite{FellowsEtAl2009}, we are given a graph $G$ and a (proper) coloring of $G$ using $k$ colors, and the goal is to decide if $G$ has a clique that contains at least one node with each of the $k$ colors.
We use $V_{i}$ to denote the set of nodes of color~$i$, set $n_{i} = |V_{i}|$, and $E_{i,j}$ to denote the set of edges with one endpoint in $V_{i}$ and the other in $V_{j}$.

\begin{lemma}
\label{lem:NAEhard}
  \problemNAE{} is $\mathsf{W}[1]$-hard for parameter $k$.
\end{lemma}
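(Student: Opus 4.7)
The plan is to reduce from \problemMCC{}, which is $\mathsf{W}[1]$-hard parameterized by the number of colors $k$. Given an MCC instance with color classes $V_1,\ldots,V_k$ in graph $G$, I will build an equivalent \problemNAE{} instance on $O(k^2)$ variables, which is a valid FPT-reduction. The variables are $c_i$ (one per color) encoding the chosen vertex of $V_i$, $w_{i,j}$ (one per pair $i<j$) encoding the chosen edge of $E_{i,j}$, together with ``complement'' variables $c_i'$ and $w_{i,j}'$ intended to satisfy $c_i+c_i'=n+1$ and $w_{i,j}+w_{i,j}'=|E_{i,j}|+1$; the range $n$ is chosen larger than $|V_i|$ and $|E_{i,j}|$ for every $i,j$.

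The main obstacle is that \problemNAE{} literals are purely monotone (``$\le$'' only), so one cannot directly forbid a single non-edge $(c_i,c_j)=(u,v)$: monotone NAE can only chop off lower-left or upper-right corners of the joint variable space. The complement variables are exactly what breaks the monotonicity: once $c_i+c_i'=n+1$ is enforced, the literal ``$c_i>a$'' becomes equivalent to the $\le$-literal ``$c_i'\le n-a$'', and arbitrary mixed-sign disjunctions reduce to disjunctions of $\le$-literals. The complement relation itself is forced by, for each threshold $u\in\{1,\ldots,n-1\}$, the pair of clauses $\NAE(c_i\le u,c_i'\le n-u,z^{\top})$ and $\NAE(c_i\le u,c_i'\le n-u,z^{\bot})$, where $z^{\top}$ is any always-true literal (say $x\le n$) and $z^{\bot}$ any always-false one (say $x\le 0$); the first forbids both components being true, the second forbids both being false, and together they force $(c_i\le u)\oplus(c_i'\le n-u)$, which ranging over all $u$ pins down $c_i'=n+1-c_i$. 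The same construction is used for $w_{i,j},w_{i,j}'$, and clauses of the form $\NAE(c_i\le|V_i|,c_i\le|V_i|,z^{\bot})$ confine $c_i$ to $\{1,\ldots,|V_i|\}$ (and analogously for $w_{i,j}$).

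With complements and ranges in place, I add consistency clauses tying each edge variable to its endpoints. For every $e=\{u,v\}\in E_{i,j}$, I want to forbid $w_{i,j}=e \wedge c_i\ne u$ and $w_{i,j}=e \wedge c_j\ne v$. Using complements, $w_{i,j}=e$ becomes $(w_{i,j}\le e)\wedge(w_{i,j}'\le|E_{i,j}|+1-e)$, and each violation becomes a conjunction of three $\le$-literals, such as $(w_{i,j}\le e)\wedge(w_{i,j}'\le|E_{i,j}|+1-e)\wedge(c_i'\le n-u)$. A single NAE-3 clause on these three literals forbids its ``all-true'' corner, which is exactly the forbidden configuration; crucially, its ``all-false'' corner simultaneously demands $w_{i,j}\ge e+1$ and, via the $w$-complement, $w_{i,j}\le e-1$, so it is infeasible and the clause contributes nothing extra. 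Four such clauses per edge (two for the $c_i$-side, two for the $c_j$-side) suffice.

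Equivalence is then routine. A multicolored clique $\{v_1,\ldots,v_k\}$ yields a satisfying assignment by setting $c_i=v_i$, $w_{i,j}$ to the index of $\{v_i,v_j\}$ in $E_{i,j}$, and the complements accordingly. Conversely, any satisfying assignment forces $c_i\in\{1,\ldots,|V_i|\}$, $w_{i,j}\in\{1,\ldots,|E_{i,j}|\}$, the complements to their intended values, and the consistency clauses make $c_i,c_j$ the endpoints of the $w_{i,j}$-th edge of $E_{i,j}$, whence $(c_i,c_j)\in E_{i,j}$ for all $i<j$ and $\{c_1,\ldots,c_k\}$ is a multicolored clique. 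The hard part is designing the consistency clauses so that each NAE-3 clause's ``all-false'' corner becomes automatically infeasible under the complement relations; once that trick is in place, the monotone ``$\le$''-only literals of \problemNAE{} suffice to encode the adjacency (table) constraints of \problemMCC{}.
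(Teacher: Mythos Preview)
Your proposal is correct and follows essentially the same approach as the paper: both reduce from \problemMCC{} using vertex variables $c_i$ (the paper's $x_i$), edge variables $w_{i,j}$ (the paper's $y_{ij}$), and complement variables to simulate $\ge$-literals, with per-edge consistency clauses whose ``other'' NAE corner is made infeasible by a contradictory pair of inequalities. The only cosmetic differences are that the paper forces the complement relation with a single clause $\NAE(x\le v,\,x\le v,\,\bar x\le n-v)$ per threshold (the repeated literal yields XOR directly), whereas you use two clauses with always-true/always-false padding; and the paper writes each consistency constraint as $\NAE(y_{ij}<uv,\,y_{ij}>uv,\,x_i\ge u)$ (exploiting that the first two cannot both be true) before converting to $\le$-form, while you express the same four-clauses-per-edge construction directly in $\le$-form using the complements.
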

\begin{proof}
  Let $(G,k)$ be an instance of \problemMCC{}.
  We create an instance of \problemNAE{} on variables $x_i$, one for each color $1 \le i \le k$, and $y_{ij}$, one for each pair of colors $1 \le i < j \le k$.
  We identify the nodes $V_{i}$ with the integers $\{1,\ldots,n_i\}$ in an arbitrary way.
  We restrict $x_i$ to $\{1,\ldots,n_i\}$ using the clause $\NAE(x_i \le 0, x_i \le 0, x_i \le n_i)$, and write $x_i = u$ if the number $x_i$ corresponds to node $u$.
  Analogously, we can identify the edges $uv \in E_{i,j}$ with numbers in $\{1,\ldots,|E_{i,j}|\}$ and write $y_{ij} = uv$ if we pick the number corresponding to edge $uv$. 
  Consider the following two constraints (for any edge $uv$, with $u$ of color $i$ and $v$ of color $j$):
  \begin{equation*}
    y_{ij} = uv \;\;\Rightarrow\;\; x_i = u, \qquad~\mbox{and}\qquad y_{ij} = uv \;\;\Rightarrow\;\; x_j = v \enspace .
  \end{equation*}
  If we can encode these constraints with $\NAE$-clauses, then any satisfying assignment of the constructed \problemNAE{} instance corresponds to a clique in $G$, as all chosen pairs $y_{ij}$ correspond to edges, and edges sharing a color $i$ picked the same node $x_i$.
  We focus on the first constraint; the second constrained is handled similarly.
  Note that the first constraint is equivalent to 
  \begin{equation*}  
  y_{ij} = uv \;\;\Rightarrow\;\; x_i \ge u, \qquad~\mbox{and}\qquad y_{ij} = uv \;\;\Rightarrow\;\; x_i \le u \enspace .
  \end{equation*}
  Again, without loss of generality, we focus on the first of these constraints.
  It is equivalent to 
  \begin{equation*}
    y_{ij} < uv \;\;\vee\;\; y_{ij} > uv \;\;\vee\;\; x_i \ge u,
  \end{equation*}
  which in turn can be written as
  \begin{equation*}
    \NAE(y_{ij} < uv,\, y_{ij} > uv,\, x_i \ge u),
  \end{equation*}
  since $y_{ij} < uv,\, y_{ij} > uv$ cannot both be true. Note that we can replace any inequality $x < a$ by $x \le a-1$ (and similarly for $x > a$).   
  Hence, we can encode all desired constraints if we may use ``$\le$'' and ``$\ge$'' inequalities, not only ``$\le$'' inequalities, as is the case in the definition of \problemNAE{}.
  
  In the remainder of this proof, we reduce \problemNAE{} with ``$\le$'' and ``$\ge$'' inequalities to the original variant with only ``$\le$'' inequalities.
  Given any instance of \problemNAE{} with both types of inequalities, for any variable $x$ we introduce a new variable~$\bar x$.
  For any $1 \le v \le n$, we add the constraint $\NAE(x \le v,\, x \le v,\, \bar x \le n-v)$.
  This enforces $\bar{x} = n+1 - x$.
  Finally, we replace any inequality $x \ge v$ by $\bar{x} \le n+1-v$.
  This yields an equivalent \problemNAE{} instance with only ``$\le$'' inequalities.
\end{proof}
We are now ready to prove Theorem~\ref{thm:intro:k}.

\begin{proof}[Proof of Theorem~\ref{thm:intro:k}]
  We first give a reduction from \problemNAE{} to \problemESC{} (satisfying $p=3$ and $\tw(G) = 2$).
  Then we show how to generalize the reduction to \problemRNSC{} and \problemNSC{}.

  Consider an instance of \problemNAE{} on variables $x_1,\ldots,x_k$ taking values in $\{1,\ldots,n\}$ with clauses $C_1,\ldots,C_m$.
  Take $k$ paths consisting of $n$ edges and identify their start nodes (to a common node $s$) and end nodes (to a common node $t$), respectively.
  The resulting graph $G$ has $\tw(G) = 2$, since it is not a tree, but becomes a tree after deleting $s$ (or $t$).
  Let~$v^i_j$ be the $j$-th node on the $i$-th path from $s$ to $t$, so that $v^i_0 = s$ and $v^i_n = t$.
  For each clause $\NAE( x_{i_1} \le a_1, x_{i_2} \le a_2, x_{i_3} \le a_3 )$ we introduce a terminal set $\{v^{i_1}_{a_1}, v^{i_2}_{a_2}, v^{i_3}_{a_3}\}$ (note that we can assume $0 \le a_j \le n$ without loss of generality).
  Further, we let $\{s,t\}$ be a terminal set and set the cut size to $k$, i.e., we allow to delete $k$ edges.
  This finishes the construction.
  In order to separate $s$ from $t$ we need to cut at least one edge of each of the $k$ paths that connect $s$ and~$t$, and because the cut size is $k$ we have to delete \emph{exactly one} edge per path.
  Say we delete the $x_i$-th edge on the $i$-th path.
  This splits $G$ into two components, one containing $s$ and the other containing $t$.
  Note that we separate nodes $v^i_j$ and $v^{i'}_{j'}$ by cutting at $x_i \le j$ and $x_{i'} > j'$ (or with both inequalities the other way round), since then $v^i_j$ is in the $t$-component and $v^{i'}_{j'}$ in the $s$-component.
  Hence, the following are equivalent:
  \begin{itemize}[noitemsep]
    \item the terminal set $\{v^{i_1}_{a_1}, v^{i_2}_{a_2}, v^{i_3}_{a_3}\}$ is disconnected;
    \item some pair of nodes in this set is disconnected;
    \item among the inequalities $x_{i_j} \le a_j$, $j=1,2,3$, one is true and one is false;
    \item the clause $\NAE( x_{i_1} \le a_1, x_{i_2} \le a_2, x_{i_3} \le a_3 )$ is satisfied. 
  \end{itemize}
  Therefore, the given \problemNAE{} instance is equivalent to the constructed \problemESC{} instance. 

  To prove hardness for \problemNSC{}, we adapt the construction for \problemESC{}.
  We take the same graph, but let each path contain $2n-1$ internal nodes.
  As before, denote by $v^i_j$ the $j$-th node on the $i$-th path, so that now $v^i_0 = s$ and $v^i_{2n} = t$.
  We introduce terminal sets $\{v^i_1, v^i_{2n-1}\}$ for all $1 \le i \le k$.
  To separate these sets we have to delete at least one inner node of each of the $k$ paths (i.e., a node among $v^i_1,\ldots,v^i_{2n-1}$).
  By setting the cut size to $k$ we make sure that we delete exactly one inner node of every path.
  Say we delete the $x'_i$-th node $v^i_{x'_i}$ of the $i$-th path.
  We call the node $v^i_{2a-2}$ and $v^i_{2a-1}$ the \emph{representatives} of ``$x_i = a$''.
  Thus, for each clause $\NAE( x_{i_1} \le a_1, x_{i_2} \le a_2, x_{i_3} \le a_3 )$ we introduce the terminal sets $\{v^{i_1}_{2 a_1 - c_1}, v^{i_2}_{2 a_2 - c_2}, v^{i_3}_{2 a_3 - c_3}\}$ for all $c_1,c_2,c_3 \in \{1,2\}$, meaning that for all representatives we need to separate some pair.
  Note that after deleting the $x'_i$-th node, at least one representative of ``$x_i=a$'' survives, and the one or two surviving representatives lie in the same component of $G \setminus \{x'_1,\ldots,x'_k\}$.
  We let $x_i$ be the minimal value $a^*$ such that a surviving representative of ``$x_i=a^*$'' lies in the $t$-component. 
  Then the surviving representatives of ``$x_i = a$'' are contained in the $t$-component if and only if $x_i \le a$.
  Hence, taking a terminal set with surviving representatives $\{v^{i_1}_{2 a_1 - c_1}, v^{i_2}_{2 a_2 - c_2}, v^{i_3}_{2 a_3 - c_3}\}$, the terminal set is satisfied if and only if the clause $\NAE(x_{i_1} \le a_1, x_{i_2} \le a_2, x_{i_3} \le a_3 )$ is satisfied, completing the proof.

  For \problemRNSC{}, hardness now follows from Lemma~\ref{lem:NSCtoRNSC}.
\end{proof}

We now contrast the above theorem by showing that \problemSC{} is fixed-parameter tractable for the parameter $t$ if the graph has bounded treewidth.

\begin{theorem}
\label{thm:t+tw}
  \problemNSC{}, \problemESC{}, and \problemRNSC{} are fixed-parameter tractable for parameter $t+\tw(G)$.
\end{theorem}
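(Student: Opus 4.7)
My plan is a dynamic programming algorithm on a nice tree decomposition of $G$. Using Bodlaender's algorithm, one first computes in $2^{O(\tw(G)^3)} n$ time a nice tree decomposition of width $w=\tw(G)$ with introduce-vertex, introduce-edge, forget and join nodes. Let $B_x$ denote the bag at a node $x$ of the decomposition and let $G_x$ denote the subgraph on the vertices appearing in bags of the subtree rooted at $x$, containing exactly those edges that have been processed at introduce-edge nodes in that subtree. A DP state at $x$ consists of three ingredients: (i) a set $D$ of deletions inside the bag---a subset of $B_x$ for \problemNSC{}, a subset of $B_x$ disjoint from $T_1\cup\dots\cup T_t$ for \problemRNSC{}, or a subset of the already-processed edges with both endpoints in $B_x$ for \problemESC{}; (ii) a partition $\pi$ of $B_x\setminus D$ describing the connectivity of surviving bag vertices in $G_x$ after removing the partial solution; (iii) for each terminal set $T_i$ a label $\ell_i\in\{\mathrm{sat},\mathrm{none},\mathrm{lost}\}\cup\pi$. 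The labels record, respectively, that a pair of $T_i$-terminals in $V(G_x)$ has already been separated, that no $T_i$-terminal has appeared yet, that all $T_i$-terminals of $V(G_x)$ lie in a single connected component of $G_x$ minus the partial solution that no longer has any representative in $B_x$, or which block of $\pi$ currently holds all $T_i$-terminals of $V(G_x)$. For each state we store the minimum number of deletions in the subtree that realises it.

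The central invariant is that while $\ell_i\ne\mathrm{sat}$, all $T_i$-terminals placed so far belong to a single connected component of $G_x$ minus the partial solution, identified by $\ell_i$ (or absent when $\ell_i=\mathrm{none}$). I would then carry out the standard case analysis for the four node types, updating $(D,\pi,(\ell_i)_i)$ consistently with this invariant and in particular promoting $\ell_i$ to $\mathrm{sat}$ exactly in the following situations: at an introduce-vertex step, when a $T_i$-terminal is added while $\ell_i=\mathrm{lost}$ or while $\ell_i$ refers to a different block; at an introduce-edge step, when two blocks each holding $T_i$-terminals would merge; at a forget step, when the last bag representative of a block holding $T_i$-terminals is removed and $\ell_i$ is first set to $\mathrm{lost}$ (this interacts with future additions as above); and at a join step, when the two child labels for $T_i$ are incompatible in the joined partition (for instance both $\mathrm{lost}$, or one $\mathrm{lost}$ and one block-valued, or two block labels referring to blocks that are not unified in the join). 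At the root, the algorithm outputs the minimum stored deletion count among states with $\ell_i=\mathrm{sat}$ for every $i$, and accepts iff this value is at most $k$.

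The state count per bag is bounded by $2^{O(w^2)}\cdot B_{w+1}\cdot(w+4)^t\le g(t,w)$, where $B_{w+1}$ denotes the Bell number, and each transition is computable in time polynomial in this count. Hence the total running time is $g(t,\tw(G))\cdot n^{O(1)}$, giving fixed-parameter tractability in $t+\tw(G)$ for all three variants simultaneously. The principal obstacle is the consistent bookkeeping of the $\mathrm{lost}$ label across forget and join nodes: once a component carrying $T_i$-terminals leaves the bag it can never reconnect, so any subsequent $T_i$-terminal appearing in a different sub-instance must force $\ell_i$ to $\mathrm{sat}$, and two $\mathrm{lost}$-labelled components coming from the two children of a join node are likewise permanently disconnected and must also trigger $\mathrm{sat}$. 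Verifying rigorously that this bookkeeping faithfully encodes the ``not all $T_i$-terminals in one component'' requirement of \problemSC{} is the delicate part of the correctness proof.
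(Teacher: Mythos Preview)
Your dynamic-programming plan is a natural alternative to the paper's MSOL/Courcelle argument (the paper in fact sketches a direct DP right after its main proof), but the state you propose is too coarse to be sound. The label $\ell_i\in\{\mathrm{sat},\mathrm{none},\mathrm{lost}\}\cup\pi$ encodes the assumption that, as long as $T_i$ is not yet separated, \emph{all} $T_i$-terminals processed so far lie in a single connected component of $G_x\setminus S$. That invariant need not hold, and patching its failure by ``promoting to $\mathrm{sat}$'' is premature: two terminals lying in different components of the \emph{partial} graph $G_x\setminus S$ may well end up in the same component of $G\setminus S$ once edges above $x$ are processed.

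Concretely, take the edge-deletion variant on the path with vertices $a,b,c$ and edges $ab,bc$, with $T_1=\{a,c\}$, and the non-solution $S=\emptyset$. A nice decomposition that introduces $a$, then $b$, then the edge $ab$, forgets $a$, and then introduces $c$ has $\ell_1$ pointing to the block $\{b\}$ just before $c$ appears. At the introduce-$c$ step, $c$ is an isolated singleton block, so your rule ``$\ell_i$ refers to a different block'' fires and sets $\ell_1=\mathrm{sat}$; but $S=\emptyset$ does not separate $\{a,c\}$, and your DP would report an optimal cut of size $0$. The same premature $\mathrm{sat}$ occurs at join nodes whenever the two children place $T_i$-terminals in blocks that are not yet unified but will merge higher up. (Your introduce-edge clause ``two blocks each holding $T_i$-terminals'' is in fact unreachable under your own single-block invariant, which already signals the inconsistency.)

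The repair is to replace the single-block pointer by a subset of $\pi$: for every block $P$ and every $i$ record whether some already-processed $T_i$-terminal lies in the component of $G_x\setminus S$ meeting $P$. Then $\mathrm{sat}$ is declared only once $T_i$ remains cut even after making the current bag into a clique, i.e.\ only when a $T_i$-terminal sits in a lost component while another $T_i$-terminal sits elsewhere. This is exactly the reachability information in the paper's DP sketch; it costs $O(\tw(G)\cdot t)$ bits per bag and yields running time $2^{O(\tw(G)(t+\log\tw(G)))}\cdot n^{O(1)}$.
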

\begin{proof}
  We first present an MSOL formula for \problemNSC{}, extending the work of Gottlob and Lee~\cite{GottlobLee2007} and Marx~\etal\cite{MarxEtAl2013} for \problemNMC{}.
  Gottlob and Lee~\cite{GottlobLee2007} construct an MSOL formula $\mathrm{connects}(R,x,y)$, equal to
  \begin{equation*}
    R(x) \wedge R(y)  \wedge \forall P \left( \left( P(x) \wedge \lnot P(y) \right)
    \rightarrow
    \left( \exists v,w \left( R(v) \wedge R(w) \wedge P(v) \wedge \lnot P(w) \wedge \mathrm{adj}(v,w) \right) \right) \right),
  \end{equation*}
  which expresses that $x$ and $y$ are connected in the subgraph of $G$ induced by $R$.
  We then extend the formula of Marx~\etal\cite{MarxEtAl2013} to $p > 2$.
  We construct the following MSOL formula:
  \begin{equation*}
    \bigwedge_{i=1}^{t} \exists x, y \left( T_{i}(x) \wedge T_{i}(y) \wedge \lnot (x = y) \wedge \forall R \left( \mathrm{connects}(R,x,y)
    \rightarrow 
    \exists v \left(X(v) \wedge R(v)\right) \right) \right),
  \end{equation*}
  where $X$ is the potential cut set and $T_{i}(x)$ is true if and only if $x$ is in $T_{i}$.
  A slight modification yields MSOL-formulas for \problemESC{} and \problemRNSC{}.
  Note that $X$ is a free set variable in the formulae.

  To apply these formulae, we use Bodlaender's algorithm~\cite{Bodlaender1996} to find a tree decomposition of~$G$ of width $\tw(G)$ in time $f(\tw(G)) \cdot \poly(|V(G)|)$, for some computable function $f$.
  Then we input the necessary MSOL formula $\phi$ and the tree decomposition into the algorithm of Arnborg~\etal\cite{ArnborgEtAl1991}, which runs in time $g(|\phi|,\tw(G))$ for some function $g$ and finds a smallest set $X$ that satisfies $\phi$.
  Since $|\phi|$ is polynomial in $t$, the theorem follows.
\end{proof}

We remark that one can develop an algorithm that has a single-exponential running time of $O^*( 2^{O(\tw(t + \log \tw))} )$ via dynamic programming. 
Because such an algorithm is a straightforward application of known techniques, we only provide a sketch by defining the entries of the dynamic programming table. 
First we compute an approximately optimal tree decomposition in time $O^*(2^{O(\tw)})$~\cite{bodlaender2013c}. For each bag $B$ in the tree decomposition we have a dynamic programming table with entries as follows. Let $S$ be any set of $k'$ edges, $0 \le k' \le k$, in the graph $G_B$ induced by the subtree below $B$. Let $\mc{T}' \subset \mc{T}$ be the terminal sets that are cut by $S$ even after making $B$ into a clique; these terminal sets are cut no matter what we do above $B$. Let $C_1,\ldots,C_r$ be the connected components of $G_B \setminus S$ and let $P_1,\ldots,P_\ell$ be the induced partitioning of the vertices in $B$. For any $T \in \mc{T} \setminus \mc{T}'$ and any part $P_i$ we store whether a vertex in $T$ is connected to a vertex in $P_i$ in $G_B$. This reachability information together with $k'$, $\mc{T}'$, and $P_1,\ldots,P_\ell$ can be stored using $O(\tw(t + \log \tw))$ bits. Thus, the dynamic programming table has $2^{O(\tw(t + \log \tw))}$ possible entries, and for each possible entry we store whether it is realized by some set of edges $S$. At the root, from this information we can extract whether there exists an edge Steiner multicut.

\section{Hardness for Cutsize \texorpdfstring{$k$}{k} and Number of Terminal Sets \texorpdfstring{$t$}{t}}
\label{sec:kt}

In this section, we consider the \problemSC{} problem on general graphs parameterized by $k+t$.
We show that both node deletion versions of the problem, \problemNSC{} and \problemRNSC{}, are $\mathsf{W}[1]$-hard for this parameter. 

\begin{theorem}
\label{thm:k+t}%
  \problemNSC{} and \problemRNSC{} are $\mathsf{W}[1]$-hard for the parameter $k+t$.
\end{theorem}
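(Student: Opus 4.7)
The plan is to reduce the $\mathsf{W}[1]$-hard \problemMCC{} problem, parameterized by the number of colors $k$, to \problemNSC{} with parameter $k+t$. Hardness of \problemRNSC{} then follows immediately from Lemma~\ref{lem:NSCtoRNSC}, which preserves $k$ and~$t$. The design principle is to spend the entire cut budget on ``vertex selection'': one deleted node per color, corresponding to the chosen vertex of that color, and to use the Steiner terminal sets only as adjacency checks between selections.

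Concretely, given an instance of \problemMCC{} with color classes $V_1,\ldots,V_k$ and bichromatic edge sets $E_{i,j}$, I would introduce a global anchor vertex $c$; for each color $i$ two ``pole'' vertices $s^i, t^i$ adjacent to $c$, and a selector $x^i_v$ for every $v\in V_i$ with edges $s^i-x^i_v-t^i$; and for every pair $i<j$ and every edge $(v,w)\in E_{i,j}$ an ``edge witness'' $e^{i,j}_{vw}$ whose only neighbors are $x^i_v$ and $x^j_w$. The terminal sets are
\[
 T^i := \{s^i\}\cup\{x^i_v : v\in V_i\}\qquad\text{and}\qquad T^{i,j}:=\{c\}\cup\{e^{i,j}_{vw}:(v,w)\in E_{i,j}\},
\]
with cut budget $K:=k$. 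Then $K+t = k + k + \binom{k}{2}$ is a function of the \problemMCC{} parameter $k$, so this is a parameterized reduction.

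For the forward direction, a multicolored clique $\{v_1,\ldots,v_k\}$ gives the cut $S:=\{x^i_{v_i}:i=1,\ldots,k\}$: each $T^i$ is separated because its member $x^i_{v_i}$ is removed, and each $T^{i,j}$ is separated because the witness $e^{i,j}_{v_iv_j}$ has both its neighbors $x^i_{v_i}, x^j_{v_j}$ in $S$ and is thus isolated, hence disconnected from $c\in T^{i,j}$. For the converse, I would argue that every feasible deletion set $S$ of size at most $k$ must pick, for each color $i$, exactly one selector $x^i_{v_i}$, and that the resulting $(v_1,\ldots,v_k)$ forms a clique by the separation of the $T^{i,j}$'s. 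The ``exactly one selector per color'' part combines two observations: (a) separating $T^i$ requires removing some vertex of $T^i$ (routine case analysis rules out cuts at $t^i$ or $c$ only, since the $s^i-x^i_v$ edges then keep $T^i$ connected), which already forces at least one deletion inside color $i$'s gadget, hence by budget exactly one; (b) a short reachability analysis shows that with $c\notin S$, a witness $e^{i,j}_{vw}$ is disconnected from $c$ only if both color $i$ and color $j$ are ``blocked'' at $(v,w)$, i.e. either the corresponding selector or both poles are in $S$ on each side.

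The main obstacle is the ``no pole cuts'' step: a cheap way to separate $T^i$ is to delete $s^i$ alone, so I must show this choice is infeasible under the budget. The key point is that with $s^i\in S$ but no $x^i_v\in S$, every witness $e^{i,j}_{vw}$ still reaches $c$ through the alternative route $x^i_v-t^i-c$, so color $i$ is not blocked at any $v$; blocking it everywhere would require additionally removing $t^i$, costing a second node per ``pole-cut'' color. A counting argument comparing cost against benefit ($2$ per pole-cut color versus $1$ per selector-cut color, while each color must contribute at least one cut) then forces all $k$ budget units into selector cuts, after which the separation requirements of the $T^{i,j}$'s translate directly into the adjacency conditions $(v_i,v_j)\in E_{i,j}$ defining a multicolored clique.
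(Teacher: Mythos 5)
Your reduction is correct, and its high-level plan (reduce from \problemMCC{}, use one ``per-color'' gadget to force a selector per color and one ``per-pair'' gadget to check adjacency, with a total of $O(k^2)$ terminal sets) is the same as the paper's. The graph gadgets are in fact almost isomorphic: your anchor $c$ together with the poles $\{s^i,t^i\}_i$ plays the role of the paper's $2k$-clique $C=\{c_1,\ldots,c_{2k}\}$, your selectors $x^i_v$ correspond to the copies of $V_i$, and your witnesses $e^{i,j}_{vw}$ correspond to the paper's subdivision nodes $N_{i,j}$. The genuine difference lies in how each proof forces ``exactly one selector per color''. The paper introduces \emph{two} terminal sets per color, $T_i = V_i\cup\{c_{2i-1}\}$ and $T_i' = V_i\cup\{c_{2i}\}$, and uses a pigeonhole on $T_i\cap T_i'=V_i$ to conclude $S\cap V_i\ne\emptyset$ before the $T_{i,j}$ constraints even enter. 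You keep only one per-color set $T^i=\{s^i\}\cup\{x^i_v\}$, which only forces $S\cap(\{s^i\}\cup\{x^i_v\})\ne\emptyset$, and then rule out the ``cheat'' of picking $s^i$ by showing that the backup pole $t^i$ keeps every witness $e^{i,j}_{vw}$ connected to $c$, so some $T^{i,j}$ would remain unseparated. Both mechanisms are valid; the paper's is self-contained at the vertex-selection stage, while yours interleaves the per-color and per-pair constraints and saves $k$ terminal sets.

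One small exposition point: the ``counting argument'' in your final paragraph can be tightened. Once you observe that the $k$ sets $T^1,\ldots,T^k$ are pairwise disjoint and must each be hit by $S$ with $|S|\le k$, you immediately get $|S|=k$ and $S\subseteq\bigcup_i T^i$, hence $c\notin S$, $t^i\notin S$ for all $i$, and no witness is in $S$. Then, if $S\cap T^i=\{s^i\}$ for some $i$, for every $j$ and every witness $e^{i,j}_{vw}$ the path $e^{i,j}_{vw}\,\text{--}\,x^i_v\,\text{--}\,t^i\,\text{--}\,c$ survives, so $T^{i,j}$ is not separated. This makes the ``2 versus 1'' accounting unnecessary and also sidesteps the (false, but harmless here) implicit claim that deleting both poles would suffice to block a color, since even that may leave detours through other colors' selectors. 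Finally, you should note the degenerate cases $V_i=\emptyset$ or $E_{i,j}=\emptyset$, which make $T^i$ or $T^{i,j}$ a singleton; such sets can never be separated, matching the fact that \problemMCC{} is then a ``no''-instance.
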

\begin{proof}
  We present a parameterized reduction from the \problemMCC{} problem~\cite{FellowsEtAl2009} to \problemNSC{}.
  Let $(H,k)$ be an instance of \problemMCC{}, and let~$V_{i}$ and $E_{i,j}$ be as in the definition of \problemMCC{}.
  We then create the following instance of \problemNSC{}.
  First, we subdivide each edge of $H$, and let $N_{i,j}$ denote the set of nodes that were created when subdividing the edges of $E_{i,j}$.
  Then, add a complete graph $C$ with $2k$ nodes, where we denote the nodes of $C$ by $c_{1},\ldots,c_{2k}$, and make all nodes of $V_{i}$ adjacent to $c_{2i-1}$ and $c_{2i}$ for each $i=1,\ldots,k$.
  Let $G$ denote the resulting graph.
  Observe that $G[V(H)]$ and $G[\bigcup_{i,j} N_{i,j}]$ are both independent sets of $G$.
  We then create terminal sets $T_{i} = V_{i} \cup \{c_{2i-1}\}$ and $T_{i}' = V_{i} \cup \{c_{2i}\}$, and terminal sets $T_{i,j} = N_{i,j} \cup V(C)$. Let $\mc{T} = \{T_{i},T_{i}' \mid i=1,\ldots,k\} \cup \{ T_{i,j} \mid i\not=j, i,j = 1,\ldots,k\}$.
  Then the created instance is $(G,\mc{T},k)$.

  Suppose that $(H,k)$ is a ``yes''-instance of \problemMCC{}, and let $K$ denote a clique of $H$ such that $V(K) \cap V_{i} \not= \emptyset$ for each $i$.   
  Pick a node $v_{i} \in V(K) \cap V_{i}$ and let $S = \{v_{i} \mid i=1,\ldots,k\}$.
  Observe that $v_{i}$ disconnects terminal sets $T_{i}$ and $T_{i}'$.
  Further, if we let $n_{i,j}$ denote the subdivision node of the edge $(v_{i},v_{j}) \in E(H)$, then $v_{i}$ and $v_{j}$ disconnect $n_{i,j}$ from the rest of $T_{i,j}$.
  Finally, $|S| = k$.
  Therefore, $(G,\mc{T},k)$ is a ``yes''-instance of \problemNSC{}.

  Suppose that $(G,\mc{T},k)$ is a ``yes''-instance of \problemNSC{}, and let $S \subseteq V(G)$ denote a \nameNSC{} of $G$ with respect to terminal sets $\mc{T}$ such that $|S| \leq k$.
  We claim that $H[S]$ is a multicolored clique of $H$.
  First, observe that to disconnect $T_{i}$ and $T_{i}'$, we need that $S \cap T_{i} \not= \emptyset$ and $S \cap T_{i}' \not= \emptyset$.
  Since $|S| \leq k$, we know that $S \cap T_{i} \cap T_{i}' \not= \emptyset$ for each $i=1,\ldots,k$.
  This implies that $|S| = k$, that $S \subseteq V(H)$, and that $S \cap V_{i} \not= \emptyset$ for each $i=1,\ldots,k$.
  It remains to show that $H[S]$ is a clique. Let $v_{i}$ denote the node in $S \cap V_{i}$.
  Suppose that $(v_{i},v_{j}) \not\in E(H)$ for some $i,j$.   
  Then consider the terminal set $T_{i,j}$, and observe that for any node $n \in N_{i,j}$ at least one endpoint of the edge corresponding to $n$ is not in $S$.
  Since $S \cap V(C) = \emptyset$, this implies that $T_{i,j}$ is not disconnected by $S$, a contradiction.
  It follows that $H[S]$ is a clique, and thus $(H,k)$ is a ``yes''-instance of \problemMCC{}.

  For \problemRNSC{} hardness follows from the statement for \problemNSC{} and Lemma~\ref{lem:NSCtoRNSC}.
\end{proof}

\section{Steiner Multicuts in Trees}
\label{sec:steinercutsintrees}

In this section, we consider the complexity of \problemSC{} on trees and completely characterize it with respect to various parameters.
Throughout, we consider instances $(G,\mc{T},k)$ of \problemSC{} (in either the edge, node, or restricted-node variant) where $G$ is a tree.
Here we assume that $G$ is rooted at an arbitrary node $r \in V(G)$.
For each terminal set $T \in \mc{T}$, let $G_{T}$ denote the subtree of $G$ induced by the terminals of $T$.
That is, $G_{T}$ is the union of the shortest paths between each pair of terminals of $T$.
Since $G$ is rooted, $G_{T}$ also has a root, denoted by $r(G_{T})$ and called a \emph{terminal root}.

\subsection{Polynomial-time Algorithm for \problemNSC{}}
It is known that \problemNMC{} can be solved in polynomial time on trees~\cite{CalinescuEtAl2003}.
Here, we extend this algorithm to \problemNSC{}.

\begin{theorem} \label{thm:node:easy}
  \problemNSC{} can be decided in linear time on trees.
\end{theorem}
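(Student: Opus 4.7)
The plan is to extend the classical greedy algorithm of Calinescu et al.~\cite{CalinescuEtAl2003} for \problemNMC{} on trees to the Steiner setting. First I would root $G$ at an arbitrary node, compute $r(G_T)$ for every terminal set $T$ (as the LCA of $T$), and reject immediately any instance that contains a terminal set of size at most one. The algorithm then processes the terminal sets in non-increasing order of $\mathrm{depth}(r(G_T))$: for each $T$ in turn, if no already-deleted node lies in $G_T$, it deletes $r(G_T)$. The algorithm accepts iff at most $k$ deletions were made.

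To prove correctness, let $T^{(1)},\ldots,T^{(k)}$ be the terminal sets for which the algorithm actually deleted a node, in the order in which deletions occurred, and let $v_j := r(G_{T^{(j)}})$. The key lemma I would prove is that the Steiner trees $G_{T^{(1)}},\ldots,G_{T^{(k)}}$ are pairwise disjoint. First, $v_j \notin G_{T^{(j')}}$ for every $j \ne j'$: if $j < j'$, then $v_j$ was already in the cut at step $j'$, so $v_j \in G_{T^{(j')}}$ would have caused $T^{(j')}$ to be deemed already separated; if $j > j'$, then $\mathrm{depth}(v_j) \le \mathrm{depth}(v_{j'})$, and since every node of $G_{T^{(j')}}$ lies in the subtree of $v_{j'}$ and hence has depth at least $\mathrm{depth}(v_{j'})$, membership would force depth equality and thus $v_j = v_{j'}$, which is excluded by the previous case. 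Now suppose $x \in G_{T^{(j)}} \cap G_{T^{(j')}}$ with $j < j'$; then $x$ is a common descendant of $v_j$ and $v_{j'}$, so $v_{j'}$ is a proper ancestor of $v_j$, and the subtree rooted at $x$ contains a terminal of $T^{(j')}$ whose path up to $v_{j'}$ inside $G_{T^{(j')}}$ necessarily passes through $v_j$, contradicting $v_j \notin G_{T^{(j')}}$. Since any feasible cut $S^*$ must contain at least one node from each $G_{T^{(j)}}$ and these are pairwise disjoint, $|S^*| \ge k = |S_G|$, so the greedy is optimal.

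For the linear running time, I would use a linear-time LCA preprocessing so that every $r(G_T)$ is computed in $O(|T|)$ via iterated pairwise LCAs, for a total of $O(n + \sum_i |T_i|)$, and bucket-sort the terminal sets by $\mathrm{depth}(r(G_T))$ in $O(n + t)$. The only nontrivial primitive is checking, for each $T$, whether the current cut intersects $G_T$. I would fold this into a post-order traversal of $G$ that handles every terminal set when its root is visited: at that moment all deletions relevant to $G_T$ lie inside subtrees already completed by the DFS, and either a disjoint-set structure that contracts each deleted node with its subtree, or a ``subtree-has-deletion'' flag propagated upward, suffices to make each test take amortized linear time in $|T|$. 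The main obstacle I foresee is the depth case analysis in the disjointness lemma; once that is settled, the linear-time implementation is routine.
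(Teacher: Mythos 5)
Your algorithm coincides with the paper's: process terminal sets in non-increasing order of $\mathrm{depth}(r(G_T))$ and delete $r(G_T)$ unless the cut already hits $G_T$. The correctness argument, however, takes a genuinely different route. The paper proves a local exchange lemma --- if $r(G_T)$ is a deepest terminal root, then some optimal solution contains $r(G_T)$, because deleting any node in the subtree rooted at $r(G_T)$ separates a \emph{subset} of the terminal sets that deleting $r(G_T)$ separates --- and then argues greedy correctness by induction. You instead prove a global packing lower bound: the Steiner trees $G_{T^{(1)}},\ldots,G_{T^{(k)}}$ of the $k$ terminal sets for which the greedy actually deleted are pairwise disjoint (the depth-ordering forces distinct roots, and then an ancestry argument shows any common node $x$ would place one root inside the other's Steiner tree, which the greedy check rules out), so every feasible solution needs $\geq k$ nodes. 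Your disjointness lemma is correct and gives a clean explicit certificate of optimality, whereas the paper's exchange argument is shorter but implicit. Both approaches buy the same result.

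One imprecision in the implementation sketch: a plain ``subtree-has-deletion'' flag at each node does \emph{not} suffice to test whether the current cut intersects $G_T$ when you reach $v = r(G_T)$, because a deletion can lie inside the subtree rooted at $v$ yet on a branch of $G$ that $G_T$ never enters. What you actually need to track is, per terminal, whether its path up to the current node is still intact --- which is exactly what the paper's bookkeeping does (maintain the set of terminal instances that have propagated intact to each node, merged upward during an inverse-BFS/post-order pass), and what a union-find that contracts each non-deleted node with its parent as the traversal backtracks would also give you. Since the deleted subtrees turn out to be pairwise disjoint, one can even simply delete the entire subtree rooted at $r(G_T)$ and flag the terminals it contains in total time $O(n + \sum_i |T_i|)$. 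So the linear-time claim stands; the flag-only variant as literally stated would be incorrect, but the union-find variant and the paper's merge-upward scheme both work.
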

\begin{proof}
  Let $(G,\mc{T},k)$ be an instance of \problemNSC{} where $G$ is a tree.
  Let $r$ and the terminal roots of $G$ be defined as before.
  A crucial observation is that, for any $T \in \mc{T}$, if the subtree $G' \supseteq G_{T}$ of $G$ rooted at $r(G_{T})$ contains no terminal roots except $r(G_{T})$, then there is an optimal solution that contains $r(G_{T})$.
  To see this, note that any solution must contain at least one node of $G'$ in order to disconnect $T$.
  Further, since $G'$ has no terminal roots except $r(G_{T})$, the set of terminal sets disconnected when removing $r(G_{T})$ is a superset of the set of terminal sets disconnected when removing any node of $G'$.

  Using the above crucial observation, a greedy strategy becomes apparent.
  First, if $k=0$ and $\mc{T} \not=\emptyset$, then return ``no''; if $k \geq 0$ and $\mc{T} = \emptyset$, then return ``yes''.
  Otherwise, compute the terminal root of each terminal set.
  Then, find a terminal root $c$ that is deepest in $G$.
  Add $c$ to the cut, and recurse on the instance $(G', \mc{T}', k-1)$, where $G'$ is $G$ minus the subtree rooted at~$c$, and $\mc{T}'$ is $\mc{T}$ minus every terminal set that is disconnected by $c$.

  The above algorithm can be implemented in linear time, as follows.
  First, we find all terminal roots.
  To this end, index the nodes of $G$ according to a post-order traversal.
  Then, for each terminal set $T \in \mc{T}$, its terminal root is the lowest common ancestor of the lowest and highest indexed terminal in $T$.
  We can use the lowest common ancestor data structure of Harel and Tarjan~\cite{HarelTarjan1984} to find all these lowest common ancestors, and thus all terminal roots, in linear time.
  Second, we perform the greedy algorithm.
  For each node of $G$, we maintain a queue with a set of terminals on that node.
  Now apply an inverse breadth-first search; that is, number the node first visited by a breadth-first search (i.e., $r$) by $|V(G)|$, and the last one by $1$, and then visit the nodes starting with number $1$ up to $|V(G)|$.
  Suppose that we visit a node $v$.
  If $v$ is not a terminal root or $v$ is a terminal root for a terminal set that has already been disconnected, then merge the set of terminals on $v$ with the set of terminals on the parent of $v$.
  If $v$ is a terminal root for a terminal set that has not been cut yet, then add $v$ to the cut and mark all terminal sets that have a terminal on $v$ as cut.
  Observe that the instance must at least store (say, in a queue) on each node which terminal sets contain that node as a terminal (or, equivalently, for each terminal set all nodes that it contains).
  Therefore, the above algorithm indeed runs in linear time in terms of the input size $n + m + \sum_{T \in \mc{T}} |T|$.
\end{proof}

\subsection{Parameterized Analysis for \problemESC{} and \problemRNSC{}}
The situation for \problemESC{} and \problemRNSC{} is completely different than for \problemNSC{}.
C{\u a}linescu \etal\cite{CalinescuEtAl2003} proved that both \problemEMC{} and \problemRNMC{} are $\mathsf{NP}$-complete on trees.
This implies that \problemESC{} and \problemRNSC{} are para-$\mathsf{NP}$-complete on trees for parameter $p$.
In this subsection, we explore all possible other parameterizations for these problems on trees.

\subsubsection{Parameter \texorpdfstring{$k$}{k}}
We briefly recall the definition of the \problemHS{} problem, which is known to be $\mathsf{W}[2]$-complete~\cite{DowneyFellows1999}.
Recall that \problemHS{} is the problem of, given a universe $U$, a family $\mc{F}$ of subsets of $U$, and an integer $k$, to decide whether there is a set $H \subseteq U$ with $|H| \leq k$ such that $F \cap H \not=\emptyset$ for each $F \in \mc{F}$ (i.e., $H$ is a hitting set of size at most $k$).

\begin{theorem} \label{thm:edge:hard-k}  \label{thm:rnode:hard-k}
  \problemESC{} and \problemRNSC{} are $\mathsf{W}[2]$-hard on trees for the parameter $k$, even if the terminal sets are pairwise disjoint.
\end{theorem}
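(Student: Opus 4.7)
The plan is to prove both $\mathsf{W}[2]$-hardness claims by polynomial-time parameterized reductions from \problemHS{}, exploiting the Steiner structure of \problemSC{} to encode ``some element of $F$ is in the hitting set''. We may assume $|F| \ge 2$ for every $F$ in the input by a trivial linear-time preprocessing. For \problemESC{}, I would construct a tree rooted at a node $r$ with pendants $a_1, \ldots, a_n$ connected to $r$ via ``element edges'' $e_1, \ldots, e_n$; for every pair $(F, u_i)$ with $u_i \in F$, attach a fresh leaf terminal $t_{F, i}$ to $a_i$. Define $T_F := \{t_{F, i} : u_i \in F\}$, which makes all terminal sets pairwise disjoint. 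Cutting $e_i$ isolates exactly the terminals $t_{F, i}$ with $F \ni u_i$ from the rest of the tree, so any hitting set of size $k$ immediately yields an edge Steiner multicut of size $k$.

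The main obstacle for \problemESC{} is to rule out a multicut of size $\le k$ that beats every hitting set by using ``leaf cuts'' $(a_i, t_{F, i})$, each of which cuts only the single set $T_F$. The key step is a dominance argument: given any multicut $S$, replacing any leaf cut $(a_i, t_{F, i}) \in S$ by the element edge $e_i$ yields a set of the same size (or smaller, if $e_i$ was already in $S$) that still disconnects $t_{F, i}$ from $r$ (so $T_F$ remains cut, since $|F| \ge 2$) and additionally separates $t_{F', i}$ from $r$ for every $F' \ni u_i$. Iterating the replacement produces a multicut supported only on element edges, whose indices form a hitting set of size $\le |S| \le k$, giving the reverse direction.

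For \problemRNSC{}, the same star-shaped construction fails immediately: the root would be a non-terminal cut vertex whose single deletion cuts all $T_F$. My fix is to replace the root by a ``spine'' path of terminal anchors $r_{F_1}, \ldots, r_{F_m}$, one per set in $\mc F$, which become undeletable in the restricted-node variant. Attach each non-terminal element node $a_i$ as a pendant to any spine vertex, and for each $F \ni u_i$ attach a fresh terminal leaf $t_{F, i}$ to $a_i$; set $T_F := \{r_F\} \cup \{t_{F, i} : u_i \in F\}$, yielding pairwise disjoint terminal sets of size $\ge 2$. The only deletable nodes are now the $a_i$, and deleting $a_i$ isolates precisely the terminals $\{t_{F, i} : F \ni u_i\}$ while every $r_F$ remains on the intact spine. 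Therefore a deletion set $D$ is a restricted node Steiner multicut iff $\{u_i : a_i \in D\}$ is a hitting set of $\mc F$. The main design obstacle here is precisely the spine: any non-terminal cut vertex in the construction would trivialise the node deletion problem, which is why the entire connective backbone of the tree must be forced into the terminal sets.
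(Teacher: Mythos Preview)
Your proposal is correct and follows the same core idea as the paper: a parameterized reduction from \problemHS{} where each universe element becomes a single ``choice point'' (an edge for \problemESC{}, a non-terminal node for \problemRNSC{}) whose removal cuts precisely the terminal sets corresponding to sets containing that element, with a push-up/dominance argument to normalize solutions.

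The structural details differ slightly and, in one respect, improve on the paper. The paper hangs a \emph{path} $P_u$ of length $|\mc{F}(u)|$ off a single root $r$ (for \problemRNSC{}, via an intermediate non-terminal $n_u$) and places $r$ in \emph{every} terminal set $T_F$; it remarks afterwards that the paths could be replaced by stars. Your star-based construction is exactly that variant. More importantly, by using per-set leaves $t_{F,i}$ (and, for \problemRNSC{}, a spine of per-set anchors $r_F$) rather than a shared root, your terminal sets are genuinely pairwise disjoint---the paper's written construction puts $r$ in every $T_F$, so as stated it does not actually deliver the ``pairwise disjoint'' strengthening claimed in the theorem. Your spine idea for \problemRNSC{} is the natural fix, and your dominance argument for \problemESC{} (pushing leaf cuts up to element edges, using $|F|\ge 2$) is equivalent to the paper's maximality argument on $S \cap \{e_u\}$.
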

\begin{proof}
  We reduce from \problemHS{}.
  Let $(U, \mc{F},k)$ be an instance of this problem, and let $\mc{F}(u) = \{ F \in \mc{F} \mid u \in F \}$.
  We now build the following graph. For each $u \in U$, we build a path~$P_{u}$ on $|\mc{F}(u)|$ nodes.
  Further, let $\sigma_{u} : \mc{F}(u) \rightarrow V(P_{u})$ be an arbitrary bijection.
  We then add a root $r$, and for each $u \in U$, we connect $r$ to an end of $P_{u}$ through a new edge $e_{u}$.
  This is the tree $G$.
  We now build terminal sets.
  Let $T_{F} = \{r\} \cup \{ \sigma_{u}(F) \mid \forall u \in F \}$ for any $F \in \mc{F}$, and let $\mc{T} = \{T_{F} \mid F \in \mc{F} \}$.
  The final instance of \problemESC{} is $(G,\mc{T},k)$.

  Suppose that $\mc{F}$ has a hitting set $H$ of size at most $k$.
  Let $S = \{e_{u} \mid u \in H\}$.
  Note that if some terminal set $T_{F}$ is not disconnected in $G \setminus S$, then this contradicts that $H \cap F \not= \emptyset$.
  Further, $|S| \leq k$, and thus $S$ is a solution for $(G,\mc{T},k)$.

  Suppose that $(G,\mc{T},k)$ has a solution, and let $S$ be a solution such that $S \cap \{e_{u} \mid u \in V(G)\}$ is maximum.
  Since $e_{u}$ disconnects at least the terminal sets disconnected by any edge of $P_{u}$, $S \setminus \{e_{u} \mid u \in U\} = \emptyset$.
  Let $H = \{u \mid e_{u} \in S\}$.
  Note that if $H \cap F = \emptyset$ for some $F \in \mc{F}$, then $T_{F}$ is not disconnected by $S$.
  Further, $|H| \leq k$, and thus $H$ is solution for $(U,\mc{F},k)$.

  A modification of the reduction yields the result for \problemRNSC{}.
  Let $(U, \mc{F},k)$ be an instance of \problemHS{}.
  We again build the paths $P_{u}$ and the terminal sets $T_{F}$ (which include the root $r$).
  However, we do not connect $r$ to an end of $P_{u}$ directly; instead, for each $u \in U$, we add a node $n_{u}$, and connect $n_{u}$ to both $r$ and an end of $P_{u}$.
  Let~$G$ denote the resulting tree, and let $\mc{T} = \{T_{F} \mid F \in \mc{F} \}$.
  The final instance of \problemRNSC{} is $(G,\mc{T},k)$.

  Suppose that $\mc{F}$ has a hitting set $H$ of size at most $k$.
  Let $S = \{n_{u} \mid u \in H\}$.
  Note that if some terminal set~$T_{F}$ is not disconnected in $G \setminus S$, then this contradicts that $H \cap F \not= \emptyset$.
  Further, $|S| \leq k$, and thus $S$ is a solution for $(G,\mc{T},k)$.

  Suppose that $(G,\mc{T},k)$ has a solution $S$.
  Since the only non-terminal nodes are the $n_{u}$, we let $H = \{u \mid n_{u} \in S\}$ and note that $|H| = |S| \leq k$.
  Further, if $H \cap F = \emptyset$ for some $F \in \mc{F}$, then~$T_{F}$ is not disconnected by $S$.
  Hence, $H$ is solution for $(U,\mc{F},k)$.
\end{proof}
Note that in the constructions of both theorems, we can obtain a slightly different tree structure by replacing the paths $P_{u}$ with stars.
Further, if we do not insist that the terminal sets are pairwise disjoint, then both constructions can be simplified to a star by contracting each path~$P_{u}$ into a single node.

\subsubsection{Kernelizaton for \texorpdfstring{$t$}{t} and \texorpdfstring{$k+t$}{k+t}}
We now consider whether \problemESC{} and \problemRNSC{} have a polynomial kernel on trees for the parameter $t$.
We prove the following stronger result.

\begin{theorem}
\label{thm:edge:nokernel-kt}
\label{thm:rnode:nokernel-kt}
  \problemESC{} and \problemRNSC{} have no polynomial kernel on trees for the parameter $k+t$, even if the terminal sets are pairwise disjoint, unless the polynomial hierarchy collapses to the third level.
\end{theorem}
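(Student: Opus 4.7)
The plan is to rule out a polynomial kernel via an \emph{OR-cross-composition} in the framework of Bodlaender, Jansen, and Kratsch: it suffices to exhibit a polynomial-time algorithm that takes $q$ instances $x_{1},\ldots,x_{q}$ of some NP-hard classical problem $L$, all of the same size $s$ from a suitable polynomial equivalence class, and produces a single instance $(y,\kappa)$ of \problemESC{} on trees with pairwise disjoint terminal sets such that (i) $(y,\kappa)$ is a YES-instance iff at least one $x_{i}\in L$, and (ii) the parameter $\kappa=k+t$ of the composed instance is bounded by $\poly(s+\log q)$. Such a composition rules out a polynomial kernel for \problemESC{} on trees in $k+t$ under the stated collapse assumption. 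For the source problem $L$, I would take \problemEMC{} on trees (or, equivalently, the tree-restricted \problemHS{}-style instances produced in the reduction of Theorem~\ref{thm:edge:hard-k}), which is NP-hard by the result of C{\u a}linescu et al. Choosing the equivalence relation so that all input instances share a common host tree $T_{0}$, a common budget $k$, a common number $t$ of terminal sets, and (after padding) a common universe is a standard preprocessing step.

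The composed instance will be built by gluing the $q$ copies at a new root $r^{*}$ and attaching a \emph{selector gadget} of $r=\lceil\log q\rceil$ pair-leaf structures (or length-$2$ pendant paths) whose edge cuts encode the binary representation of an index $i\in\{1,\ldots,q\}$. Each terminal set of instance $i$ receives a \emph{private} pendant vertex incident to the correct half of every selector pair (placed so as to keep terminal sets pairwise disjoint), together with a private copy of the relevant leaves from the shared universe. The intended semantics is that cutting the $r$ selector edges that spell out index $i$ automatically disconnects (for free) all terminal sets belonging to instances $i'\neq i$, while the remaining budget of $k$ must be spent on the universe edges and must constitute a valid solution to instance $i$ itself. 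The composed budget is $k'=k+r$. Correctness of the equivalence ``$(y,k')$ is YES $\iff$ some $x_{i}\in L$'' follows by carrying out, in each direction, the same bookkeeping that worked in the intermediate problem \problemNAE{}: a $b$-bit selector isolates a unique instance, and the extra selector leaves added to every set are both necessary and sufficient to enforce ``loyalty'' of the cut to one instance.

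The central obstacle is keeping $t'$ polynomial in $s+\log q$: the straightforward construction creates one new terminal set per (instance, original set) pair and yields $t'=qt$, which is far too large. Overcoming this will require that the $q$ instance copies \emph{share} their terminal sets as much as possible; concretely, I plan to collapse all ``$j$-th'' terminal sets across instances into a single composed terminal set whose $q$ contributions are distinguished only by the private selector pendants, so that $t'=t+O(r)=\poly(s+\log q)$. Verifying that this collapsed scheme preserves pairwise disjointness (which the theorem explicitly requires) and still yields equivalence with the OR of the $x_{i}$ is the technically delicate step, since unions of terminal sets typically become \emph{easier} to cut, whereas we need the opposite behaviour. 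Once the edge version is in place, the same construction together with Lemma~\ref{lem:NSCtoRNSC} (and the pendant-vertex trick used in the proof of Theorem~\ref{thm:rnode:hard-k}) transfers the lower bound to \problemRNSC{} on trees, giving the full statement.
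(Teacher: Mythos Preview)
Your approach has a genuine gap at exactly the point you flag as ``technically delicate'': merging the $j$-th terminal sets across the $q$ instances destroys soundness. In \problemESC{} a terminal set is satisfied as soon as \emph{some} pair of its terminals is separated, so once the merged set $T_{j}$ contains terminals from several instance subtrees (all hanging off a common root $r^{*}$), a single edge deletion in \emph{any} subtree separates terminals of $T_{j}$ that live in different subtrees. Hence, after spending the $r$ selector edges, the remaining $k$ edges can be spread arbitrarily across the $q$ subtrees and still disconnect every merged $T_{j}$, without solving any individual $x_{i}$. You note that ``unions of terminal sets typically become easier to cut, whereas we need the opposite behaviour,'' but you do not propose any mechanism to enforce that opposite behaviour; without one, the backward direction of the OR-composition fails. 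More generally, an OR-cross-composition into a problem where the number of constraints is itself the parameter is exactly the hard case, and nothing in your sketch gets around it.

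The paper avoids this obstacle entirely by using a polynomial parameter transformation instead of a cross-composition. It first observes (via a standard duality with \problemSetC{}) that \problemHS{} has no polynomial kernel for the parameter $k+|\mc{F}|$ unless the polynomial hierarchy collapses. The reductions already given in Theorem~\ref{thm:edge:hard-k} from \problemHS{} to \problemESC{} and \problemRNSC{} on trees map the solution size to $k$ and the number of sets $|\mc{F}|$ to the number $t$ of terminal sets, and hence are polynomial parameter transformations for $k+t$; since all problems involved are $\mathsf{NP}$-complete, the framework of Bodlaender~\etal transfers the kernel lower bound directly. This is a two-line proof once Theorem~\ref{thm:edge:hard-k} is in place, and it bypasses the parameter-blowup issue that blocks your composition.
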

Note that this theorem indeed implies that \problemESC{} and \problemRNSC{} have no polynomial kernel on trees for the parameter $t$.

We need the observation that \problemHS{} has no polynomial kernel for the parameter solution size plus the number of sets, unless the polynomial hierarchy collapses to the third level.
To see this, Dom~\etal\cite[Theorem~2]{DomEtAl2009} prove that \problemSetC{} has no polynomial kernel for the parameter solution size plus the size of the universe, unless the polynomial hierarchy collapses to the third level.
Consider an instance $(E,\mc{S}, k)$ of \problemSetC{}.
For each $S \in \mc{S}$, create an element $u_{S}$, and let $U$ be the set of all these elements.
For each $e \in E$, create a set $F_{e} = \{ u_{S} \mid e \in S\}$, and let $\mc{F}$ be the family of all these sets.
Then, $(E,\mc{S},k)$ is a ``yes''-instance of \problemSetC{} if and only if $(U,\mc{F},k)$ is a ``yes''-instance of \problemHS{}.
Note that $|\mc{F}| = |E|$.
This completes a polynomial parameter transformation.
Since both \problemHS{} and \problemSetC{} are $\mathsf{NP}$-complete, the observation follows from Bodlaender~\etal\cite{BodlaenderEtAl2011}.

\begin{proof}[Proof of Theorem~\ref{thm:edge:nokernel-kt}]
  Note that the reductions of Theorem~\ref{thm:edge:hard-k} and Theorem~\ref{thm:rnode:hard-k} give polynomial parameter transformations for $k+t$.
  Further, \problemESC{} and \problemRNSC{} on trees and \problemHS{} are $\mathsf{NP}$-complete.
  The theorem then follows from Bodlaender et al.~\cite{BodlaenderEtAl2011}.
\end{proof}

\subsubsection{Parameter \texorpdfstring{$k+p$}{k+p}}
This result generalizes the known algorithms for \problemEMC{} and \problemRNSC{} on trees for the parameter~$k$~\cite{GuoNiedermeier2005,GuoEtAl2006}.

\begin{theorem}
\label{thm:edge:easy-kp}
\label{thm:rnode:easy-kp}
  \problemESC{} and \problemRNSC{} are fixed-parameter tractable on trees for parameter $k+p$.
\end{theorem}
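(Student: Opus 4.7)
The proof uses a bounded search tree algorithm generalizing the Guo--Niedermeier scheme for \problemEMC{} on trees. Root $G$ at an arbitrary vertex, giving each terminal set $T_i$ a terminal root $r(G_{T_i}) = \mathrm{LCA}(T_i)$ in $G_{T_i}$. At each recursive call we return YES if $\mathcal{T} = \emptyset$ and NO if $k=0$ while $\mathcal{T} \ne \emptyset$; otherwise we select a terminal set $T_i$ whose $r(G_{T_i})$ has maximum depth and branch on at most $p$ candidate cut elements---the edges of $G_{T_i}$ incident to $r(G_{T_i})$ for \problemESC{}, and the non-terminal vertices $V^* \subseteq V(G_{T_i})$ whose proper $G_{T_i}$-ancestors are all terminals for \problemRNSC{}. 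In each branch we delete the chosen element, drop from $\mathcal{T}$ every terminal set it cuts, and recurse with $k-1$. The correctness of the branching rests on an exchange lemma showing that some optimal solution must contain a candidate element.

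For \problemESC{}, there are at most $|T_i| \le p$ candidate edges. If $S$ is an optimal solution avoiding all of them, pick any $e \in S \cap E(G_{T_i})$ and let $c$ be the child of $r(G_{T_i})$ in $G_{T_i}$ in whose subtree $e$ lies; then $(S \setminus \{e\}) \cup \{(r(G_{T_i}), c)\}$ remains a valid solution of the same size. It cuts $T_i$ since the LCA property guarantees terminals of $T_i$ both in the subtree of $c$ and outside it; and it cuts every $T_j$ that $e$ cut, because otherwise $T_j$ would have to lie entirely on one side of $(r(G_{T_i}), c)$: either entirely in the subtree of $c$---forcing $r(G_{T_j})$ strictly below $r(G_{T_i})$ and contradicting the deepest-root choice---or entirely outside it, in which case the interior edge $e$ cannot have cut $T_j$ either. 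For \problemRNSC{}, a short counting argument bounds $|V^*| \le p$ (the subtrees of $G_{T_i}$ rooted at distinct $V^*$-elements are pairwise incomparable and each contains at least one terminal of $T_i$), and the analogous exchange replaces any non-terminal $v \in S \cap V(G_{T_i})$ by its topmost non-terminal $G_{T_i}$-ancestor, which by definition lies in $V^*$; the verification of the two side cases is parallel. If $V^* = \emptyset$ then $G_{T_i}$ contains no non-terminals and $T_i$ cannot be cut at all, so we return NO.

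Each step branches into at most $p$ subproblems and decreases $k$ by one, yielding a search tree of size at most $p^k$ and a total running time of $p^{O(k)} \cdot \mathrm{poly}(n)$, which establishes fixed-parameter tractability in the parameter $k+p$. The main obstacle is the exchange lemma: in the edge version the two ``bad'' cases are ruled out cleanly by the deepest-root choice, but in the node version one must additionally accommodate the possibility that $r(G_{T_i})$ is itself a terminal and hence not deletable, which is precisely why the frontier $V^*$ is defined relative to the maximal ``terminal-only'' region rooted at $r(G_{T_i})$ rather than simply to the children of $r(G_{T_i})$.
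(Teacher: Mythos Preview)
Your approach is correct and essentially matches the paper's: for \problemESC{} it is identical (deepest terminal root, branch on the $\le p$ edges of $G_{T_i}$ incident to $r(G_{T_i})$, same exchange argument); for \problemRNSC{} you achieve the same $p^k$ branching but via a slightly different device---the paper first contracts every terminal--terminal edge so that whenever $r(G_T)$ is a terminal all its $G_T$-children are non-terminals and one can branch directly on them, whereas you skip the preprocessing and instead define the frontier $V^*$ of topmost non-terminals, which yields the same candidates after the contraction would have been performed. Both routes are sound; your formulation is arguably cleaner since it avoids modifying the instance, while the paper's contraction step makes the case analysis shorter (only two cases: $r(G_T)$ terminal or not).
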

\begin{proof}
  Let $(G, \mc{T}, k)$ be an instance of \problemESC{}, where $G$ is a tree. Let $r$ and the terminal roots of $G$ be defined as before.
  We use the following crucial observation, which is similar to the crucial observation made in Theorem~\ref{thm:node:easy}: for any $T \in \mc{T}$, if the subtree $G' \supseteq G_{T}$ of $G$ rooted at $r(G_{T})$ contains no terminal roots except $r(G_{T})$, then there is an optimal solution that contains an edge of $G_{T}$ incident to $r(G_{T})$.
  To see this, note that any solution must contain at least one edge of $G_{T}$ in order to disconnect $T$.
  Further, since $G'$ has no terminal roots except $r(G_{T})$, the set of terminal sets disconnected when removing an edge $e \in E(G_{T})$ is a subset of the set of terminal sets disconnected when removing the edge incident to $r(G_{T})$ on the path in~$G$ from $e$ to $r(G_{T})$.

  Using the above crucial observation, a branching strategy becomes apparent.
  First, if $k=0$ and $\mc{T} \not=\emptyset$, then return ``no''; if $k\geq 0$ and $\mc{T} = \emptyset$, then return ``yes''. Otherwise, compute the terminal roots of each terminal set.
  Then, find a terminal set $T$ for which $r(G_{T})$ is deepest in $G$.
  Since $|T| \leq p$, $G_{T}$ has at most $p$ edges incident on $r(G_{T})$.
  Branch on all such edges $e$.
  Then, add~$e$ to the cut, and recurse on the instance $(G', \mc{T}', k-1)$, where $G'$ is the connected component of $G \setminus e$ that contains $r(G_{T})$, and $\mc{T}'$ is $\mc{T}$ minus every terminal set that is disconnected by $e$.
  If any of the branches results in a ``yes'', then return ``yes''; otherwise, return ``no''.

  Correctness of the algorithm is immediate from the crucial observation.
  Further, observe that the algorithm has at most $p^{k}$ branches.

  To prove the theorem for \problemRNSC{}, we essentially combine the crucial observation of Theorem~\ref{thm:node:easy} with the above crucial observation for the edge version.
  Let $(G, \mc{T}, k)$ be an instance of \problemRNSC{}, where $G$ is a tree.
  As a preprocessing step, we contract any edge of which both endpoints are a terminal.
  Since we are not allowed to delete terminals, this is safe.
  However, if such a contraction makes a terminal set become a singleton set, then we may return ``no''.
  By abuse of notation, we call the resulting tree $G$ and the resulting family of terminal sets $\mc{T}$.
  Root $G$ at a node $r$, and define $G_{T}$ and terminal roots $r(G_{T})$ as before.

  We follow a branching strategy.
  First, if $k=0$ and $\mc{T} \not=\emptyset$, then return ``no''; if $k\geq 0$ and $\mc{T} = \emptyset$, then return ``yes''.
  Otherwise, compute the terminal roots of each terminal set.
  Then, find a terminal set $T$ for which $r(G_{T})$ is deepest in $G$.
  If $r(G_{T})$ is not a terminal node, then following the crucial observation of Theorem~\ref{thm:node:easy}, we add $r(G_{T})$ to the cut, and recurse on the instance $(G', \mc{T}', k-1)$, where $G'$ is $G$ minus the subtree rooted at $r(G_{T})$, and $\mc{T}'$ is $\mc{T}$ minus every terminal set that is disconnected by $r(G_{T})$.
  If $r(G_{T})$ is a terminal node, then none of the children of $r(G_{T})$ are a terminal.
  Since $|T| \leq p$, $r(G_{T})$ has at most $p$ neighbors in $G_{T}$.
  Branch on all such neighbors $c$.
  Then, add $c$ to the cut, and recurse on the instance $(G', \mc{T}', k-1)$, where~$G'$ is $G$ minus the subtree rooted at $c$, and $\mc{T}'$ is $\mc{T}$ minus every terminal set that is disconnected by $c$.
  If any of the branches results in a ``yes'', then return ``yes''; otherwise, return ``no''.

  Correctness of the algorithm is immediate from the crucial observation of Theorem~\ref{thm:node:easy} and the above crucial observation for the edge version.
  Further, observe that the algorithm has at most $p^{k}$ branches.
  The theorem follows.
\end{proof}

Since \problemESC{} and \problemRNSC{} are both fixed-parameter tractable on trees for the parameter $k+p$, it is natural to ask whether these problems admit a polynomial kernel.
We answer this question negatively.

\begin{theorem}
\label{thm:edge:nokernel-kp}
\label{thm:rnode:nokernel-kp}
  \problemESC{} and \problemRNSC{} do not admit polynomial kernels on trees for parameter $k+p$, even if the terminal sets are pairwise disjoint, unless the polynomial hierarchy collapses to the third level.
\end{theorem}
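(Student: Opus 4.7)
The plan is to prove this theorem by a polynomial parameter transformation, parallel to the proof of Theorem~\ref{thm:edge:nokernel-kt}. As a source I will use $d$-\problemHS{}, the variant of \problemHS{} in which every input set has cardinality at most $d$, parameterized by $k+d$. Invoking the standard lifting framework of Bodlaender, Jansen, and Kratsch~\cite{BodlaenderEtAl2011}, it suffices to (i) establish a polynomial parameter transformation from $d$-\problemHS{} parameterized by $k+d$ to \problemESC{} (resp.\ \problemRNSC{}) on trees parameterized by $k+p$ with pairwise disjoint terminal sets, and (ii) argue that $d$-\problemHS{} parameterized by $k+d$ admits no polynomial kernel unless the polynomial hierarchy collapses to the third level.

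For~(i), I would observe that the reductions used in the proofs of Theorems~\ref{thm:edge:hard-k} and~\ref{thm:rnode:hard-k} give the desired transformation once the common root $r$ is dropped from every terminal set. Given $(U,\mc F,k)$ with $|F|\le d$ for every $F \in \mc F$, the same tree is built (paths $P_u$ attached to a central root $r$ via pendant edges $e_u$) but the terminal sets are set to $T_F = \{\sigma_u(F) : u \in F\}$ rather than $\{r\}\cup\{\sigma_u(F) : u \in F\}$. The modified sets are pairwise disjoint because each $\sigma_u(F)$ lies on the private path $P_u$ and $\sigma_u$ is a bijection, and they have size $|F|\le d$, so $p\le d$. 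The correctness argument of Theorem~\ref{thm:edge:hard-k} transports essentially verbatim: from a hitting set $H$ of size $k$ cutting $\{e_u : u\in H\}$ splits every $T_F$ (since $\sigma_u(F)$ is isolated from the rest of $T_F$ whenever $u\in F\cap H$), while conversely any edge Steiner multicut of size at most $k$ can be assumed to consist only of pendant edges $e_u$, because replacing a cut edge inside a path $P_u$ by the pendant edge $e_u$ only splits more terminal sets. The subdivision-node construction of Theorem~\ref{thm:rnode:hard-k} yields the same transformation for \problemRNSC{}.

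The main obstacle is~(ii), for which I would appeal to the Dell--van Melkebeek sparsification lower bound: for every fixed $d\ge 3$, $d$-\problemHS{} parameterized by $k$ admits no kernel of bitsize $O(k^{d-\varepsilon})$ (for any $\varepsilon>0$) unless $\mathsf{coNP}\subseteq\mathsf{NP}/\mathsf{poly}$. If $d$-\problemHS{} parameterized by $k+d$ had a polynomial kernel of size $(k+d)^c$ for some constant $c$, then for each fixed $d$ this would give a kernel of size $O(k^c)$ for $d$-\problemHS{} parameterized by $k$; choosing any $d>c+1$ then contradicts the lower bound. Hence $d$-\problemHS{} admits no polynomial kernel for parameter $k+d$, and combined with the polynomial parameter transformation of~(i), this proves Theorem~\ref{thm:edge:nokernel-kp}.
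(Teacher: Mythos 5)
Your proof is correct and follows essentially the same route as the paper: both transfer the no-kernel result from \problemHS{} parameterized by solution size plus maximum set size via the reductions of Theorems~\ref{thm:edge:hard-k} and~\ref{thm:rnode:hard-k}, invoking the polynomial parameter transformation framework of Bodlaender~\etal\cite{BodlaenderEtAl2011}. Two remarks. First, you do something the paper glosses over: the reduction as literally written in Theorem~\ref{thm:edge:hard-k} sets $T_F = \{r\} \cup \{\sigma_u(F) : u \in F\}$, so all terminal sets share the root $r$ and are \emph{not} pairwise disjoint; your modification of dropping $r$ (and discarding singleton sets in the \problemHS{} instance, which is WLOG) is exactly what is needed to make the ``pairwise disjoint'' clause of the theorem hold, and you correctly check that the equivalence argument survives this change, including the key claim that cutting $e_u$ dominates cutting any interior edge of $P_u$. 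Second, for the source lower bound the paper simply cites Dom~\etal\cite[Thm.~6]{DomEtAl2009}, whereas you rederive it from the Dell--van Melkebeek sparsification lower bound by fixing $d$ and comparing exponents; both are valid, and your route is somewhat more self-contained, at the cost of the extra $n$-versus-$k$ and bitsize-versus-number-of-sets bookkeeping, which you handle correctly by taking $d$ strictly larger than the kernel exponent $c$.
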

\begin{proof}
  It is known that \problemHS{} has no polynomial kernel when parameterized by the solution size and the maximum size of any set, unless the polynomial hierarchy collapses to the third level~\cite[Theorem~6]{DomEtAl2009}.
  Moreover, \problemESC{} and \problemRNSC{} on trees and \problemHS{} are $\mathsf{NP}$-complete.
  Since the reductions given in Theorem~\ref{thm:edge:hard-k} and Theorem~\ref{thm:rnode:hard-k} are polynomial parameter transformations from \problemHS{} for the parameter solution size plus maximum set size to \problemESC{} and \problemRNSC{}, respectively, for parameter $k+p$, the theorem follows from Bodlaender~\etal\cite{BodlaenderEtAl2011}.
\end{proof}

\subsubsection{Parameter \texorpdfstring{$t$}{t}}
We note that the result of the following theorem is actually dominated by Theorem~\ref{thm:t+tw}.
However, since the theorem is much simpler to prove for trees and does not rely on MSOL, we give its proof for completeness.

\begin{theorem}
\label{thm:edge:easy-t}
\label{thm:rnode:easy-t}
  \problemESC{} and \problemRNSC{} are fixed-parameter tractable on trees for parameter $t$.
\end{theorem}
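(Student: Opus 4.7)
The plan is to give a straightforward bottom-up dynamic programming algorithm on the rooted tree whose table size is governed entirely by $t$, so that the running time is of the form $f(t)\cdot\poly(n)$.

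First I would root $G$ at an arbitrary node $r$ and, for each node $v$, maintain a table $\DP[v][\sigma]$ indexed by a state vector $\sigma\in\{E,L,D,R\}^t$. The coordinate $\sigma_i$ summarises the status of terminal set $T_i$ relative to the subtree $G_v$ rooted at $v$ together with a tentative deletion set $S_v$ chosen inside $G_v$: $E$ means $T_i$ has no terminal in $G_v$; $L$ (\emph{live}) means every terminal of $T_i$ in $G_v$ lies in the component of $v$ in $G_v\setminus S_v$; $D$ (\emph{dead}) means the terminals of $T_i$ in $G_v$ lie in a single component of $G_v\setminus S_v$ that does not contain $v$; and $R$ (\emph{resolved}) means the terminals of $T_i$ in $G_v$ are spread over at least two components. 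The value $\DP[v][\sigma]$ is the minimum $|S_v|$ achieving signature $\sigma$, or $\infty$ if impossible.

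I would then build the tables bottom-up. Leaf cases depend only on the terminal sets containing $v$. At an internal node, I would incorporate the children $c_1,\ldots,c_d$ one at a time, combining the current partial signature at $v$ with the signature at $c_j$ under each of the two local choices: for \problemESC{}, cut or retain the edge $(v,c_j)$; for \problemRNSC{}, retain or delete $c_j$, the latter being allowed only if $c_j$ is non-terminal. The combination rule is coordinate-wise: for each $T_i$ I would case-analyse the pair $(\sigma'_i,\tau_i)$ together with the local choice to determine the resulting coordinate at $v$, using that cutting (resp.\ deleting) converts $c_j$'s live component into a dead one, while retaining merges it with $v$'s component; any mixing of an $L$ at $v$ with a $D$ at $c_j$ on the same $T_i$, or two disjoint $D$'s on the same $T_i$ coming from different children, forces the coordinate to $R$. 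The answer is ``yes'' iff $\DP[r][(R,\ldots,R)]\le k$, since each terminal set must end up partitioned across multiple components in the final solution.

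The state space per node has size $4^t$ and each pairwise combination of signatures takes $O(4^t)$ time, so the total running time is $f(t)\cdot\poly(n)$ for some function $f$; both variants of the problem are handled by the same framework, modulo the obvious adjustment of local choices and their costs. The main conceptual obstacle will be choosing the state alphabet correctly: $\{E,L,D,R\}$ is coarse enough to depend only on $t$, yet rich enough that the transition never needs finer information about how dead terminals are distributed among components---this works because two independent $D$ states from different children collapse to $R$ upon combination, and a single $D$ coordinate at $v$ correctly predicts whether $T_i$ will eventually be resolved further up in the tree (namely, as soon as another terminal of $T_i$, or a live component containing one, is encountered along the path to $r$, and if none is, then the LCA of $T_i$ would already lie in $G_v$, forcing the coordinate to have become $R$ or $D$ consistently by then).
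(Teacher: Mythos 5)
Your DP approach is a genuinely different route from the paper's. The paper's proof is a one-liner reduction: on a tree, a set $S$ of edges (resp.\ non-terminal nodes) separates a terminal set $T$ if and only if $S$ removes at least one edge (resp.\ non-terminal internal node) of the Steiner tree $G_T$ spanned by $T$. Hence the instance is literally a \problemHS{} instance with exactly $t$ sets, namely $\{E(G_T) : T \in \mc{T}\}$ (resp.\ $\{V(G_T) \setminus X : T \in \mc{T}\}$ with $X$ the terminals), and \problemHS{} is \FPT{} in the number of sets. That reduction is much shorter and cleaner than a $4^t$-state DP, and inherits whatever running time the Hitting Set solver gives. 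Your DP is more labour-intensive but is a perfectly reasonable alternative, and is closer in spirit to the treewidth-DP sketched after Theorem~\ref{thm:t+tw}.

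For \problemESC{} your state alphabet and the cut/retain combine rule are sound as far as I can see: the invariant that a $D$ coordinate at $v$ records \emph{exactly one} dead component is preserved, and any situation with two or more components among the terminals collapses to $R$, which is monotone under further processing, and requiring $(R,\ldots,R)$ at the root is the correct acceptance test. However, your \problemRNSC{} adaptation has a genuine gap. You propose to decide whether to delete $c_j$ at the moment you combine $c_j$ with $v$, but $\DP[c_j][\cdot]$ was computed assuming $c_j$ is present. Deleting $c_j$ can shatter the component of $c_j$ in $G_{c_j}\setminus S_{c_j}$ into several pieces: a coordinate that was $L$ at $c_j$ (all of $T_i$'s terminals in $c_j$'s component) might, after deleting $c_j$, leave those terminals scattered over multiple components, i.e.\ become $R$ — or it might stay in a single piece and become $D$. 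The four-letter state at $c_j$ does not determine which. The fix is standard: add a bit to the state at each node recording whether that node itself is deleted (disallowing $L$ coordinates when it is, and disallowing deletion for terminal nodes), pay the cost of deleting $v$ once at $v$, and at combine time treat the edge $(v,c_j)$ as effectively cut (for free) whenever either endpoint is marked deleted. With that amendment your DP goes through for both variants.
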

\begin{proof}
  Let $(G,\mc{T},k)$ be an instance of \problemESC{}, where $G$ is a tree.
  For each $T \in \mc{T}$, let $G_{T}$ again denote the subtree of $G$ induced by the terminals in $T$.
  Let $\mc{F} = \{ E(G_{T}) \mid T \in \mc{T}\}$.
  Observe that $(E(G), \mc{F}, k)$ is a ``yes''-instance of \problemHS{} if and only if $(G,\mc{T},k)$ is a ``yes''-instance of \problemESC{}.
  Recall that \problemHS{} is fixed-parameter tractable parameterized by the number of sets~\cite{FominEtAl2004}.
  Since $|\mc{F}| = t$, the theorem follows.

  This algorithm can be easily adapted for \problemRNSC{}: the universe of the \problemHS{} instance becomes $V(G)$, and the family of sets becomes $\{ V(G_{T}) - X \mid T \in \mc{T}\}$, where $X$ is the set of terminal nodes.
\end{proof}

To contrast Theorem~\ref{thm:edge:nokernel-kt}, we prove that both \problemESC{} and \problemRNSC{} admit a polynomial kernel on trees for parameter $t + p$.
\begin{theorem} \label{thm:edge:kernel-tp}
  \problemESC{} admits a polynomial kernel on trees for parameter $t+p$.
\end{theorem}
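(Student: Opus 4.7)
The plan is to exploit the tree structure by reducing to the topological skeleton formed by the Steiner trees $G_{T_i}$. First I would define $G^{*} = \bigcup_{i=1}^{t} G_{T_i}$, which is a subforest of the tree $G$, and let $V^{*}$ consist of all terminals together with all vertices of degree different from $2$ in $G^{*}$. A counting argument gives $|V^{*}| = O(tp)$: every leaf of $G^{*}$ is a leaf of some $G_{T_i}$ and hence a terminal (so there are at most $tp$ leaves), and in a forest the number of vertices of degree at least $3$ is bounded by the number of leaves. The terminals themselves contribute at most $tp$ further vertices to $V^{*}$.

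Next I would partition the edges of $G^{*}$ into \emph{arcs}, defined as maximal paths in $G^{*}$ between two vertices of $V^{*}$ whose internal vertices lie outside $V^{*}$. There are at most $|V^{*}|-1 = O(tp)$ such arcs. The key structural lemma to prove is: for every arc $P$ and every terminal set $T_i$, either $E(P) \subseteq E(G_{T_i})$ or $E(P) \cap E(G_{T_i}) = \emptyset$. I would prove this by contradiction: if $P' \subsetneq P$ is a maximal subpath of $P$ with $E(P') \subseteq E(G_{T_i})$, take an endpoint $w$ of $P'$ lying in the interior of $P$. By definition of arc, $w \notin V^{*}$, so $w$ has degree $2$ in $G^{*}$ and is not a terminal. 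Since $G_{T_i} \subseteq G^{*}$, the degree of $w$ in $G_{T_i}$ is at most $2$. Degree $2$ would force both neighbors of $w$ on $P$ to lie in $G_{T_i}$, contradicting maximality of $P'$; degree $1$ would make $w$ a leaf of $G_{T_i}$ and hence a terminal of $T_i$, contradicting $w \notin V^{*}$.

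The kernel is then obtained by discarding all edges of $G$ outside $G^{*}$ (these cannot participate in any useful cut, since every Steiner tree $G_{T_i}$ lies entirely inside $G^{*}$) and contracting each arc to a single edge. The resulting tree $G'$ has $O(tp)$ vertices and $O(tp)$ edges, and, together with the unchanged terminal sets $\mc{T}$ (naturally embedded into $V(G') \subseteq V^{*}$) and the unchanged budget $k$, forms a kernel of total size polynomial in $t+p$. Equivalence follows because, by the structural lemma, every optimal solution in the original instance can be assumed to pick at most one edge per arc — all edges of an arc cut exactly the same terminal sets — yielding a solution in $G'$ of size at most $k$; conversely, any solution in $G'$ lifts to a solution of the same size in $G$ by choosing an arbitrary representative edge from each contracted arc.

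The main technical hurdle is the structural lemma itself, and in particular the realisation that including \emph{all} terminals in $V^{*}$, not merely the topologically non-trivial vertices of $G^{*}$, is essential: without this, one cannot rule out the degenerate case where $G_{T_i}$ enters an arc and leaves it again at a degree-$2$ (in $G^{*}$) non-branching internal vertex of $P$, which would break the ``all-or-nothing'' behaviour that the contraction relies on.
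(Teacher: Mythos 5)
Your proposal is correct and uses essentially the same kernelization as the paper: discarding edges outside $G^{*}$ matches the paper's rule of repeatedly deleting non-terminal leaves, contracting arcs matches the paper's rule of suppressing non-terminal degree-two vertices, and the $O(tp)$ size bound is argued identically (leaves of the skeleton are terminals, branching vertices are bounded by leaves, terminals number at most $tp$). Your explicit arc-level ``all-or-nothing'' lemma is just a clean global restatement of the observation that makes the paper's local degree-two rule safe, namely that the two edges incident to a non-terminal degree-two vertex cut exactly the same terminal sets.
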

\begin{proof}
  Let $(G, \mc{T}, k)$ be an instance of \problemESC{}, where $G$ is a tree.
  We describe two reduction rules.

  The first reduction rule concerns leafs.
  Let $v$ be a leaf of $G$ and suppose that $v$ is not a terminal.
  Let $u$ be the neighbor of $v$ in $G$.
  Then no minimal solution contains the edge $(u,v)$.
  Hence, we may remove $v$ without changing the feasibility of the instance.
  Apply this reduction rule exhaustively, until each leaf of $G$ is a terminal.

  The second reduction rule concerns nodes of degree two.
  Consider any triple of nodes $u,v,w \in V(G)$ such that $N(v) = \{u,w\}$ and $v$ is not a terminal.
  Then no minimal solution selects both $(u,v)$ and $(v,w)$.
  Hence, we may remove $v$ and connect $u,w$ without changing the feasibility of the instance.
  Apply this reduction rule exhaustively, until each internal node of $G$ is either a terminal or of degree at least three.

  Observe that after applying the first reduction rule exhaustively, $G$ has $O(tp)$ leafs.
  After applying the second reduction rule, $G$ has $O(tp)$ internal nodes that are terminals.
  Further, any non-terminal internal node has degree at least three.
  Since the number of non-terminal internal nodes cannot exceed the number of leafs, $G$ has $O(tp)$ nodes in total.
\end{proof}

\begin{theorem}
\label{thm:rnode:kernel-tp}
  \problemRNSC{} has a polynomial kernel on trees for parameter $t+p$.
\end{theorem}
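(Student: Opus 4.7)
The plan is to mirror the kernel construction of Theorem~\ref{thm:edge:kernel-tp} by designing two reduction rules that shrink the tree to $O(tp)$ nodes. The main obstacle, relative to the edge case, is that I cannot contract non-terminal degree-two nodes freely: each maximal path of non-terminal degree-two nodes must retain at least one deletable non-terminal for possible use in the cut. I therefore rely on a weaker contraction rule together with a careful counting step.

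First I would apply \textbf{Rule~1}: delete any non-terminal leaf $v$. This is safe, since $v$ has degree one, so placing $v$ in a solution is useless (removing $v$ only isolates $v$ itself, which is a non-terminal and hence is in no $T_i$). Then I would apply \textbf{Rule~2}: if $v$ and $v'$ are adjacent non-terminal nodes both of degree two, with $u$ the other neighbor of $v$ and $w$ the other neighbor of $v'$, then delete $v'$ and add the edge $\{v,w\}$. Rule~2 is safe because in the tree the removal of $v$ and the removal of $v'$ disconnect exactly the same pairs of terminals, namely the terminals on the $u$-side of the subpath $u$–$v$–$v'$–$w$ from the terminals on the $w$-side; hence any solution can be rewritten to use at most one of $v,v'$, and since after the contraction $v$ remains a non-terminal of degree two, that cut is still available in the reduced graph. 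Both rules are applied exhaustively.

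Next I would bound the kernel size. After Rule~1, every leaf is a terminal, so the number of leaves is at most $\sum_{T\in\mc{T}}|T|\leq tp$. Internal terminal nodes number at most $tp$ trivially. Branching nodes (of degree at least three) number $O(tp)$ by the standard tree identity bounding branching nodes by the number of leaves. The delicate part is counting the non-terminal degree-two nodes. Consider the topological tree $G_{\textup{topo}}$ obtained by suppressing every degree-two vertex of $G$; its vertex set consists exactly of the leaves and branching nodes of $G$, so $|E(G_{\textup{topo}})| = O(tp)$. Each edge of $G_{\textup{topo}}$ corresponds to a path in $G$ whose internal vertices are all degree-two in $G$. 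By Rule~2, no two \emph{adjacent} internal vertices on such a path are both non-terminal, so on a path with $a$ terminal and $b$ non-terminal internal vertices one has $b\leq a+1$. Summing over all $O(tp)$ topological edges gives
\begin{equation*}
  \sum_{e} b_e\ \leq\ \sum_{e}(a_e+1)\ \leq\ (\text{internal terminal degree-two nodes})+|E(G_{\textup{topo}})|\ =\ O(tp),
\end{equation*}
so the kernel has $O(tp)$ nodes and total size polynomial in $t+p$.

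The hard part will be Rule~2 itself: one must resist the natural urge to contract every non-terminal degree-two node, which would be unsafe because it could eliminate the only remaining potential cut vertex on a segment, and instead settle for contracting only adjacent pairs. The safety argument then hinges on the fact that in a tree such an adjacent pair of non-terminals witnesses identical terminal-cuts, and the size bound hinges on the combinatorial observation above that makes non-terminal degree-two nodes sparse along every topological edge.
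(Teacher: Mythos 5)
Your proof is correct and follows the same basic plan as the paper: delete non-terminal leaves, suppress certain non-terminal degree-two nodes, and then argue an $O(tp)$ bound on the residual tree. The one real difference is in Rule~2. The paper's version is stronger: it suppresses a non-terminal degree-two node $v$ whenever \emph{some} neighbor $u$ of $v$ is a non-terminal, without requiring $u$ to also have degree two. This is safe because removing $u$ already leaves $v$ as a non-terminal leaf, so no minimal solution ever selects both $u$ and $v$. After exhaustive application every surviving non-terminal degree-two node therefore has \emph{both} neighbors terminal, and the $O(tp)$ bound is immediate by charging each such node to a terminal neighbor. Your more conservative rule (requiring the non-terminal neighbor to also have degree two) can leave non-terminal degree-two nodes adjacent to non-terminal branching nodes, which is exactly why you need the extra machinery of the topological tree and the per-segment bound $b_e \le a_e + 1$. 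That counting is correct and also delivers $O(tp)$ nodes, just with more bookkeeping; the paper simply chooses the more aggressive reduction rule so the size argument collapses. (As a minor point, your safety argument only spells out the forward direction; the backward direction — that a solution of the reduced instance is a solution of the original — is the usual suppression argument and should be stated, but it is routine.)
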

\begin{proof}
  Let $(G, \mc{T}, k)$ be an instance of \problemRNSC{}, where $G$ is a tree.
  We describe two reduction rules. 
  The first reduction rule is the same as in Theorem~\ref{thm:edge:kernel-tp}: remove any leaf $u$ that is not a terminal.
  This rule is safe, as no minimal solution contains $u$.
  Apply this reduction rule exhaustively, until each leaf of $G$ is a terminal.

  The second reduction rule is a modified version of the second reduction rule of Theorem~\ref{thm:edge:kernel-tp}.
  Consider any triple of distinct nodes $(u,v,w) \in V(G)^3$ such that $N(v) = \{u,w\}$ and $u$ and $v$ are not a terminal.
  Then no minimal solution selects both $u$ and $v$.
  Hence, we may remove $v$ and connect $u,w$ without changing the feasibility of the instance.
  Apply this reduction rule exhaustively, until each internal node of $G$ either is a terminal, or is of degree at least three, or is of degree two and has only terminal neighbors.

  Observe that after applying the first reduction rule exhaustively, $G$ has $O(tp)$ leafs.
  After applying the second reduction rule, $G$ has $O(tp)$ internal nodes that are terminals.
  Further, any non-terminal internal node of degree two has a terminal as child, and thus their number is bounded by $O(tp)$ as well.
  Finally, the number of non-terminal internal nodes of degree at least three cannot exceed the number of leafs.
  Hence, $G$ has $O(tp)$ nodes in total.
\end{proof}

\section{Discussion}
\label{sec:discussion}

We provided a comprehensive computational complexity analysis of the {\sc Steiner Multicut} problem with respect to fundamental parameters, culminating in either a fixed-parameter algorithm or a $\mathsf{W}[1]$-hardness result for \emph{every} combination of parameters.
This way, we generalize known tractability results for special cases of {\sc Steiner Multicut}, and chart the boundary of tractability for other cases. See Table~\ref{tab:undirected_parameterized_complexity_results} for a complete overview.

We leave several interesting questions for future research. A first possible extension of our work is to consider generalizations of \problemSC{}, for example to \textsc{Requirement Multicut}, where each terminal set $T_i$ must be cut into $r_{i}\geq 2$ components.
The approximability of this problem has been considered in several papers~\cite{NagarayanRavi2010,Moitra2009,GuptaEtAl2010}.
From a parameterized viewpoint, we note that all hardness results of this paper carry over to \textsc{Requirement Multicut}.
Yet, it is an intriguing open question which of our fixed-parameter algorithms generalize to {\sc Requirement Cut}.

A second possible extension is to consider directed graphs.
Already \problemMC{} is $\mathsf{W}[1]$-hard in this case~\cite{MarxRazgon2011} for parameter cut size $k$, even on acyclic directed graphs~\cite{KratschEtAl2012}.
On the other hand, \problemMC{} is fixed-parameter tractable for the parameter $k + t$ in directed acyclic graphs~\cite{KratschEtAl2012}.
It would be interesting whether this result generalizes to \problemSC{}.

\begin{table}[ht]
  \centering
  \begin{tabular}{ll|c}
    \toprule
      constants            & params & poly kernel?\\
      \midrule
      $\tw(G)=1$           & $t,p$   & $\checkmark$ (Thm.~\ref{thm:edge:kernel-tp},~\ref{thm:rnode:kernel-tp}) \\
      
      $\tw(G)=1$           & $k,p$   & $\mathsf{X}$ (Thm.~\ref{thm:edge:nokernel-kp}) \\
      
      $\tw(G)=1$           & $k,t$   & $\mathsf{X}$ (Thm.~\ref{thm:edge:nokernel-kt}) \\
      
      $\tw(G)=1,p \geq 3$  & $k$     & open \\
      
      $\tw(G)=1,p=2$       & $k$     & $\checkmark$~\cite{BousquetEtAl2009,ChenEtAl2012} \\
      
      $p=2, t$             & $k$     & $\checkmark$~\cite{KratschWahlstrom2012} \\
      
      $p=2$                & $k$     & $\mathsf{X}$~\cite{CyganEtAl2012} \\
      
                           & $k,t,p$ & open \\
      
      $p=2$                & $k,t$   & open \\
      
      $p=3, t=3$           & $k$     & open \\
      
      $t$                  & $k$     & open \\
    \bottomrule
  \end{tabular}
  \centering
  \caption{A summary of results about the kernelization of \problemESC{}; the top three lines also hold for \problemRNSC{}. We note that the kernelization complexity of \problemESC{} and \problemRNSC{} are completely characterized for trees, except that the case that $p \geq 3$ is a constant and $k$ is a parameter is still open. We have listed all minimal and maximal open questions (among all cases with $\tw(G) = 1$ or unbounded).}
\label{tab:kernelization}
\end{table}

A third possible extension is to investigate which problems admit polynomial kernels. 
While we have resolved many kernelization questions in this paper, several open problems remain (see Table~\ref{tab:kernelization}), in particular whether there is a polynomial kernel for the parameters $k+t+p$ on general graphs.
Answers in this research direction might shed new light on some long-standing open questions~\cite{CyganEtAl2013b} on the existence of polynomial kernels for \problemMC{} for parameter $k + t$ (currently, only a kernel of size $k^{O(\sqrt{t})}$ is known~\cite{KratschWahlstrom2012}, and there is no kernel of size polynomial in $k$ only~\cite{CyganEtAl2012}).

\medskip
\noindent
\textbf{Acknowledgements.}
We thank Magnus Wahlstr{\"o}m for an insight that helped in proving Lemma~\ref{lem:NAEhard}.

\bibliographystyle{abbrv}
\bibliography{steinercuts}

\appendix

\section{Basic Definitions}
\label{sec:basic}

In this section, we give formal definitions of several core notions that are used in this paper.
First and foremost, a \emph{parameter} of a problem instance $I$ of problem $\Pi$ is an integer $k(I)$.
Usually, this parameter is part of the problem definition, so we commonly write it as just $k$, and it is often equal to the size of the optimum or desired solution to the problem instance.
A problem is \emph{fixed-parameter tractable} if it admits a \emph{fixed-parameter algorithm}, which decides instances $I$ of $\Pi$ in time $f(k(I)) \cdot |I|^{O(1)}$-time for some computable function $f$.
The class of fixed-parameter tractable problems is denoted by $\mathsf{FPT}$.
As evidence that a problem $\Pi$ is unlikely to be fixed-parameter tractable, one can show that $\Pi$ is $\mathsf{W}[1]$- or $\mathsf{W}[2]$-hard. Informally, $\mathsf{W}[1]$ is the class of problems that are as hard as \textsc{Clique} parameterized by the maximum clique size, and $\mathsf{W}[2]$ is the class of problems that are as hard as \problemHS{} parameterized by the minimum hitting set size.
To prove hardness of $\Pi$, one can give an \emph{parameterized reduction} from a $\mathsf{W}[\cdot]$-hard problem~$\Pi'$ to $\Pi$ that maps every instance $I'$ of $\Pi'$ with parameter $k'$ to an instance $I$ of $\Pi$ with parameter $k \leq g(k')$ for some computable function $g$ such that $I'$ can be computed in time $f(k') \cdot |I|^{O(1)}$ for some computable function $f$, and $I$ is a ``yes''-instance if and only if $I'$ is a ``yes''-instance.
In case~$f$ and $g$ are polynomials, such a reduction is called a \emph{polynomial parameter transformation}.
A second way to prove hardness is by showing that the problem is $\mathsf{NP}$-complete even if the parameter $k$ is a constant; in that case, the parameterized problem is called \emph{para-$\mathsf{NP}$-complete}.

An equivalent notion to fixed-parameter tractability is that of a ``kernel''.
Given an instance~$I$ with parameter $k$ of a problem $\Pi$, a \emph{kernelization algorithm} computes in polynomial time an instance $I'$ (a \emph{kernel}) of $\Pi$ with size $|I'| \leq f(k)$, for some computable function $f$, such that $I$ is a ``yes''-instance if and only if $I'$ is.
Any decidable fixed-parameter tractable problem for a parameter $k$ with running time $f(k) \cdot n^{O(1)}$ admits a kernel with the same function $f$. Conversely, no $\mathsf{W}[\cdot]$-hard problem admits a kernel, unless $\mathsf{FPT} = W[\cdot]$.
For $\mathsf{NP}$-hard problems, a fixed-parameter algorithm can generally only yield kernels with an superpolynomial function~$f$, unless $\mathsf{P} = \mathsf{NP}$.
However, using different techniques it is sometimes possible to obtain a \emph{polynomial kernel}, for which the function $f$ is a polynomial.
For yet other problems, the existence of a polynomial kernel is unlikely~\cite{Drucker2012,BodlaenderEtAl2009,FortnowSanthanam2011}, for example because the polynomial hierarchy would collapse to the third level.
In this paper, we use that for a parameterized problem $\Pi'$, for which a polynomial kernel is known to be unlikely, and a polynomial parameter transformation from~$\Pi'$ to a parameterized problem $\Pi$, it is also unlikely that $\Pi$ has a polynomial kernel~\cite{BodlaenderEtAl2011}.

A good introduction into parameterized complexity can be found in the books by Downey and Fellows~\cite{DowneyFellows1999} and Flum and Grohe~\cite{FlumGrohe2006}.

The final notion that we define is that of ``treewidth''.
Given an undirected graph $G$, a \emph{tree decomposition} of $G$ is a tree $T$ and a collection of \emph{bags} $\mc{B} = \{B_{t} \subseteq V(G) \mid t \in V(T) \}$ such that
\begin{itemize}[noitemsep]
  \item for each node $v \in V(G)$, the set $\{t \mid v \in B_{t}\}$ is a nonempty set that induces a connected subtree of $T$;
  \item for each edge $\{u,v\} \in E(G)$, there is a bag $B_{t}$ such that $u,v \in B_{t}$.
\end{itemize}
The \emph{width} of a tree decomposition $(T,\mc{B})$ is $\max_{B \in \mc{B}} \{|B| - 1\}$, and the \emph{treewidth} $\tw(G)$ of $G$ is the minimum width of all tree decompositions of $G$.
We refer to Bodlaender~\cite{Bodlaender1998} for some basic properties of tree decompositions.

\knip{
\begin{table}[t]
  \centering
  \begin{tabular}{|ll|c|c|c|}
    \toprule
                           &            & {\sc Edge}                                 & {\sc Node}       & {\sc Restr.\ Node}\\
      constants            & params & {\sc Steiner MC}                     & {\sc Steiner MC}        & {\sc Steiner MC}\\
      
    \midrule
       
      $p=2$ & $k$ & $\mathsf{X}$~\cite{CyganEtAl2012} & $\mathsf{X}$~\cite{CyganEtAl2012} & $\mathsf{X}$~\cite{CyganEtAl2012} \\

    \bottomrule
  \end{tabular}
  
  \centering
  \caption{A summary of results about the kernelization of \problemESC{}; the top four lines also hold for \problemRNSC{} and the bottom three lines hold for both node deletion versions of \problemNSC{}. We note that the kernelization complexity of \problemESC{} and \problemRNSC{} are completely characterized for trees, except that the case that $p \geq 3$ is a constant and $k$ is a parameter is still open. We have not listed the several open questions for the parameter treewidth or treewidth greater than one.}
\label{tab:kernels}
\end{table}
}

\section{Tree Diagrams}
We provide tree diagrams of Table~\ref{tab:undirected_parameterized_complexity_results} for each of the three versions of \problemSC{}. The diagrams exhibit both the $\mathsf{FPT}$ versus $\mathsf{W[1]}$-hardness/para-$\mathsf{NP}$-completeness dichotomy and the polynomial-time versus $\mathsf{NP}$-completeness dichotomy. To see the latter, remove all branches marked only `param' from each of the diagrams.

In the diagrams, ``T.x'' refers to Theorem~x, ``S.x'' refers to Section~x, ``const'' means that the preceding variable is taken to be a fixed constant, ``param'' means that the preceding variable is taken to be a parameter, and ``unb'' (unbounded) means that we make no assumptions on the preceding variable.

\begin{figure}[hp]
\begin{center}
\ig[scale=1.25]{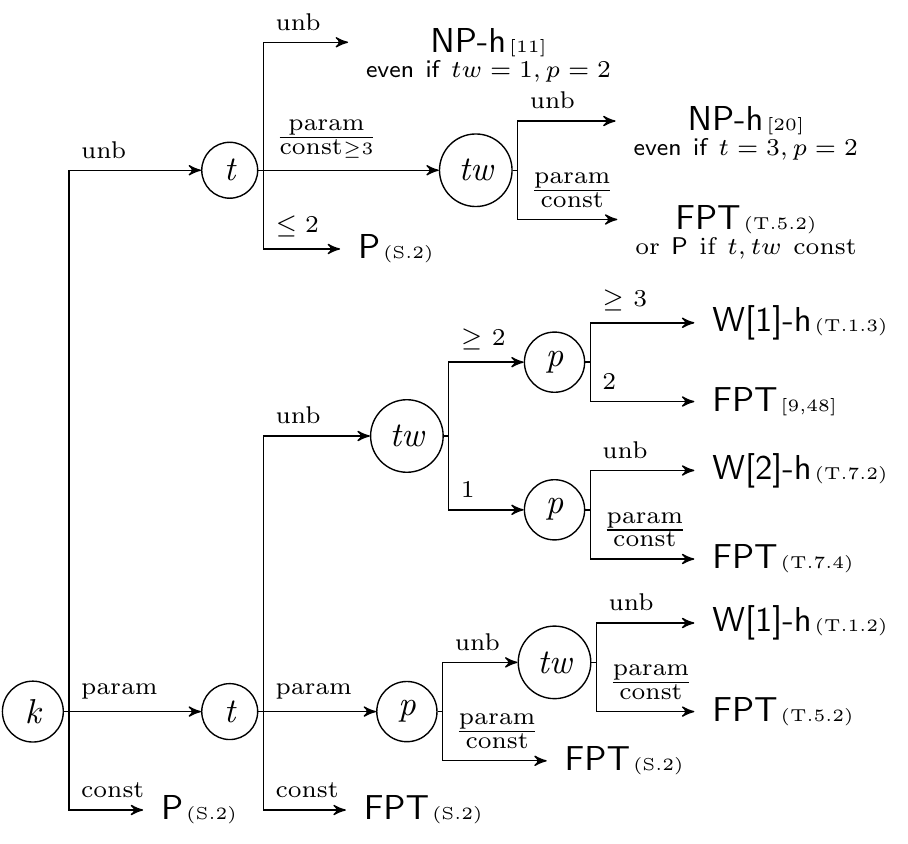}
\end{center}
\caption{Tree diagram for \problemRNSC{}.}
\end{figure}

\label{sec:trees}
\begin{figure}[hp]
\begin{center}
\ig[scale=1.25]{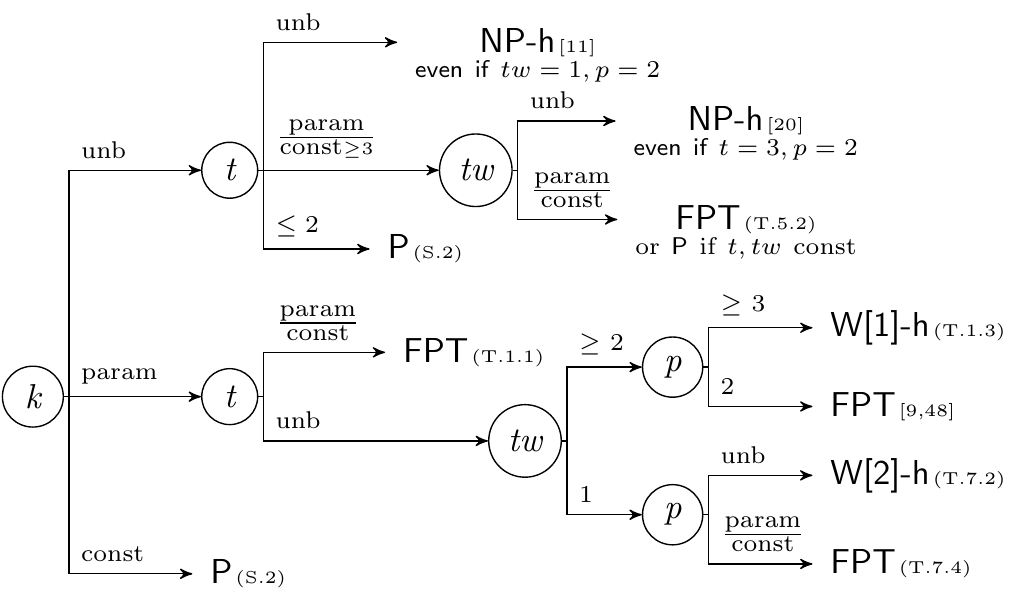}
\end{center}
\caption{Tree diagram for \problemESC{}.}
\end{figure}

\begin{figure}[hp]
\begin{center}
\ig[scale=1.25]{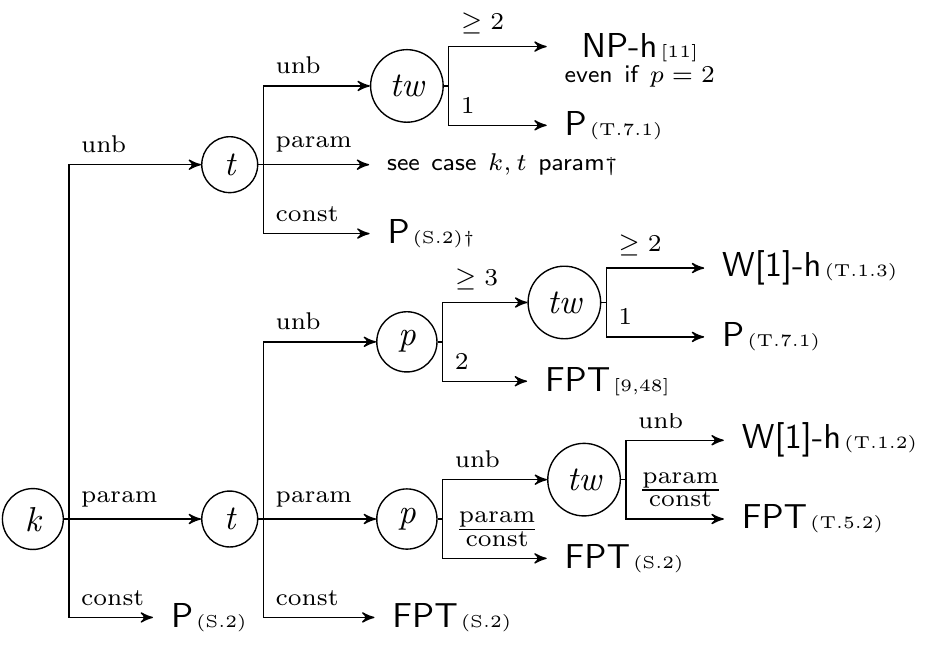}
\end{center}
\caption{Tree diagram for \problemNSC{}. For the cases marked with a $\dagger$, one has to realize that an instance with $k \geq t$ is always a ``yes''-instance, and thus we may always assume that $k < t$.}
\end{figure}

\end{document}